\newcommand{\NP}{{\sf NP}}
\newcommand{\diam}{{\rm diam}}
\newcommand{\rvc}{{\mathbf{rvc}}}
\newcommand{\srvc}{{\mathbf{srvc}}}
\newcommand{\krvc}{{\sc{RVC}}}
\newcommand{\ksrvc}{{\sc{SRVC}}}
\newcommand{\calE}{{\mathcal{E}}}
\newcommand{\eL}{\mathcal{L}}
\tikzstyle{path}=[rectangle, fill=gray, inner sep=2pt, minimum width=4pt]
\tikzstyle{edge} = [line width = 1pt]
\tikzstyle{line} = [line width = 1.3pt]
\tikzstyle{segment} = [] 
\tikzstyle{thickline} = [line width = 2pt]
\tikzstyle{possiblesegment} = [dashed] 
\tikzstyle{vertex}=[circle, draw, fill=black, inner sep=0pt, minimum width=4pt]
\newcommand{\ellips}[2]{
	\def\a{.5} \def\b{1}    
	\def\cx{#1} \def\cy{#2}  
	\def\xp{-2} \def\yp{0} 
	
	\coordinate (P) at ({\xp+\cx},{\yp+\cy});
	\coordinate (A) at ({\a+\cx},\cy);   
	\coordinate (B) at ({-\a+\cx},\cy);  
	\coordinate (C) at (\cx,{-\b+\cy});   

	\draw[name path=ellipse,fill=gray!30,opacity=0.5](\cx,\cy)circle[x radius=\a,y radius=\b];
	
	\path[name path=linePC] (P)--(C);
	
	\path [name intersections={of = ellipse and linePC}];
	\coordinate (E) at (intersection-1);
	\coordinate (F) at (intersection of A--C and B--E);
	\path let \p1=(F) in node (G) at (\x1,1.2*\b){};
	
	\path[name path=lineFG] (F)--(G);
	\path [name intersections={of = ellipse and lineFG}];
	\coordinate (X) at (intersection-1);
	\coordinate (Y) at (intersection-2);
	\draw (X)--(P);
	\draw (Y) -- (P);
}
\newcommand{\ellipshigh}[2]{
	\def\a{.5} \def\b{1}    
	\def\cx{#1} \def\cy{#2}  
	\def\xp{-2} \def\yp{0.8} 
	
	\coordinate (P) at ({\xp+\cx},{\yp+\cy});
	\coordinate (A) at ({\a+\cx},\cy);   
	\coordinate (B) at ({-\a+\cx},\cy);  
	\coordinate (C) at (\cx,{-\b+\cy});   

	\draw[name path=ellipse,fill=gray!30,opacity=0.5](\cx,\cy)circle[x radius=\a,y radius=\b];
	
	\path[name path=linePC] (P)--(C);
	
	\path [name intersections={of = ellipse and linePC}];
	\coordinate (E) at (intersection-1);
	\coordinate (F) at (intersection of A--C and B--E);
	\path let \p1=(F) in node (G) at (\x1,1.2*\b){};
	
	\path[name path=lineFG] (F)--(G);
	\path [name intersections={of = ellipse and lineFG}];
	\coordinate (X) at (intersection-1);
	\coordinate (Y) at (intersection-2);
	\draw (Y) -- (P);
	\draw (P) -- (\cx,1);
}
\setlist[itemize]{noitemsep}
\setlist[enumerate]{noitemsep}
\newtheorem{theorem}{Theorem}[section]
\newtheorem{theoremintro}{Theorem}
\newtheorem{lemma}[theorem]{Lemma}
\newtheorem{corollary}[theorem]{Corollary}
\newtheorem{claim}{Claim}
\newtheorem{conjecture}[theorem]{Conjecture}
\newtheorem{observation}{Observation}
\theoremstyle{definition}
\newenvironment{claimproof}{\begin{proof}\renewcommand{\qedsymbol}{\claimqed}}{\end{proof}\renewcommand{\qedsymbol}{\plainqed}}
\let\plainqed\qedsymbol
\title{\textbf{\Large Algorithms for the rainbow vertex coloring problem on graph classes}}
\author[1]{Paloma T.\ Lima}
\author[2]{Erik~Jan~van~Leeuwen}
\author[2,3]{Marieke van der Wegen}
\affil[1]{Department of Informatics, University of Bergen, Norway}
\affil[2]{Department of Information and Computing Sciences, Utrecht University, The Netherlands}
\affil[3]{Mathematical Institute, Utrecht University, The Netherlands}
\affil[ ]{\texttt{paloma.lima@uib.no, \{e.j.vanleeuwen,m.vanderwegen\}@uu.nl}}
\date{}
\begin{document}
	\setstretch{1.05}
	\maketitle
	
	\begin{abstract}
		Given a vertex-colored graph, we say a path is a rainbow vertex path if all its internal vertices have distinct colors. The graph is rainbow vertex-connected if there is a rainbow vertex path between every pair of its vertices. In the {\sc Rainbow Vertex Coloring (RVC)} problem we want to decide whether the vertices of a given graph can be colored with at most $k$ colors so that the graph becomes rainbow vertex-connected. This problem is known to be \NP-complete even in very restricted scenarios, and very few efficient algorithms are known for it. In this work, we give polynomial-time algorithms for RVC on permutation graphs, powers of trees and split strongly chordal graphs. The algorithm for the latter class also works for the strong variant of the problem, where the rainbow vertex paths between each vertex pair must be shortest paths.
We complement the polynomial-time solvability results for split strongly chordal graphs by showing that, for any fixed $p\geq 3$ both variants of the problem become \NP-complete when restricted to split $(S_3,\ldots,S_p)$-free graphs, where $S_q$ denotes the $q$-sun graph.
	\end{abstract}

	\section{Introduction}
Graph coloring is a classic problem within the field of structural and algorithmic graph theory that has been widely studied in many variants. One recent such variant was defined by Krivelevich and Yuster~\cite{KRIV10} and has received significant attention: the \emph{rainbow vertex coloring} problem. A vertex-colored graph is said to be \emph{rainbow vertex-connected} if between any pair of its vertices, there is a path whose internal vertices are colored with distinct colors. Such a path is called a \emph{rainbow path}. Note that this vertex coloring does not need to be a proper one; for instance, a complete graph is rainbow vertex-connected under the coloring that assigns the same color to every vertex. The {\sc Rainbow Vertex Coloring (RVC)} problem takes as input a graph $G$ and an integer $k$ and asks whether $G$ has a coloring with $k$ colors under which it is rainbow vertex-connected. The \emph{rainbow vertex connection number} of a graph $G$ is the smallest number of colors needed in one such coloring and is denoted $\rvc(G)$. More recently, Li~{et al.}~\cite{Li2014} defined a stronger variant of this problem by requiring that the rainbow paths connecting the pairs of vertices are also shortest paths between those pairs. In this case we say the graph is \emph{strong rainbow vertex-connected}. The analogous computational problem is called {\sc Strong Rainbow Vertex Coloring (SRVC)} and the corresponding parameter is denoted by $\srvc(G)$.

Both the \krvc~and the \ksrvc~problems are \NP-complete for every $k \geq 2$~\cite{CHEN20114531,Chen2013,EIBEN16-dam}, and remain \NP-complete even on bipartite graphs and split graphs~\cite{MFCS2018}.
Both problems are also \NP-hard to approximate within a factor of $n^{1/3-\epsilon}$ for every $\epsilon > 0$, even when restricted to bipartite graphs and split graphs~\cite{MFCS2018}. Contrasting these results, it was shown that \krvc~and \ksrvc~are linear-time solvable on bipartite permutation graphs and block graphs~\cite{MFCS2018}, and on planar graphs for every fixed $k$~\cite{LAURIPHD}. Finally, \krvc~is also known to be linear time solvable on interval graphs~\cite{MFCS2018}.

The above mentioned results on bipartite permutation graphs and interval graphs led Heggernes et al.~\cite{MFCS2018} to formulate the following conjecture concerning diametral path graphs. Recall that a graph $G$ is a \emph{diametral path graph} if every induced subgraph $H$ has a dominating path whose length equals the diameter of~$H$.

\begin{conjecture}[{Heggernes et al.~\cite[Conjecture~15]{MFCS2018}}] \label{conj:diametral}
Let $G$ be a diametral path graph. Then $\rvc(G) = \diam(G)-1$.
\end{conjecture}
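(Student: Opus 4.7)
For the lower bound, pick any diametral pair $u,v$ with $\dist(u,v) = \diam(G)$; every $u$–$v$ path has at least $\diam(G)-1$ internal vertices and, in a rainbow coloring, these must receive pairwise distinct colors, so $\rvc(G) \ge \diam(G)-1$. This part needs no assumption on $G$.

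For the upper bound the plan is to use the dominating diametral path guaranteed by the definition as a ``backbone'' and to route every pair of vertices through it. Let $P = v_0 v_1 \cdots v_d$ be a dominating path of $G$ of length $d = \diam(G)$; color $v_i$ with color $i$ for $i \in \{1,\ldots,d-1\}$, and for each remaining vertex $u$ (including $v_0$, $v_d$ and every off-path vertex) pick a path-neighbour $v_{m(u)}$ and color $u$ with color $m(u)$, choosing $m(u) \in \{1,\ldots,d-1\}$ whenever possible. A pair $(u,w)$ is then connected by the candidate path obtained by gluing the edge $u v_{m(u)}$, the subpath of $P$ from $v_{m(u)}$ to $v_{m(w)}$, and the edge $v_{m(w)} w$; when $m(u), m(w) \in \{1,\ldots,d-1\}$ the internal vertices of this walk receive the pairwise distinct colors between $m(u)$ and $m(w)$ from $\{1,\ldots,d-1\}$, and the subcases in which $u$ or $w$ lies on $P$ itself are analogous and easier.

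The first obstacle is that an off-path vertex $u$ could a priori satisfy $N(u) \cap V(P) \subseteq \{v_0, v_d\}$, so that no anchor $m(u) \in \{1,\ldots,d-1\}$ exists and the routing breaks. The key technical step would therefore be to upgrade $P$ to a \emph{shortest} dominating path between a diametral pair, i.e.\ so that $\dist(v_0, v_d) = d$. Once $P$ is shortest, any off-path $u$ whose only path-neighbour is $v_0$ would yield $\dist(u, v_d) \ge 1 + d > d$, contradicting $\diam(G) = d$; symmetrically for $v_d$. Hence every off-path vertex must have a neighbour among $v_1,\ldots,v_{d-1}$, and the coloring above is well defined and rainbow-connecting.

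Extracting such a dominating diametral \emph{shortest} path from the merely hereditary assumption on $G$ is where I expect the real difficulty to lie. I would start from an arbitrary dominating path of length $d$ given by the definition and then iteratively invoke the hereditary property on induced subgraphs of $G$ in which current shortcuts between the endpoints have been suppressed, arguing that the process terminates with a dominating path that is both of length $d$ and shortest between its endpoints; as templates I would look at the proofs for the already-settled subclasses (interval graphs, bipartite permutation graphs, block graphs~\cite{MFCS2018}), which in each case exhibit such a path by hand. A fallback, if this structural statement turns out to be false or hard, is induction on $|V(G)|$: remove a vertex whose deletion preserves the diameter and the diametral path property, apply the induction hypothesis to the smaller graph, and extend the coloring by anchoring the removed vertex on $P$.
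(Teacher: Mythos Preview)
The statement you are trying to prove is a \emph{conjecture}: the paper does not prove it in general, and it remains open. The paper only establishes the special case of permutation graphs (Theorem~\ref{thm:permutation}), so there is no ``paper's own proof'' of this statement to compare against.

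Your sketch has two genuine gaps. The first, which you yourself flag, is that the definition only hands you a dominating path of length $\diam(G)$, not a dominating \emph{shortest} path between a diametral pair; your suggested repairs (iterating the hereditary property on induced subgraphs, or induction on $|V(G)|$) are not carried out, and nothing in the paper suggests they go through in general. The second gap lies in the step you present as already secured: the implication ``if an off-path vertex $u$ satisfies $N(u)\cap V(P)=\{v_0\}$ then $\dist(u,v_d)\ge d+1$'' is false, because $u$ may reach $v_d$ in at most $d$ steps via \emph{other off-path vertices} that attach further along $P$. Concretely, take $P=v_0v_1v_2v_3$ together with off-path vertices $u,w$ and edges $uv_0$, $uw$, $wv_2$; then $P$ is a dominating shortest path of length $3=\diam(G)$, $N(u)\cap V(P)=\{v_0\}$, and yet $\dist(u,v_3)=3$. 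So even granting the structural lemma you are after, your backbone colouring need not be well defined. It is telling that the paper's proof for permutation graphs, in precisely the regime where such a dominating shortest diametral path $P$ does exist (the case $d=\diam(G)+1$ in Section~\ref{sec:permutation}), cannot make do with $P$ alone: it introduces a second canonical path $Q$ and a case analysis (Lemmas~\ref{lem:almost-all-paths-rainbow}--\ref{lem:last-rainbow-paths}) exactly to rescue the vertices in layer $L_2$ whose only anchor on $P$ is the endpoint $p_1$.
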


In this context, it is interesting to remark that both bipartite permutation graphs and interval graphs are diametral path graphs, and that Heggernes et al.~\cite{MFCS2018} showed that the conjecture is true for these graphs.

\paragraph{Our Results}
Our main contribution is to show that the above conjecture is true for permutation graphs.

\begin{theoremintro}[=Theorem \ref{thm:permutation}] \label{thm:permutationIntro}
If $G$ is a permutation graph on $n$ vertices, then $\rvc(G)=\diam(G)-1$ and the corresponding rainbow vertex coloring can be found in $O(n^2)$ time.
\end{theoremintro}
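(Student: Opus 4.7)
The lower bound $\rvc(G) \ge \diam(G)-1$ is immediate: a rainbow path between two vertices at distance $\diam(G)$ has at least $\diam(G)-1$ internal vertices, all of which must carry pairwise distinct colors. The content of the theorem is therefore the matching upper bound together with an $O(n^2)$-time algorithm that produces such a coloring. The plan is to find a \emph{dominating diametral shortest path}: a shortest path $P = v_0 v_1 \cdots v_d$ of length $d = \diam(G)$ whose closed neighborhood covers $V(G)$. This path will serve as the backbone of the coloring.

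Given $P$, I would color $v_i$ with color $i$ for each $1 \le i \le d-1$, and assign every remaining vertex $u$ (including the endpoints $v_0$ and $v_d$) the color $i$ of some neighbor $v_i$ on $P$, chosen by a consistent rule. To rainbow-connect an arbitrary pair $u, w$, I pick indices $i \le j$ with $u \in N[v_i]$ and $w \in N[v_j]$ and traverse $u, v_i, v_{i+1}, \ldots, v_j, w$, collapsing $u$ with $v_i$ if $u \in V(P)$ and similarly for $w$. The internal colors of this route form a subsequence of the distinct integers $i, i+1, \ldots, j$, so the path is rainbow. The delicate case is when $i = 0$ or $j = d$, since the color assigned to $v_0$ (resp.\ $v_d$) then appears internally as well. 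I plan to handle this by exploiting the permutation structure to show that any neighbor of $v_0$ is also adjacent to $v_1$ (and symmetrically for $v_d$), allowing the route to start one step deeper into $P$ and thereby avoid any color clash.

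The main obstacle is the existence and efficient construction of such a path $P$; once $P$ is in hand, the coloring argument above reduces to bookkeeping. To find $P$, I would fix an intersection representation of $G$ by segments between two parallel lines and run BFS from a vertex that is extremal in the induced linear order of segment endpoints. The last BFS layer should contain a vertex realizing $\diam(G)$, and using that permutation graphs are AT-free and thus admit dominating pairs in the sense of Corneil, Olariu, and Stewart, a carefully chosen shortest path between these two extremal vertices should be shown to dominate $G$; the extremality in the permutation order is also what should yield the additional $v_0$ and $v_d$ property needed in the previous paragraph. Both this construction and the assignment of colors run in $O(n+m) = O(n^2)$ time, matching the claimed bound.
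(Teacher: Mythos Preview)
Your overall plan---run BFS from an endpoint-extremal segment, use the resulting dominating shortest path as a backbone, and color by layers---is indeed how the paper begins. But the step you flag as ``bookkeeping'' is precisely where the real difficulty lies, and your proposed resolution of it is incorrect.

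Concretely, you claim that extremality of $v_0$ in the permutation order will force every neighbor of $v_0$ to also be adjacent to $v_1$. This is false. Take $p_1$ to be the segment with leftmost top endpoint and $p_2$ its neighbor with largest top endpoint (so $p_2=v_1$ in your notation). A segment $u$ with $t(p_1)<t(u)<t(p_2)$ and $b(u)<b(p_2)<b(p_1)$ is adjacent to $p_1$ but lies strictly left of $p_2$, hence is \emph{not} adjacent to $p_2$. The paper isolates exactly this situation as its Case~5 (``$u\in L_2$, $u\nsim p_2$, $v\in L_d$'') and spends several lemmas on it: it introduces a second greedy path $Q$ (built from the segment with leftmost bottom endpoint), proves structural facts relating $P$ and $Q$ at both ends, and finally shows that for the residual pairs the canonical shortest path $X_{u,v}$ or $Y_{u,v}$ is itself rainbow under the layer coloring. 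None of this is routine, and none of it is supplied by your endpoint claim.

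A second, smaller gap: the eccentricity of the extremal vertex $p_1$ need not be $\diam(G)$. The paper shows that the greedy path $P$ from $p_1$ has $d$ vertices with $d\in\{\diam(G),\diam(G)+1\}$, so BFS from $p_1$ may only reach depth $\diam(G)-1$. When $d=\diam(G)$ the layer coloring works directly, but when $d=d'=\diam(G)+1$ one must recycle a color across two layers, and that recycling is what creates the hard case above. Your sketch does not distinguish these two regimes.
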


This generalizes the earlier result on bipartite permutation graphs~\cite{MFCS2018}. The proof of our result follows from a thorough investigation of shortest paths in permutation graphs. We show that there are two special shortest paths that ensure that a rainbow vertex coloring with $\diam(G)-1$ colors can be found.

We also further the investigation of the rainbow vertex connection number of chordal graphs. As the problem is \NP-hard and hard to approximate on split graphs~\cite{MFCS2018}, the hope for polynomial-time solvability rests either within subclasses of split graphs or other chordal graphs that are not inclusion-wise related to split graphs (such as the previously studied interval graphs and block graphs~\cite{MFCS2018}). We make progress in both directions.	

First, we show that the problem is polynomial-time solvable on strongly chordal split graphs.

\begin{theoremintro}[=Theorem \ref{thm:split-strongly-chordal}] \label{thm:split-strongly-chordal-Intro}
If $G$ is a split strongly chordal graph with $\ell$ cut vertices, then $\rvc(G)=\srvc(G)=\max\{\diam(G)-1,\ell\}$.
\end{theoremintro}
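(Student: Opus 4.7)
\emph{Lower bound and reduction to coloring $K$.} The bound $\srvc(G) \geq \rvc(G) \geq \diam(G)-1$ is standard. For $\rvc(G) \geq \ell$, let $v, v' \in K$ be two distinct cut vertices of $G$ with respective private neighbors $u, u' \in I$, so that $N(u) = \{v\}$ and $N(u') = \{v'\}$; every $u$--$u'$ path passes through both $v$ and $v'$ as internal vertices, forcing them to receive distinct colors. Since $G$ is a connected split graph, $\diam(G) \leq 3$, and every shortest path between two non-adjacent vertices has all its internal vertices in $K$: a subpath $w\,u\,w'$ with $w,w' \in K$ and $u \in I$ has adjacent endpoints in the clique and so cannot be shortest. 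Consequently only the coloring of $K$ is relevant for $\srvc(G)$, and it suffices to exhibit a coloring of $K$ witnessing the upper bound (vertices in $I$ may be colored arbitrarily).

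\emph{Construction.} If $\diam(G) \leq 2$, one color suffices, and one checks that $\ell \leq 1$: two cut vertices would put their private neighbors at distance $3$. So assume $\diam(G) = 3$, write $k = \max\{2, \ell\}$, let $D = \{v_1, \dots, v_\ell\}$ denote the cut vertices of $G$, and set $K' = K \setminus D$. I assign color $i$ to $v_i$ for each $i \in \{1, \dots, \ell\}$, and color the vertices of $K'$ using only the two colors $\{1,2\} \subseteq \{1, \dots, k\}$ as provided by the key lemma below.

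\emph{Key lemma.} \emph{There exists a $2$-coloring of $K'$ with colors $\{1,2\}$ such that for every $u \in I$ with $N(u) \subseteq K'$, the set $N(u)$ contains vertices of both colors.} This is where strong chordality is cashed in; the corresponding problem is \NP-hard on general split graphs, so a structural hypothesis is unavoidable. Observe first that every such $u$ has $|N(u)| \geq 2$, since $|N(u)| = 1$ would make its unique neighbor a cut vertex, contradicting $N(u) \subseteq K'$. To prove the lemma I would invoke the classical characterization of strongly chordal graphs via totally balanced neighborhood matrices: the bi-adjacency matrix of a strongly chordal split graph between $K$ and $I$ is totally balanced, and every totally balanced $0/1$-matrix admits a $2$-coloring of the rows under which every column with at least two $1$'s is bichromatic. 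A more hands-on alternative is to use a simple elimination ordering of $G$ to linearly order $K'$ so that each relevant $N(u) \cap K'$ forms a contiguous block, after which an alternating $2$-coloring of $K'$ along this ordering works.

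\emph{Verification and main obstacle.} Shortest paths of length at most $2$ have at most one internal vertex and are automatically rainbow, so it remains to handle a pair $u, u' \in I$ at distance $3$, where $N(u) \cap N(u') = \emptyset$. If both $u$ and $u'$ have a cut-vertex neighbor, I pick $v_i \in N(u) \cap D$ and $v_j \in N(u') \cap D$; disjointness forces $i \neq j$, so the shortest path $u\,v_i\,v_j\,u'$ has rainbow internal colors $i, j$. Otherwise, without loss of generality $N(u) \subseteq K'$; by the key lemma $N(u)$ contains vertices of both colors $1$ and $2$, so for any chosen $v' \in N(u')$ of color $c$ one can select $v \in N(u)$ of color $\neq c$, giving the rainbow shortest path $u\,v\,v'\,u'$. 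Everything else is routine case analysis, so the main obstacle in this plan is proving the key lemma---specifically, producing the self-contained elimination-ordering argument if one wishes to avoid an external appeal to totally balanced matrices.
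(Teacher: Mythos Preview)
Your overall strategy is correct and, in fact, tidier than the paper's. The paper builds a spanning tree $\mathcal{T}$ of $G[K]$ with the property that every $N(x)$ for $x$ in the independent set induces a subtree of $\mathcal{T}$ (this is its key structural lemma, obtained from the dually-chordal spanning-tree characterization). It then properly $2$-colors $\mathcal{T}$ and, when cut vertices are present, re-assigns colors $3,\dots,\ell$ to all but two of them; a delicate sub-case arises when the proper $2$-coloring happens to give all cut vertices the same color, forcing a local modification of the tree coloring and a separate verification for one possibly monochromatic neighborhood. Your scheme---assign colors $1,\dots,\ell$ to the cut vertices up front and then $2$-color $K'$ under the weaker constraint that only neighborhoods lying \emph{entirely inside} $K'$ need be bichromatic---sidesteps that sub-case, and your verification is correspondingly shorter.

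Your key lemma is true, and the totally-balanced route works: the $K$--$I$ biadjacency matrix of a split strongly chordal graph is totally balanced (an induced $C_{2q}$ in the bipartite graph between $K$ and $I$ is exactly an induced $q$-sun in $G$, so sun-freeness is equivalent to chordal bipartiteness here), totally balanced hypergraphs are balanced and hence $2$-colorable with no monochromatic edge of size at least $2$, and restricting to rows in $K'$ and columns $u$ with $N(u)\subseteq K'$ preserves total balancedness. Alternatively, the paper's tree $\mathcal{T}$ yields a direct proof of your lemma: properly $2$-color each component of the forest $\mathcal{T}[K']$; any $N(u)\subseteq K'$ with $|N(u)|\geq 2$ is a subtree of a single component and hence contains a tree edge, so it is bichromatic.

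However, your second proposed proof of the key lemma---linearly order $K'$ so that each relevant $N(u)$ is a contiguous block---is false in general. Take $K=\{a,b,c,d\}$ and $I=\{x,y,z\}$ with $N(x)=\{a,b\}$, $N(y)=\{a,c\}$, $N(z)=\{a,d\}$. This split graph is strongly chordal (all three degree-$2$ vertices share the common neighbor $a$, which rules out any sun) and has no cut vertices, so $K'=K$; but in any linear order of four vertices, $a$ is adjacent to at most two of $b,c,d$, so at most two of the three neighborhoods can be intervals. Drop this alternative and use either the totally-balanced argument or the tree argument above.
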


In order to obtain the above result, we exploit an interesting structural property of split strongly chordal graphs. Namely, if $G$ is a split strongly chordal graph with clique $K$ and independent set $S$, there exists a spanning tree of $G[K]$ such that the neighborhood of each vertex of $S$ induces a subtree of this tree.

Second, we show that \krvc~remains polynomial-time solvable on powers of trees. This proof is based on a case analysis, depending on whether the diameter of the tree is a multiple of the power and how many long branches the tree has. We show that in some cases $\diam(G)-1$ many colors are enough to rainbow vertex color these graphs, but surprisingly this is not always the case. There are graphs in this graph class that actually require $\diam(G)$ colors in order to be rainbow vertex colored. We provide a complete characterization of such graphs, as well as a polynomial time algorithm to optimally rainbow vertex color any power of tree. 

\begin{theoremintro}[=Theorem \ref{thm:powers-of-trees}] \label{thm:powers-of-trees-Intro}
If $G$ is a power of a tree, then $\rvc(G)\in\{\diam(G)-1,\diam(G)\}$, and the corresponding optimal rainbow vertex coloring can be found in time that is linear in the size of $G$. 
\end{theoremintro}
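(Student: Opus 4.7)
The plan is to establish the bounds $\diam(G)-1 \le \rvc(G) \le \diam(G)$ and then characterize which of the two values is attained, in each case exhibiting an explicit coloring that can be computed in linear time. The lower bound is immediate: for any non-complete graph, a shortest path between two vertices at distance $\diam(G)$ has $\diam(G)-1$ internal vertices, and these must all receive distinct colors in any rainbow vertex coloring.

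For the upper bound and the characterization, I would fix a diametral path $P = v_0, v_1, \ldots, v_D$ of $T$ with $D = \diam(T)$, so that $\diam(G) = \lceil D/p \rceil$. Every vertex of $T$ has a unique projection onto $P$ (the vertex of $P$ closest to it in $T$), and any shortest path in $G$ between two vertices $u,v$ corresponds, via the tree structure, to a sequence of at most $\diam(G)$ jumps of length at most $p$ in $T$. The colorings I would construct all assign colors based on the index of the projection onto $P$, modulo either $\diam(G)-1$ or $\diam(G)$, with an adjustment depending on the offset within the projecting subtree.

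The analysis then splits into cases driven by the two features mentioned in the statement: whether $D$ is a multiple of $p$ and how many \emph{long branches} $T$ has, where a long branch at $v_i$ is a subtree rooted at $v_i$ whose depth is $\min(i, D-i)$, so that it produces an alternative diametral endpoint of $T$. If $D \not\equiv 0 \pmod{p}$ then there is strict slack in the diameter, and a cyclic $(\diam(G)-1)$-coloring based on projection index, adjusted inside each subtree using the offset modulo $p$, yields rainbow vertex paths for every pair of vertices. If $D \equiv 0 \pmod{p}$ but $T$ has few long branches, a similar cyclic coloring still works, with minor modifications near those branches. In the remaining case — $D$ a multiple of $p$ and $T$ carrying many long branches — the number of required colors jumps to $\diam(G)$, and a cyclic coloring modulo $\diam(G)$ suffices.

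The main obstacle will be the \emph{lower} bound in this last case, showing that $\diam(G)-1$ colors genuinely do not suffice when there are many long branches. I expect this to follow from a conflict argument: each long branch contributes an alternative diametral endpoint, and any rainbow vertex path between leaves of two distinct long branches must traverse almost all of $P$, which pins down the colors along both branches and on $P$. A pigeonhole argument on the number of long branches then produces a contradiction, ruling out any rainbow vertex coloring with $\diam(G)-1$ colors. The algorithmic part is comparatively routine: compute $\diam(T)$, a diametral path, and the long branches by two rounds of BFS on $T$, decide which case applies, and output the corresponding coloring — all within time linear in $|V(G)|=|V(T)|$.
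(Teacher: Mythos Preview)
Your plan is broadly sound and the case analysis you describe (whether $D\equiv 0\pmod p$, how many maximal branches there are) matches the paper's. The paper, however, organizes everything around the \emph{center} of $T$ rather than a fixed diametral path: it defines layers by distance to the center and colors only those layers whose index is $0$ or $-1\pmod k$, giving each such layer a unique color and dumping everything else into one leftover color. Rainbow paths are then obtained by going from $u$ up to the lowest common ancestor via the $0\pmod k$ layers and back down to $v$ via the $-1\pmod k$ layers; these are typically not shortest paths. This center-based scheme is symmetric and avoids the bookkeeping your projection-plus-offset scheme would need for subtrees hanging off the diametral path (where all vertices share a projection index). Your approach can be made to work, but once you write down the ``adjustment inside each subtree'' you will essentially be coloring by distance to the center anyway, at which point the paper's layering is the cleaner formulation.

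On the lower bound: the paper's threshold is sharp and small --- exactly \emph{three} branches of maximum length from a single-vertex center, together with $D\equiv 0\pmod k$, already force $\rvc(T^k)=\diam(T^k)$. The argument is not pigeonhole but a direct three-way conflict: the unique shortest paths $v_1\text{--}v_2$, $v_1\text{--}v_3$, $v_2\text{--}v_3$ between the three far leaves each have $\diam(T^k)-1$ internal vertices, so each uses every color once; the first two paths coincide on the $v_1$-side, forcing the $v_2$-side and $v_3$-side to use the same color multiset, and then the third path, which is exactly the concatenation of those two sides, repeats a color. Your ``pigeonhole on many long branches'' would eventually reduce to this, but you should state the exact threshold and the three-path conflict explicitly, since ``many'' is doing no work.
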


As far as we are aware, this result provides the first graph class in which the rainbow vertex connection number is computable in polynomial time and does not always equal one of the two trivial lower bounds on the number of colors (\emph{e.g.}, $\diam(G)-1$ and the number of cut vertices).

	\section{Preliminaries}
	
	Whenever we write graph, we will mean a finite undirected simple graph. We assume throughout that all graphs are connected and have at least four vertices.
		
	Let $G= (V,E)$ be a graph. For two vertices $u,v \in V$, we use $u\sim v$ to denote that $u$ and $v$ are adjacent. For a vertex $v \in V$, we write $d_G(v)$ for its degree. For a subgraph $H$ of $G$, we write $V_H$ for the set of vertices of $H$. Specifically, for a path $P$ in $G$, we write $V_P$ for the vertices of $P$. If $X \subseteq V$, then by $G[X]$ we denote the subgraph of $G$ induced by $X$, that is, $G[X] = (X, E \cap (X \times X))$. We use $N(v) = \{ u \in V \mid u \sim v\}$ and $N[v] = N(v) \cup \{v\}$.
	
	The length of a path $P$ equals the number of edges of $P$. The distance $d_G(u,v)$ is the length of a shortest $u,v$-path in $G$. If the graph $G$ is clear from the context, we simply write $d(u,v)$. The \emph{diameter} $\diam(G)$ of $G$ is the length of the longest shortest path between two vertices in $G$, that is, $\diam(G) = \max\{d_G(u,v) \mid u,v \in V \}$. A \emph{center} of a graph $G$ is a vertex $c$ such that $\max\{d_G(c,v) \mid v \in V\}$ is maximum among all vertices of $G$. Note that a graph can have multiple centers and that a tree can have at most two.

	A graph $G$ is a \emph{permutation graph} if it is an intersection graph of line segments between two parallel lines (see Figure \ref{fig:permutation-graph}). The set of line segments that induce the permutation graph is called an \emph{intersection model}. Alternatively, if $G$ has $n$ vertices, then there is a permutation $\sigma$ of $\{1,\ldots,n\}$ such that vertex $i$ and vertex $j$ with $i < j$ are adjacent in $G$ if and only if $j$ comes before $i$ in $\sigma$.
	
	A graph $G$ is a \emph{chordal graph} if every cycle $C = \{c_1,\ldots,c_\ell\}$ on $\ell \geq 4$ vertices has a chord, meaning an edge between two non-consecutive vertices of the cycle.
	
	A graph $G$ is a \emph{split graph} if $V_G$ can be split into two sets, $K$ and $S$, such that $K$ induces a clique in $G$ and $S$ induces an independent set in $G$.
	
	For any $k \geq 3$, we denote by $S_k$ the $k$-sun on $2k$ vertices, that is, a graph with a clique $c_1,\ldots,c_k$ on $k$ vertices and an independent set $v_1,\ldots,v_k$ of $k$ vertices such that $v_i$ is adjacent to $c_i$ and $c_{i+1}$ for every $1 \leq i < k$ and $v_k$ is adjacent to $c_k$ and $c_1$. 
	A graph $G$ is a \emph{strongly chordal graph} if it is chordal and it has no induced subgraph isomorphic to a $k$-sun for any $k \geq 3$.

	The \emph{$k$-th power} of a graph $G$ for $k \geq 1$, denoted by $G^k$, is the graph on the same vertex set of $G$ where $u \sim v$ in $G^k$ if and only if there is a path of length at most $k$ from $u$ to $v$ in $G$. In particular, $G^1 = G$. If $G$ is a tree, then $G^k$ is a chordal graph for any $k \geq 1$.

	Finally, we observe the following.
	
	\begin{observation} \label{obs:basic-diam-color}
	If $\diam(G) \leq 2$, then $\srvc(G) = \rvc(G)=1$.
	\end{observation}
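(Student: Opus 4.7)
The plan is a direct argument from the definitions: since every pair of vertices is at distance at most $2$, the shortest paths we need to exhibit have length $1$ or $2$, and such paths contain at most one internal vertex, which makes the rainbow condition trivially satisfiable. I would first fix any coloring that uses exactly one color (say, color every vertex with color $1$). Then for an arbitrary pair $u,v$ of distinct vertices I would split into two cases: if $u \sim v$, then the edge $uv$ is a $u,v$-path of length $1$ whose internal vertex set is empty, so it is vacuously a rainbow vertex path and clearly a shortest path; if $u \not\sim v$, then since $\diam(G) \leq 2$ there exists a vertex $w$ with $u \sim w \sim v$, and the path $u,w,v$ is a shortest $u,v$-path whose only internal vertex is $w$, so the sequence of colors on its internal vertices is a single color and is therefore a rainbow sequence. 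In both cases the exhibited path is simultaneously rainbow and shortest, which shows $\srvc(G) \leq 1$ and hence $\rvc(G) \leq \srvc(G) \leq 1$.

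For the matching lower bound I would simply note that any vertex coloring uses at least one color, so $\rvc(G) \geq 1$ and also $\srvc(G) \geq \rvc(G) \geq 1$, and combining the two inequalities gives $\rvc(G) = \srvc(G) = 1$. There is essentially no obstacle in this proof; the only mild subtlety is remembering that the rainbow vertex condition is a condition on \emph{internal} vertices only, so that paths of length at most $2$ impose no constraint on the coloring. Once that is observed, the whole argument is a two-case inspection.
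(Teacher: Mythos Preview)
Your proof is correct and follows essentially the same approach as the paper: color all vertices with a single color and observe that any two vertices are joined by a shortest path with at most one internal vertex, so the rainbow condition is trivially satisfied. Your write-up is simply more detailed, making the two cases and the trivial lower bound explicit where the paper leaves them implicit.
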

	\begin{proof}
	Color all vertices of $G$ by color~$1$. It suffices to note that between any two vertices, there is a shortest path with at most one internal vertex.
	\end{proof}

	\section{Permutation graphs} \label{sec:permutation}
	In this section, we consider rainbow coloring on permutation graphs. Let $G$ be a permutation graph. 
 	Let $\eL_1$ and $\eL_2$ be two parallel lines in the plane and for each $v\in V_G$, let $s_v$ be the segment associated to $v$ in the intersection model. We denote by $t(v)$ the extreme of $s_v$ in $\eL_1$, that is $t(v) = s_v \cap \eL_1$, and we refer to $t(v)$ as the top end point of $s_v$. By $b(v)$ we denote the extreme of $s_v$ in $\eL_2$, the bottom end point of $s_v$. Throughout, we assume that an intersection model is given; otherwise, one can be computed in linear time~\cite{McConnellS1999}.
	
	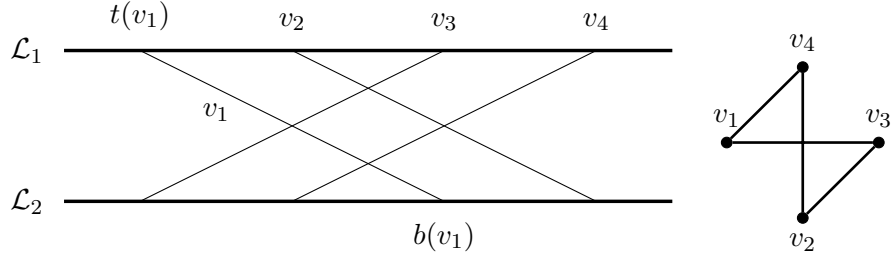
\begin{figure}
	\centering
	\begin{tikzpicture}
	\draw[line] (0,0) -- (8,0);
	\draw[line] (0,2) -- (8,2);
	
	\draw[segment] (1,2) -- (5,0);
	\draw[segment] (3,2) -- (7,0);
	\draw[segment] (5,2) -- (1,0);
	\draw[segment] (7,2) -- (3,0);
	
	\node[label = above:$v_1$] (v1) at (2,.8) {};
	\node[label = above:$t(v_1)$] (v1) at (1,2) {};
	\node[label = below:$b(v_1)$] (v1) at (5,0) {};
	\node[label = $v_2$] (k) at (3,2) {};
	\node[label = $v_3$] (k1) at (5,2) {};
	\node[label = $v_4$] (k1) at (7,2) {};
	\node[label = left:$\eL_1$] (k1) at (0,2) {};
	\node[label = left:$\eL_2$] (k1) at (0,0) {};
	\end{tikzpicture}\quad
	\begin{tikzpicture}
		\node[vertex, label = below:$v_2$] (a) at (0,0) {};
		\node[vertex, label = $v_3$] (b) at (1,1) {};
		\node[vertex, label = $v_4$] (c) at (0,2) {};
		\node[vertex, label = $v_1$] (d) at (-1,1) {};
		\draw[edge] (a) -- (b) -- (d) -- (c) --(a);
	\end{tikzpicture}
	\caption{An intersection model and the corresponding permutation graph.} \label{fig:permutation-graph}
\end{figure}
	
	Whenever we write ``$u$ intersects $v$'' for two vertices $u$ and $v$, we mean $s_u$ intersects $s_v$. 
	For two vertices $u$ and $v$, with $u\neq v$, there are several options for $u, v$ in the intersection model. 
	If \begin{align*}
	&t(u) < t(v) \text{ and } b(u) > b(v), \text{ then $u \sim v$,} \\
	&t(u) > t(v) \text{ and } b(u) < b(v), \text{ then $u\sim v$,}\\
	&t(u) < t(v) \text{ and } b(u) < b(v), \text{ then we say `$u$ is left of $v$' and write $u \prec v$,}\\
	&t(u) > t(v) \text{ and } b(u) > b(v), \text{ then we say `$u$ is right of $v$' and write $u \succ v$.}
	\end{align*} 
	We also use the notation $u \preceq v$ if $t(u) \leq t(v)$  and $b(u) \leq b(v)$. Notice that `$\prec$' is a partial ordering on the vertices of the graph, in particular, $u \nsucc v$ does not imply that $u \preceq v$. 
	
	For each pair $u,v \in V(G)$, Mondal et al.~\cite{MondalPP2003} define two $u$-$v$ paths, one of which is shortest.
	Define a path $X_{u,v}$ as follows. If $u \sim v$, $X_{u,v}$ will be $u, v$. Otherwise, assume without loss of generality that $u \prec v$. Start with $x_1 = u$. Of all vertices $x$ that intersect $u$ with $t(x) > t(u)$, let $x_2$ be the one with largest $t(x_2)$. If there is no vertex $x$ that intersects $u$ with $t(x) > t(u)$, we say that the path $X_{u,v}$ does not exist. Otherwise, define $x_i$, with $i\geq 3$, as follows. If $x_{i-1}$ is incident to $v$, set $x_i = v$ and end the path. Otherwise, if $i$ is even (resp.\ odd), let $x_i$ be the vertex that intersects $x_{i-1}$ where $t(x_i)$ (resp.\ $b(x_i)$) is largest. 
	Notice that it cannot be that $x_{i-2}=x_i$, or $G$ would not be connected. 
	
	Analogously, we define the path $Y_{u,v}$. This path starts with $y_1 = u$. Let $y_2$ be the vertex that intersects $u$ with largest $b(y_2)$, if $b(y_2) > b(u)$ (otherwise the path $Y_{u,v}$ does not exist). Now the next vertex $y_i$ is the vertex that intersects $y_{i-1}$ with largest $b(y_i)$ (resp.\ $t(y_i)$) if $i$ is even (resp.\ odd). 
	Notice that it cannot be that $y_{i-2}=y_i$, or $G$ would not be connected. 
	
	The paths we just defined satisfy the following property. Let $z_1, z_2, z_3, \ldots, z_a$ be a path. For all $2\leq i\leq a$,
	\begin{align}
	&t(z_i) > t(z_{i-1}) \text{ and } b(z_i) < b(z_{i-1}) \text{ if $i$ is even,} \label{eq:x-even}\\ 
	&t(z_i) < t(z_{i-1}) \text{ and } b(z_i) > b(z_{i-1}) \text{ if $i$ is odd,} \label{eq:x-odd}
	\end{align}
	or, for all $2\leq i\leq a$, 
	\begin{align}
	&t(z_i) < t(z_{i-1}) \text{ and } b(z_i) > b(z_{i-1}) \text{ if $i$ is even,}\label{eq:y-even} \\ 
	&t(z_i) > t(z_{i-1}) \text{ and } b(z_i) < b(z_{i-1}) \text{ if $i$ is odd.}\label{eq:y-odd}
	\end{align}
	Note that Equations~\eqref{eq:x-even} and~\eqref{eq:x-odd} hold for $X_{u,v}$, by definition, and that Equations~\eqref{eq:y-even} and~\eqref{eq:y-odd} hold for $Y_{u,v}$, by definition. In later proofs we will often use this property.

	\begin{lemma}
	\label{lem:X-Y-shortest}
		$X_{u,v}$ or $Y_{u,v}$ is a shortest $u,v$-path. 
	\end{lemma}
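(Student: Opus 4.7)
The plan is to prove the lemma by strong induction on $d(u,v)$, with the base case $d(u,v) \le 1$ being immediate (either $u = v$ or $u\sim v$, in which case $X_{u,v}$ or $Y_{u,v}$ is the edge $u,v$ of length $1$). For the inductive step, fix a shortest $u$-$v$ path $P = u = p_1, p_2, \ldots, p_\ell = v$ with $\ell - 1 = d(u,v) \ge 2$, and assume without loss of generality that $u \prec v$. Since $p_2 \sim u$, either $t(p_2) > t(u)$ (Case A) or $b(p_2) > b(u)$ (Case B). I focus on Case A, where $X_{u,v}$ is defined and the greedy choice gives $t(x_2) \ge t(p_2)$; Case B is symmetric with the roles of $X_{u,v}$ and $Y_{u,v}$ interchanged. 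A preliminary shortcut argument gives $t(p_3) < t(p_2)$ (and hence $b(p_3) > b(p_2)$): otherwise $t(p_3) > t(p_2) > t(u)$ together with $b(p_3) < b(p_2) < b(u)$ would force $p_3 \sim u$, producing a shorter $u$-$v$ path.

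The crux of the argument is to show that $p_3 \sim x_2$. Combining $t(p_3) < t(p_2) \le t(x_2)$ already rules out $p_3 \succ x_2$. If $x_2 \sim v$ then $X_{u,v} = u, x_2, v$ has length $2 = d(u,v)$ and we are done. Otherwise $x_2 \not\sim v$ combined with $b(x_2) < b(u) < b(v)$ forces $t(x_2) < t(v)$, so $v \succ x_2$. Suppose towards a contradiction that $p_3 \not\sim x_2$; then $p_3 \prec x_2$. Since $p_\ell = v \succ x_2$, there is a smallest index $k \ge 4$ with $p_k \not\prec x_2$. If $p_k \sim x_2$, then $u, x_2, p_k, \ldots, v$ has length $\ell - k + 2 \le \ell - 2$, contradicting that $P$ is shortest. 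If instead $p_k \succ x_2$, then $p_{k-1} \prec x_2 \prec p_k$ yields $p_{k-1} \prec p_k$, contradicting that $p_{k-1} \sim p_k$ in $P$. Hence $p_3 \sim x_2$.

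Given $p_3 \sim x_2$, the crossing together with $t(p_3) < t(x_2)$ forces $b(p_3) > b(x_2)$, so $x_2, p_3, p_4, \ldots, v$ is a shortest $x_2$-$v$ path of length $d(u,v) - 1$ that starts with an increase in the bottom endpoint. This places the sub-instance $(x_2, v)$ in Case B, and the inductive hypothesis applied to $(x_2,v)$ yields that $Y_{x_2,v}$ is a shortest $x_2$-$v$ path. Since from $x_2$ onwards the greedy rules defining $X_{u,v}$ (maximum $b$, then maximum $t$, alternating) coincide exactly with those defining $Y_{x_2,v}$, we have that $X_{u,v}$ is $u$ prepended to $Y_{x_2,v}$, giving $|X_{u,v}| = 1 + (d(u,v) - 1) = d(u,v)$. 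The main obstacle I anticipate is the shortcut argument establishing $p_3 \sim x_2$, which requires carefully tracking the $\prec$-position of the vertices of $P$ relative to $x_2$; once it is in place the induction closes cleanly, thanks to the clean observation that the tail of $X_{u,v}$ is precisely a $Y$-path from $x_2$, which lets the induction alternate between the two symmetric cases.
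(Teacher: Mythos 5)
Your proof is correct, but it takes a genuinely different route from the one in the paper (Appendix~A). The paper fixes the greedy path $X_{u,v}$ and compares it index-by-index against an arbitrary competitor path $Z$ with $t(z_2)>t(u)$, proving by induction along the path that $z_i$ is never to the right of $x_i$ (a ``greedy stays ahead'' invariant, Equations~(7)--(8) there), so that a strictly shorter $Z$ could not reach $v$; the symmetric statement for $Y_{u,v}$ then covers all remaining competitors. You instead induct on $d(u,v)$: a shortcut argument shows $x_2\sim p_3$ for a shortest path $P$, and the self-similarity $X_{u,v}=u\cdot Y_{x_2,v}$ lets you recurse on the sub-instance $(x_2,v)$, which lands in the symmetric case. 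Both arguments are sound; the paper's yields the slightly stronger quantitative fact that $X_{u,v}$ is no longer than \emph{any} path leaving $u$ upward, while yours exposes the recursive $X$/$Y$ alternation more cleanly and gets termination of the greedy paths for free. One point you should make explicit: the statement you actually induct on must be the refined one --- ``if some shortest $u,v$-path leaves $u$ with a larger top (resp.\ bottom) endpoint, then $X_{u,v}$ (resp.\ $Y_{u,v}$) is a shortest $u,v$-path'' --- since the literal lemma (``$X_{u,v}$ or $Y_{u,v}$ is shortest'') applied to $(x_2,v)$ would not let you conclude that $Y_{x_2,v}$ \emph{specifically} is shortest. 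Your case split already proves this refined statement, so this is a matter of phrasing rather than a gap. (Similarly routine but worth a word: $x_2\notin\{p_3,\dots,p_\ell\}$ and $p_3\neq x_2$, since otherwise $u\sim x_2$ yields a shortcut in $P$; and $x_2,p_3,\dots,v$ is shortest because otherwise prepending $u$ would again shortcut $P$.)
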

	\begin{proof}
		See [{Mondal et al.~\cite[Lemma~5]{MondalPP2003}}], or see Appendix \ref{app:shortest-paths}. 
	\end{proof}
	
	We define two special paths $P$ and $Q$. For the definition of $P$, let $p_1$ be the vertex such that $t(p_1)$ is smallest among all vertices of $G$. Perform the same process as in the construction of $X_{p_1,\cdot}$: for $i\geq 2$, if $i$ is even (resp.\ odd), let $p_i$ be the vertex that intersects $p_{i-1}$ where $t(p_i)$ (resp.\ $b(p_i)$) is largest. Let $P$ denote the resulting path and let $p_d$ denote the last vertex of $P$. Observe that $P = X_{p_1,p_d}$. 
	
	For the definition of $Q$, let $q_1$ be the vertex such that $b(q_1)$ is smallest among all vertices of $G$. Perform the same process as in the construction of $Y_{q_1,\cdot}$: for $i\geq 2$, if $i$ is even (resp.\ odd), let $q_i$ be the vertex that intersects $q_{i-1}$ where $b(q_i)$ (resp.\ $t(q_i)$) is largest. Let $Q$ denote the resulting path and let $q_{d'}$ denote the last vertex of $Q$. Observe that $Q = Y_{q_1,q_{d'}}$.
	
	\begin{corollary} \label{cor:P-Q-shortest}
		$P$ is a shortest $p_1, p_d$-path and $Q$ is a shortest $q_1, q_{d'}$-path.
	\end{corollary}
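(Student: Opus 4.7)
The plan is to reduce the corollary directly to Lemma \ref{lem:X-Y-shortest}, applied to the endpoints of $P$ and $Q$ respectively. By construction, $P$ is exactly $X_{p_1,p_d}$ (the choices in the definition of $P$ mirror the choices in the definition of $X$, and indeed the paragraph defining $P$ already observes this). Analogously, $Q = Y_{q_1,q_{d'}}$. Lemma \ref{lem:X-Y-shortest} tells us that at least one of $X_{p_1,p_d}$ and $Y_{p_1,p_d}$ is a shortest $p_1,p_d$-path, so it suffices to rule out the possibility that only $Y_{p_1,p_d}$ is shortest; the cleanest way to do this is to argue that $Y_{p_1,p_d}$ does not even exist. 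A symmetric argument will handle $Q$.

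For the first half, I would argue as follows. The path $Y_{p_1,p_d}$ exists only if there is some neighbor $y_2$ of $p_1$ with $b(y_2) > b(p_1)$. But $p_1$ was chosen to minimize $t(\cdot)$, so any $v \neq p_1$ satisfies $t(v) > t(p_1)$; and whenever $v \sim p_1$, the intersection criterion from the trichotomy of $\prec,\succ,\sim$ forces $b(v) < b(p_1)$. Hence no candidate for $y_2$ exists, $Y_{p_1,p_d}$ is undefined, and Lemma \ref{lem:X-Y-shortest} forces $X_{p_1,p_d} = P$ to be a shortest $p_1,p_d$-path.

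For $Q$ I would apply the mirror argument, using that $q_1$ minimizes $b(\cdot)$. Any neighbor $v$ of $q_1$ must satisfy $b(v) > b(q_1)$, and then the intersection rule gives $t(v) < t(q_1)$, so no neighbor of $q_1$ can serve as $x_2$ in the construction of $X_{q_1,q_{d'}}$. Thus $X_{q_1,q_{d'}}$ is undefined and Lemma \ref{lem:X-Y-shortest} gives that $Y_{q_1,q_{d'}} = Q$ is a shortest $q_1,q_{d'}$-path.

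There is no real obstacle here; the only mildly subtle point is matching the two definitions so that $P = X_{p_1,p_d}$ and $Q = Y_{q_1,q_{d'}}$ with the right parity of indices, and noting that the WLOG hypothesis $p_1 \prec p_d$ (respectively $q_1 \prec q_{d'}$) used inside Lemma \ref{lem:X-Y-shortest} is automatic: since $t(p_1)$ is minimum, $p_d$ is either adjacent to $p_1$ or satisfies $p_1 \prec p_d$, and symmetrically for $Q$. Everything else is a one-line application of the lemma.
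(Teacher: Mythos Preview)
Your proposal is correct and follows essentially the same approach as the paper: invoke Lemma~\ref{lem:X-Y-shortest}, then argue that $Y_{p_1,p_d}$ does not exist because any neighbor $y_2$ of $p_1$ with $b(y_2) > b(p_1)$ would satisfy $t(y_2) < t(p_1)$, contradicting the minimality of $t(p_1)$; and symmetrically for $Q$. Your extra remark checking the orientation hypothesis $p_1 \prec p_d$ is a nice bit of care but is not needed beyond what the paper already assumes.
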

	\begin{proof}
		By Lemma~\ref{lem:X-Y-shortest}, it follows that $X_{p_1, p_d}$ or $Y_{p_1, p_d}$ is a shortest $p_1,p_d$-path. We claim that $Y_{p_1, p_d}$ does not exist. Assume the contrary and let $y_2$ be the vertex that follows $p_1$ on $Y_{p_1, p_d}$. Then $b(y_2) > b(p_1)$ by the definition and existence of $Y_{p_1, p_d}$. Since $y_2$ intersects $p_1$, it follows that $t(y_2) < t(p_1)$. But this contradicts the definition of $p_1$. Hence, $Y_{p_1, p_d}$ does not exist. Therefore, $X_{p_1, p_d} = P$ is a shortest $p_1,p_d$-path. Analogously, we can prove that $Q$ is a shortest $q_1, q_{d'}$-path. 
	\end{proof}
		
	We will prove some more useful properties about the paths $P$ and $Q$, before we show the rainbow coloring. 
	
	\begin{lemma} \label{lem:P-Q-end}
		Let $v_t$, resp.\ $v_b$, be the segment that has the rightmost top, resp.\ bottom, end point. 
		Then $p_d = v_t$ and $p_{d-1} = v_b$ if $d$ is even, and vice versa if $d$ is odd. Furthermore, we have $q_{d'} = v_b$ and $q_{d'-1} = v_t$ if $d'$ is even, and vice versa if $d'$ is odd.
	\end{lemma}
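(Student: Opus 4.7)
I would focus on the statement for $P$ with $d$ even; the case $d$ odd is handled by the same argument with $v_t$ and $v_b$ interchanged, and the statement for $Q$ follows from the analogous argument on the mirror construction, where bottom endpoints play the role of top endpoints. The core of the proof is a backward induction along $P$ followed by a single application of Lemma~\ref{lem:X-Y-shortest}.

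To prove $p_d = v_t$, I would assume for contradiction that $t(v_t)>t(p_d)$. A short preliminary observation is the strict monotonicity $t(p_2)<t(p_4)<\cdots<t(p_d)$ of the top endpoints at even-indexed vertices of $P$: $p_{2k}$ is an intersector of $p_{2k+1}$ with $t(p_{2k})>t(p_{2k+1})$ by~\eqref{eq:x-odd}, so it is a valid candidate when choosing $p_{2k+2}$, yielding $t(p_{2k+2})\geq t(p_{2k})$ (strict because $p_{2k+2}\neq p_{2k}$); an analogous statement holds for the bottom endpoints at odd-indexed vertices. From this one deduces $v_t\notin V_P$. Then by backward induction from $i=d-1$ down to $i=1$ I would show $v_t\not\sim p_i$ and $v_t\succ p_i$. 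Base case $i=d-1$: if $v_t\sim p_{d-1}$, then $v_t$ is a valid top-candidate at the even step $d$ and, having $t(v_t)>t(p_d)$, would be chosen in place of $p_d$, a contradiction; non-adjacency together with $t(v_t)>t(p_{d-1})$ then forces $v_t\succ p_{d-1}$. Inductive step: case-split on the parity of $i+1$. When $i+1$ is even, $v_t\sim p_i$ with $t(v_t)>t(p_i)$ would force $v_t=p_{i+1}$ by top-maximality, contradicting $v_t\notin V_P$; when $i+1$ is odd, $v_t\sim p_i$ combined with $t(v_t)>t(p_i)$ forces $b(v_t)<b(p_i)$, while the inductive hypothesis $v_t\succ p_{i+1}$ together with the odd-step inequality $b(p_{i+1})>b(p_i)$ gives $b(v_t)>b(p_i)$, a contradiction.

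With this in hand I would apply Lemma~\ref{lem:X-Y-shortest} to the pair $p_1,v_t$. Since $v_t$ is adjacent to no $p_i$, the construction of $X_{p_1,v_t}$ never jumps to $v_t$ and reproduces the sequence $p_1,p_2,\ldots,p_d$; but then it cannot proceed past $p_d$ for exactly the reason that terminates $P$, so $X_{p_1,v_t}$ does not exist. Meanwhile, since $t(p_1)$ is the minimum top, every neighbor $w$ of $p_1$ satisfies $t(w)>t(p_1)$, which for intersection forces $b(w)<b(p_1)$; hence no candidate for $y_2$ exists and $Y_{p_1,v_t}$ does not exist either. This contradicts Lemma~\ref{lem:X-Y-shortest} and proves $p_d=v_t$.

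For $p_{d-1}=v_b$, I would first note that $v_t\neq v_b$ (otherwise the common vertex is isolated, contradicting connectedness) and $v_t\sim v_b$ (since $t(v_t)>t(v_b)$ and $b(v_b)>b(v_t)$ force intersection). Having just shown $p_d=v_t$, the vertex $v_b$ intersects $p_d$ and has $b(v_b)>b(p_d)$; if $v_b\neq p_{d-1}$, then $v_b$ would be a valid extension of $P$ at step $d+1$, contradicting that $P$ terminates at $p_d$. The main obstacle is the backward induction: the case split on the parity of $i+1$ must be carefully aligned with the top- or bottom-maximality defining $p_{i+1}$, so that the contradiction is drawn from the right greedy choice. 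Once this is established, the remainder is a clean combination of the nonexistence of both $X_{p_1,v_t}$ and $Y_{p_1,v_t}$ with the adjacency $v_t\sim v_b$.
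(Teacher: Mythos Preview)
Your approach is correct in spirit but differs from the paper's, and there is one small gap to patch. Your backward induction establishes $v_t\not\sim p_i$ and $v_t\succ p_i$ only for $1\le i\le d-1$; to conclude that the construction of $X_{p_1,v_t}$ ``cannot proceed past $p_d$'' you also need $v_t\not\sim p_d$. This is one line: from $v_t\succ p_{d-1}$ you get $b(v_t)>b(p_{d-1})>b(p_d)$ (the last inequality by~\eqref{eq:x-even} with $d$ even), and together with $t(v_t)>t(p_d)$ this gives $v_t\succ p_d$. Without this, if $v_t\sim p_d$ then $X_{p_1,v_t}=p_1,\ldots,p_d,v_t$ exists and Lemma~\ref{lem:X-Y-shortest} yields no contradiction. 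Once the gap is closed, the contradiction you actually obtain is that the $X$-construction cycles ($x_{d+1}=p_{d-1}=x_{d-1}$), which already contradicts connectedness; the appeal to Lemma~\ref{lem:X-Y-shortest} is then superfluous.

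The paper takes a shorter, more local route. Rather than fixing the parity of $d$ and running a backward induction along all of $P$, it does a three-way case split on the relationship between $p_d$ and $v_t$: if $p_d=v_t$ it reads off that $d$ is even and $p_{d-1}=v_b$; if $p_d\ne v_t$ but $v_t\sim p_d$ it shows $d$ must be odd (else $v_t$ would beat $p_d$ at the even step through $p_{d-1}$) and then $p_{d-1}=v_t$, $p_d=v_b$; and if $v_t\not\sim p_d$ it applies Lemma~\ref{lem:X-Y-shortest} to the pair $p_d,v_t$ (not $p_1,v_t$), showing that $X_{p_d,v_t}$ forces its $x_2$ to intersect $p_{d-1}$ with a larger top than $p_d$, while $Y_{p_d,v_t}$ immediately cycles through $p_{d-1}$. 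Your global induction essentially proves a domination statement for $v_t$ (anticipating Lemma~\ref{lem:P-Q-dominating}), whereas the paper's argument never leaves the last two vertices of $P$. Both are valid; the paper's is more economical and handles both parities in one pass.
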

	\begin{proof}
		Suppose that $p_d = v_t$. Since the top of $p_d$ is rightmost, we see that $d$ is even. Then, by definition of $P$, it holds that the segment, of all segments that intersect $p_d$, that ends rightmost at the bottom equals $p_{d-1}$. So, $p_{d-1} = v_b$. 
		
		Suppose that $p_d \neq v_t$ and $v_t \sim p_{d}$. It is clear that $t(p_{d}) < t(v_t)$, thus $b(p_{d}) > b(v_t)$. If $d$ is even, then $v_t \sim p_{d-1}$. But $p_{d}$ is the vertex that intersects $p_{d-1}$ which has the rightmost top end, yielding a contradiction with the fact that $t(p_{d}) < t(v_t)$. So $d$ is odd. 
		Since the path ends at $p_d$, it follows that the vertex that intersects $p_d$ that has the rightmost top is $p_{d-1}$. Hence $p_{d-1} = v_t$. 
		By definition, $p_d$ is the vertex that intersects $p_{d-1}$ that has the rightmost bottom. Thus $p_{d} = v_b$.
		
		Suppose that $p_d \neq v_t$ and $v_t \nsim p_{d}$. It holds that $p_d$ is left of $v_t$. Assume that $d$ is even. Consider the shortest $p_d, v_t$-path. We know that either $X_{p_d, v_t}$ or $Y_{p_d, v_t}$ is a shortest path. (See Figure \ref{fig:P-Q-end}.) Suppose that $X_{u,v}$ is the shortest. We know that $t(x_2) > t(p_d)$ and $b(x_2) < b(p_d)$. It follows that $x_2$ intersects $p_{d-1}$. This yields a contradiction with the choice of $p_d$. Suppose that $Y_{u,v}$ is the shortest path. By the definitions of $y_2$ and $P$, it holds that $y_2 = p_{d-1}$. Then $y_3 = p_d = y_1$, so $Y_{u,v}$ is not a shortest path. We conclude that either $p_d = v_t$ or $v_t \sim p_{d}$. The case for 
		$d$ odd is analogous. 
		\begin{figure}
	\centering
	\begin{tikzpicture}
	\draw[line] (0,0) -- (7,0);
	\draw[line] (0,2) -- (7,2);
	
	\draw[segment] (1,2) -- (4,0);
	\draw[segment] (2,0) -- (4,2);
	\draw[segment] (5,0) -- (6,2);
	\draw[possiblesegment] (5,2) -- (1,0);
	
	\node[label = $p_{d-1}$] (k) at (1,2) {};
	\node[label = $p_{d}$] (k1) at (4,2) {};
	\node[label = $v_t$] (k) at (6,2) {};
	\node[label = $x_2$] (v) at (5,2) {};
	\end{tikzpicture}\quad
	\begin{tikzpicture}
	\draw[line] (0,0) -- (7,0);
	\draw[line] (0,2) -- (7,2);
	
	\draw[segment] (1,2) -- (4,0);
	\draw[segment] (2,0) -- (4,2);
	\draw[segment] (5,0) -- (6,2);
	
	\node[label = {$p_{d-1} = y_2$}] (k) at (1,2) {};
	\node[label = $p_{d}$] (k1) at (4,2) {};
	\node[label = $v_t$] (k) at (6,2) {};
	\end{tikzpicture}
	\caption{See Lemma \ref{lem:P-Q-end}, case $p_d \neq v_t$, $v_t \nsim p_d$. Either $X_{p_d, v_t}$ or $Y_{p_d,v_t}$ is a shortest path. } \label{fig:P-Q-end}
\end{figure}
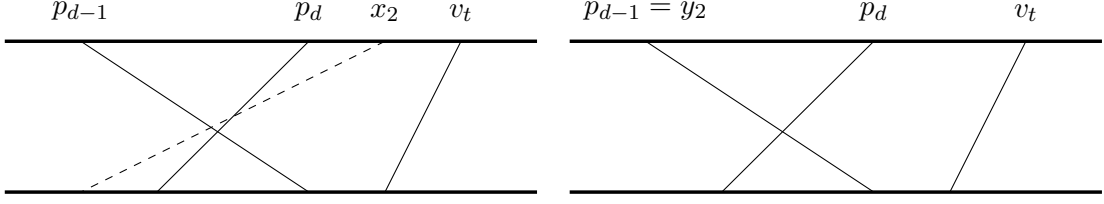
		
		The proof for $Q$ is analogous. 
	\end{proof}

	\begin{lemma} \label{lem:P-Q-dominating}
		The sets $V_P\setminus \{p_d\}$ and $V_Q \setminus\{q_{d'}\}$ are dominating sets of $G$.
	\end{lemma}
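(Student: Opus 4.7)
The plan is to prove the claim for $V_P \setminus \{p_d\}$; the corresponding statement for $V_Q \setminus \{q_{d'}\}$ follows by swapping the roles of the two parallel lines and applying the same argument. I would first establish a monotonicity lemma: along the odd-indexed vertices of $P$, the bottom endpoints are strictly increasing, i.e., $b(p_1) < b(p_3) < \cdots < b(p_m)$, where $m$ is the largest odd index at most $d$. The reason is that at an odd step $j \geq 3$, the vertex $p_j$ is defined as the intersecting neighbor of $p_{j-1}$ with the rightmost bottom, and $p_{j-2}$ is also a neighbor of $p_{j-1}$ (since $p_{j-1}$ was chosen as an intersecting neighbor of $p_{j-2}$) but is distinct from $p_j$, so $b(p_j) > b(p_{j-2})$. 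By Lemma~\ref{lem:P-Q-end}, $b(p_m) = b(v_b)$ is the overall rightmost bottom.

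Now fix an arbitrary $v \in V_G$. The easy cases are $v \in V_P \setminus \{p_d\}$ (trivial) and $v = p_d$ (then $v \sim p_{d-1} \in V_P \setminus \{p_d\}$, using $d \geq 2$). So assume $v \notin V_P$. If $v \sim p_1$, then $p_1 \in V_P \setminus \{p_d\}$ dominates $v$, and we are done; hence assume $v \nsim p_1$. Since $p_1$ has the leftmost top end, $t(v) > t(p_1)$, and non-intersection forces $b(v) > b(p_1)$. In particular $d \geq 3$, since otherwise $d = 2$ and then $p_1 = v_b$ by Lemma~\ref{lem:P-Q-end}, so $b(v) > b(p_1) = b(v_b)$ would contradict the fact that $v_b$ has the rightmost bottom.

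Because $b(p_1) < b(v) < b(p_m)$ (the second inequality uses $v \neq v_b$), let $j$ be the smallest odd index on $P$ with $b(p_j) > b(v)$; then $3 \leq j \leq m$. By minimality, $b(p_{j-2}) < b(v)$, and Equation~\eqref{eq:x-even} applied at step $j-1$ gives $b(p_{j-1}) < b(p_{j-2})$, so $b(v) > b(p_{j-1})$. It remains to show $t(v) < t(p_{j-1})$, which combined with $b(v) > b(p_{j-1})$ yields $v \sim p_{j-1}$. If $v \nsim p_j$, then $b(v) < b(p_j)$ forces $t(v) < t(p_j)$, and since Equation~\eqref{eq:x-odd} at step $j$ gives $t(p_j) < t(p_{j-1})$, the desired inequality follows. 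If $v \sim p_j$ and $j < d$, then $p_j$ itself dominates $v$, and we are done without needing $p_{j-1}$. The only remaining case is $v \sim p_j$ with $j = d$, which forces $d$ to be odd; but then $p_{j-1} = p_{d-1} = v_t$ by Lemma~\ref{lem:P-Q-end}, so $t(v) < t(v_t) = t(p_{j-1})$ is immediate, again giving $v \sim p_{j-1}$.

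In every sub-case the dominator (either $p_j$ or $p_{j-1}$) has index at most $d-1$: $p_{j-1}$ has the even index $j-1 \leq m-1 \leq d-1$, and when $p_j$ is used we specifically took $j < d$. So the required neighbor lies in $V_P \setminus \{p_d\}$. I expect the main subtlety to be the parity bookkeeping around the case $j = d$ (only possible when $d$ is odd), where $v \nsim p_j$ cannot be assumed and one must appeal to $p_{d-1} = v_t$ from Lemma~\ref{lem:P-Q-end} to secure $t(v) < t(p_{j-1})$; the symmetric statement for $Q$ is then obtained by interchanging the roles of top and bottom throughout the argument.
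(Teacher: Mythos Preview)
Your proof is correct, but it takes a different route from the paper's. The paper argues by contradiction: assuming some $v$ is not dominated by $V_P\setminus\{p_d\}$, it shows by a short induction (using Equations~\eqref{eq:x-even} and~\eqref{eq:x-odd}) that $v \succ p_i$ for every $1\le i\le d-1$, and then invokes Lemma~\ref{lem:P-Q-end} to note that $p_{d-1}\in\{v_t,v_b\}$, so no vertex can lie strictly to its right --- contradiction. Your argument is instead constructive: you prove the monotonicity $b(p_1)<b(p_3)<\cdots<b(p_m)=b(v_b)$, locate $b(v)$ between two consecutive odd-indexed bottoms, and exhibit the explicit dominator $p_{j-1}$ (or $p_j$ when $j<d$). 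The paper's proof is shorter and avoids the case split on whether $v\sim p_j$ and whether $j=d$; your approach has the minor advantage of actually naming the dominating vertex rather than merely asserting its existence, which could be useful if one later needed to know \emph{which} $p_i$ covers a given $v$. Both arguments ultimately lean on Lemma~\ref{lem:P-Q-end} at the boundary.
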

	
	\begin{proof}
		Suppose that $V_P\setminus \{p_d\}$ is not a dominating set, and let $v$ be a vertex that is not dominated. We prove with induction that $v$ is right of $p_i$ for all $1\leq i \leq d-1$. 
		
		It is clear that $v \succ p_1$, since $v \nsim p_1$ and $t(v) > t(p_1)$. 
		Suppose that $v \succ p_{k-1}$, for some $k$. Suppose that $k$ is even. We use the induction hypothesis and Equation \eqref{eq:x-even}, to see that $b(v) > b(p_{i-1}) > b(p_i)$. Since $v \nsim p_i$, it follows that $v\succ p_i$. 
		Suppose that $k$ is odd. Then $t(v) > t(p_{i-1}) > t(p_i)$, by the induction hypothesis and Equation \eqref{eq:x-odd}. Since $v \nsim p_i$, it follows that $v\succ p_i$.
		
		By Lemma \ref{lem:P-Q-end}, we know that $p_{d-1} = v_t$ or $p_{d-1} = v_b$. In both cases, there is no vertex $v$ with $v \succ p_{d-1}$. This yields a contradiction, thus $V_P\setminus \{p_d\}$ is a dominating set. 
		
		The proof for $V_Q \setminus\{q_{d'}\}$ is analogous. 
	\end{proof}
	
	\begin{lemma} \label{lem:length}
		It holds that $d = \diam(G)$ or $d= \diam(G) +1$, and $d' = \diam(G)$ or $d' = \diam(G) +1$.
	\end{lemma}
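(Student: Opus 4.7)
The plan is to sandwich $d$ between $\diam(G)$ and $\diam(G)+1$ by proving the two inequalities separately, and then to note that the same argument applies verbatim to $Q$.

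For the upper bound $d\le\diam(G)+1$, I would invoke Corollary~\ref{cor:P-Q-shortest}: since $P=p_1,\ldots,p_d$ is a shortest $p_1,p_d$-path, its length $d-1$ equals $d_G(p_1,p_d)$, which is at most $\diam(G)$ by definition of the diameter. This immediately gives $d\le\diam(G)+1$.

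For the lower bound $d\ge\diam(G)$, it suffices to show that $\diam(G)\le d$, i.e.\ that $d_G(u,v)\le d$ for every pair $u,v\in V_G$. Here I would use Lemma~\ref{lem:P-Q-dominating}: the set $V_P\setminus\{p_d\}$ is a dominating set, so there exist indices $i,j\in\{1,\ldots,d-1\}$ with $u\in N[p_i]$ and $v\in N[p_j]$; assume without loss of generality that $i\le j$. Concatenating the edge (or trivial path) from $u$ to $p_i$, the subpath of $P$ from $p_i$ to $p_j$ of length $j-i$, and the edge (or trivial path) from $p_j$ to $v$ yields a $u,v$-walk of length at most
\[
1+(j-i)+1\;\le\;2+((d-1)-1)\;=\;d,
\]
so $d_G(u,v)\le d$, as required.

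Combining both bounds gives $d\in\{\diam(G),\diam(G)+1\}$. The identical argument, now using the fact (again Corollary~\ref{cor:P-Q-shortest}) that $Q$ is a shortest $q_1,q_{d'}$-path and the fact (again Lemma~\ref{lem:P-Q-dominating}) that $V_Q\setminus\{q_{d'}\}$ dominates $G$, shows $d'\in\{\diam(G),\diam(G)+1\}$. There is no real obstacle here: both inequalities are direct consequences of previously established corollaries and lemmas, and the only bookkeeping point is that the "domination detour" argument also covers the case $u\in V_P$ or $v\in V_P$, which is handled uniformly by writing $u\in N[p_i]$ rather than $u\in N(p_i)$.
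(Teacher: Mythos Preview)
Your proposal is correct and follows essentially the same argument as the paper: the upper bound comes from Corollary~\ref{cor:P-Q-shortest} ($P$ is a shortest path, so $d-1\le\diam(G)$), and the lower bound comes from Lemma~\ref{lem:P-Q-dominating} by routing any pair $u,v$ through the dominating path $p_1,\ldots,p_{d-1}$ to obtain a $u,v$-walk of length at most $d$. Your use of closed neighborhoods $N[p_i]$ to uniformly handle the case $u\in V_P$ is a small but clean refinement over the paper's phrasing.
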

	\begin{proof}
		Since $P$ is a shortest path (by Corollary \ref{cor:P-Q-shortest}), we have that $d-1 \leq \diam(G)$. Or, equivalently $d \leq \diam(G) + 1$. 
		
		Let $u$ and $v$ be two vertices. Since the set $V_P \setminus \{p_d\}$ is a dominating set (see Lemma \ref{lem:P-Q-dominating}), there are $p_i, p_j \in V_P \setminus \{p_d\}$ such that $p_i \sim u$, $p_j \sim v$. Without loss of generality, we assume that $i\leq j$. Then the path $u, p_i, p_{i+1}, \ldots, p_j, v$ has length at most $d$. Thus $\diam(G) \leq d$. 
		
		The proof for $d'$ is analogous. 
	\end{proof}

	Now we start a breadth-first search from $p_1$. Call the layers $L_1, L_2, \ldots, L_r$. Since $P$ is a shortest path, it follows that $p_i \in L_i$ for every $i$. Thus $r\geq d$. Since $V_P\setminus \{p_d\}$ is a dominating set, we conclude that $r = d$, thus the layers of the breadth-first search are $L_1, L_2, \ldots, L_d$. 
	We also start a breadth-first search in $q_1$, and call the layers $M_1, M_2, \ldots, M_{d'}$. Again, we have that $q_i \in M_i$ for every $i$.  
	
	A nice property of the path $P$ is that every vertex $p_i$ is adjacent to all vertices in the next layer $L_{i+1}$.

	\begin{lemma} \label{lem:neighbourhood}
		For every $i$, it holds that $L_{i+1} \subseteq N(p_i)$ and $M_{i+1} \subseteq N(q_i)$. 
	\end{lemma}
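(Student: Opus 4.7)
The plan is to proceed by induction on $i$, proving only $L_{i+1} \subseteq N(p_i)$; the symmetric statement $M_{i+1} \subseteq N(q_i)$ will follow by swapping the roles of the two lines $\eL_1$ and $\eL_2$ (equivalently, of top and bottom endpoints). The base case $i = 1$ is immediate, since $L_2 = N(p_1) \setminus \{p_1\}$ by the BFS definition.

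For the inductive step, I would assume $L_k \subseteq N(p_{k-1})$ for every $k \le i$, fix $v \in L_{i+1}$, and pick a neighbor $u \in L_i$ of $v$; the induction hypothesis then gives $u \sim p_{i-1}$. My first step would be to establish, with essentially the argument used in Lemma~\ref{lem:P-Q-dominating}, that $v \succ p_j$ for every $j \le i-1$: since $v \in L_{i+1}$ is not adjacent to any $p_j$ with $j \le i-1$, the base case $v \succ p_1$ follows from minimality of $t(p_1)$, and the induction on $j$ propagates along $P$ using Equations~\eqref{eq:x-even} and~\eqref{eq:x-odd}. Next, assuming $v \not\sim p_i$ toward a contradiction, one more step of the same propagation, comparing $v$ to $p_i$ via the equation linking $p_{i-1}$ and $p_i$, gives $v \succ p_i$.

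I would then invoke the defining maximality of $p_i$. For $i$ even, $p_i$ has the largest top endpoint among $N(p_{i-1})$, so $t(u) \le t(p_i) < t(v)$, and the intersection condition for $u \sim v$ forces $b(u) > b(v)$. Combining with $v \succ p_{i-1}$ gives $b(u) > b(p_{i-1})$, and then $u \sim p_{i-1}$ forces $t(u) < t(p_{i-1})$. When $i \ge 4$, $i-1$ is odd and \eqref{eq:x-odd} applied to the edge $p_{i-2}p_{i-1}$ gives $t(p_{i-1}) < t(p_{i-2})$ and $b(p_{i-1}) > b(p_{i-2})$, so that $u \sim p_{i-2}$, contradicting $u \in L_i$. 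The case of odd $i$ is symmetric after swapping the roles of $t$ and $b$ (and using \eqref{eq:x-even} in place of \eqref{eq:x-odd} for the final step).

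The main subtlety, and the step I expect to need the most care, is the small case $i = 2$: here $p_{i-2}$ does not exist, so the engine of the contradiction above is unavailable. In that case, however, the inequality $t(u) < t(p_{i-1}) = t(p_1)$ derived in the previous paragraph already contradicts the definition of $p_1$ as the vertex with smallest top endpoint, which closes the argument. Beyond this special case, the whole proof is bookkeeping with the sign alternations \eqref{eq:x-even} and \eqref{eq:x-odd}; keeping the parity-dependent inequalities straight, so that the adjacency $u \sim p_{i-2}$ emerges at the correct step, is the bulk of the work.
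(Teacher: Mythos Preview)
Your proof is correct, but the route differs from the paper's. The paper proves a \emph{strengthened} statement by induction: together with $L_{i+1}\subseteq N(p_i)$ it simultaneously establishes, for every $u\in L_{i+1}$, the direction of the intersection with $p_i$ (namely $t(u)<t(p_i)$ and $b(u)>b(p_i)$ when $i$ is even, and the reverse when $i$ is odd). With this extra invariant in the induction hypothesis, the step becomes direct: for $v\in L_{k+1}$ with neighbor $u\in L_k$, the recorded direction of the $u$--$p_{k-1}$ intersection, combined with $v\not\sim p_{k-1}$ and the maximality defining $p_k$, immediately gives $v\sim p_k$ (and the new direction for $v$).

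Your argument instead keeps the hypothesis unstrengthened and proceeds by contradiction: you first push through the auxiliary fact $v\succ p_j$ for all $j\le i$ (reusing the engine of Lemma~\ref{lem:P-Q-dominating}), then from $v\not\sim p_i$ you squeeze $u$ into a position forcing $u\sim p_{i-2}$, which contradicts $u\in L_i$. This is perfectly valid; the price you pay for not strengthening the hypothesis is the detour through the contradiction and the small boundary case $i=2$ (where $p_{i-2}$ does not exist and you instead contradict the minimality of $t(p_1)$). The paper's version is a touch cleaner because the strengthened invariant obviates the case split, but both arguments are essentially the same parity bookkeeping with \eqref{eq:x-even} and \eqref{eq:x-odd}.
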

	\begin{proof}
		We will prove a somewhat stronger result by induction, namely that $L_{i+1} \subseteq N(p_i)$ and if $i$ is even (resp.\ odd) we have that for every $u$ in $L_{i+1}$: \begin{equation} \label{eq:neighbourhood}
			t(u) < t(p_i) \text{ and } b(u) > b(p_i) \text{ (resp.\ $t(u) > t(p_i)$ and $b(u) < b(p_i)$).}  
		\end{equation}
		
		We use a proof by induction. The first layer $L_1$ contains only $p_1$. It is clear that every vertex in the second layer $L_2$ is a neighbour of $p_1$. Moreover, by the definition of $p_1$, we have that $t(u) > t(p_1)$ for all $u\in L_2$, and thus $b(u) < b(p_1)$. 
		
		Suppose that $L_{i+1} \subseteq N(p_{i})$ and Equation \eqref{eq:neighbourhood} holds for every $i<k$. Let $v$ be a vertex in $L_{k+1}$. We know that $v$ does not intersect $p_{k-1}$, otherwise $v$ would be contained in $L_k$. So we know that $t(v) > t(p_{k-1})$ and $b(v) > b(p_{k-1})$. Since $v$ is in layer $L_{k+1}$, $v$ intersects $u$ for some $u\in L_{k}$ (see Figure \ref{fig:neighbourhood}). 
		So we either have that $t(v) < t(u)$ and $b(v)> b(u)$ or we have $t(v) > t(u)$ and $b(v) < b(u)$. If $k$ is even (resp.\ odd), we have $t(v) < t(u)$ and $b(v)> b(u)$ (resp.\ $t(v) > t(u)$ and $b(v) < b(u)$), otherwise, by the induction hypothesis, $v$ would intersect $p_{k-1}$. By the induction hypothesis $u$ intersects $p_{k-1}$, so by the definition of $p_k$, we have that $t({p_k}) \geq t(u)$ (resp.\ $b(p_k) \geq b(u)$). It follows that $t(v) < t(p_k)$ (resp.\ $b(v) < b(p_{k})$). We know that $b(p_k) < b(p_{k-1})$ (resp.\ $t(p_k) < t(p_{k-1})$), thus $b(v) > b(p_k)$ (resp.\ $t(v) > t(p_{k})$). We conclude that $v$ intersects $p_k$, and $t(v) < t(p_k)$ and $b(v) > b({p_k})$ (resp.\ $t(v) > t({p_k})$ and $b(v) < b({p_k})$). 
		
		So $L_{k+1} \subseteq N(p_k)$ for every $k$. The proof that $M_{k+1} \subseteq N(q_k)$ is analogous. 
	\end{proof}
	\begin{figure}
	\centering
	\begin{tikzpicture}
		\draw[line] (0,0) -- (10,0);
		\draw[line] (0,2) -- (10,2);
		
		\draw[segment] (1,2) -- (6,0);
		\draw[segment] (4,2) -- (3,0);
		\draw[segment] (7,2) -- (4,0);
		\draw[segment] (6,2) -- (5,0);
		\draw[segment] (5,2) -- (9,0);
		
		\node[label = $p_{k-1}$] (k) at (1,2) {};
		\node[label = $p_{k}$] (k1) at (7,2) {};
		\node[label = $u$] (u) at (6,2) {};
		\node[label = below:$v$] (v) at (9,0) {};
	\end{tikzpicture}
	\caption{If a vertex $v$ in layer $L_{k+1}$ intersect a vertex $u$ in layer $L_{k}$, then it also intersects $p_{k}$. See Lemma \ref{lem:neighbourhood}.} \label{fig:neighbourhood}
\end{figure}
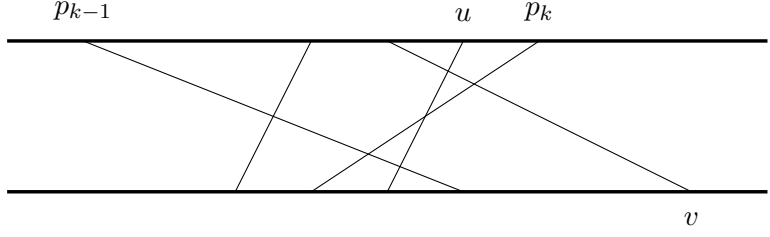
	
	For an illustration of the structure of $G$, see Figure \ref{fig:layers}. If $d=\diam(G)$ or $d' = \diam(G)$, we will color $G$ layer by layer to obtain a rainbow coloring. 
	
	\begin{lemma} \label{lem:layers}
		If $d = \diam(G)$ or $d' = \diam(G)$, then $\rvc(G) = \diam(G)-1$. 
	\end{lemma}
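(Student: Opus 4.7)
The lower bound $\rvc(G)\ge\diam(G)-1$ is immediate: any rainbow $u,v$-path between two vertices at distance $\diam(G)$ has at least $\diam(G)-1$ internal vertices, which must all be colored distinctly. The real work is to exhibit a rainbow vertex coloring with only $\diam(G)-1$ colors. I would assume without loss of generality that $d=\diam(G)$; the case $d'=\diam(G)$ is entirely symmetric, replacing $P$ with $Q$ and the BFS layers $L_1,\ldots,L_d$ from $p_1$ with $M_1,\ldots,M_{d'}$ from $q_1$.

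\textbf{Coloring.} By Lemma~\ref{lem:length} the BFS from $p_1$ has exactly $d=\diam(G)$ layers. I would color every vertex of $L_k$ with color $k-1$ for $k=2,\ldots,d$, and give the sole vertex $p_1$ of $L_1$ the color $d-1$. This uses exactly $d-1=\diam(G)-1$ colors.

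\textbf{Checking rainbow connectivity.} For a pair $u\in L_i$, $v\in L_j$ with $u\neq v$ and $i\le j$, the plan is to construct a rainbow $u,v$-path whose internal vertices form a consecutive subinterval of $P$. The key tool will be Lemma~\ref{lem:neighbourhood}, which guarantees $L_k\subseteq N(p_{k-1})$ for every $k\ge 2$. The canonical candidate is
\[
u,\ p_{i-1},\ p_i,\ \ldots,\ p_{j-1},\ v,
\]
with the adjustments: use the edge $uv$ if $u\sim v$; use $u,p_{i-1},v$ if $i=j$; and drop the $p_{i-1}$-prefix if $u=p_i$ (this absorbs the case $i=1$, since then $u=p_1$). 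In every variant the set of internal vertices is some $p_a,p_{a+1},\ldots,p_b$ with $1\le a\le b\le d-1$, whose color sequence $c(L_a),\ldots,c(L_b)$ is either $a-1,a,\ldots,b-1$ (when $a\ge 2$) or $d-1,1,2,\ldots,b-1$ (when $a=1$). Both sequences are pairwise distinct inside $\{1,\ldots,d-1\}$, using $b\le d-1$.

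\textbf{Main obstacle.} The delicate step is the choice of color for $L_1$. If I were to color $L_1$ with color $1$ (the natural ``extend by one'' option), the rainbow walk $u,p_1,p_2,v$ with $u\in L_2$ and $v\in L_3$ would have the two consecutive internal vertices $p_1$ and $p_2$ both in color class $1$. Assigning $L_1$ the color $d-1$ sidesteps this precisely because the intermediate-layer colors arising from our construction always lie in $\{1,\ldots,d-2\}$, so the reused color $d-1$ cannot clash with them. I expect the remainder of the argument to be a routine inspection of the four cases above once the structural Lemmas~\ref{lem:length} and~\ref{lem:neighbourhood} are in hand.
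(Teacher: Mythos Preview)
Your proposal is correct and follows essentially the same approach as the paper: a BFS-layer coloring with $\diam(G)-1$ colors, together with rainbow paths of the form $u,p_{i-1},\ldots,p_{j-1},v$ justified by Lemma~\ref{lem:neighbourhood}. Your coloring ($L_k\mapsto k-1$ for $k\ge 2$, $L_1\mapsto d-1$) is just a relabeling of the paper's ($L_i\mapsto i$ for $i\le d-1$, $L_d\mapsto 1$); in both cases $L_1$ and $L_d$ share a color and the remaining layers get distinct ones, so the verification is identical.
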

	\begin{proof}
		Assume that $d = \diam(G)$. Consider the following coloring (see Figure \ref{fig:layers}):
		\begin{align*}
		c(v) = \begin{cases}
		i & \text{if $v\in L_i$, $1\leq i\leq d-1$,}\\
		1 & \text{otherwise.}
		\end{cases}
		\end{align*}
		\begin{figure}
	\centering
	\begin{tikzpicture}
\ellips{0}{0}
\node[vertex, label=$p_1$] (p) at (-2,0) {};
\node (l) at (-2,-1.5) {$L_1$};
\ellips{2}{0}
\node[vertex, label=$p_2$] (p) at (0,0) {};
\node (l) at (0,-1.5) {$L_2$};
\ellips{4}{0} 
\node[vertex, label=$p_3$] (p) at (2,0) {};
\node (l) at (2,-1.5) {$L_3$};

\node (d) at (5,0) {$\ldots$};

\draw[fill=gray!30,opacity=0.5](6,0)circle[x radius=.5,y radius=1];
\ellips{8}{0}
\node[vertex, label=$p_{d-1}$] (p) at (6,0) {};
\node (l) at (6,-1.5) {$L_{d-1}$};
\node[vertex, label=$p_d$] (p) at (8,0) {};
\node (l) at (8,-1.5) {$L_d$};
\end{tikzpicture}
\caption{The layers of the BFS, the layers $L_1, L_2, \ldots, L_{d-1}$ are all assigned a different color. See Lemma \ref{lem:layers}} \label{fig:layers}
\end{figure}
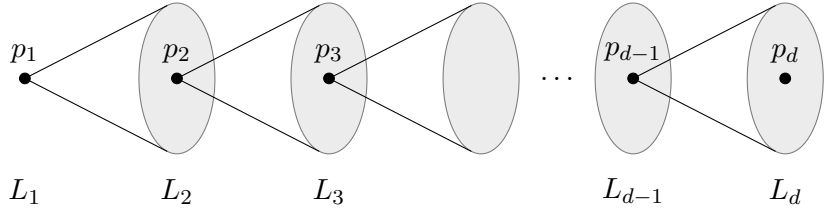
		This coloring uses $d-1 = \diam(G) - 1$ colors. We claim that it is a rainbow coloring. Let $u$ and $v$ be two vertices. Then $u \in L_i$, $v\in L_j$ for some $i, j$. Without loss of generality, assume that $i\leq j$. If $i = 1$, then $u = p_1$. Then, by Lemma \ref{lem:neighbourhood}, the path $p_1, p_2, \ldots, p_{j-1}, v$ is a rainbow path. If $i>1$, again by Lemma \ref{lem:neighbourhood}, the path $u, p_{i-1}, p_i, \ldots, p_{j-1}, v$ is a rainbow path. 
		
		We conclude that $\rvc(G) = \diam(G)-1$. The proof for $d' = \diam(G)$ is analogous. 
	\end{proof}

	Consider the case where $d = d' = \diam(G) + 1$. In this case, we will still color the layers of a breadth-first search that starts at $p_1$, but we have to reuse the color of the first layer for layer $L_{d-1}$. 
	We consider the coloring:
	\begin{align*}
	c(v) = \begin{cases}
	i & \text{if $v\in L_i$, $1\leq i\leq d-2$,}\\
	1 & \text{if $v\in L_{d-1}$,} \\
	2 & \text{if $v\in L_{d}$.}
	\end{cases}
	\end{align*}
	We will show that this is a rainbow coloring. For almost every $u$ and $v$, we readily construct a rainbow path.  
	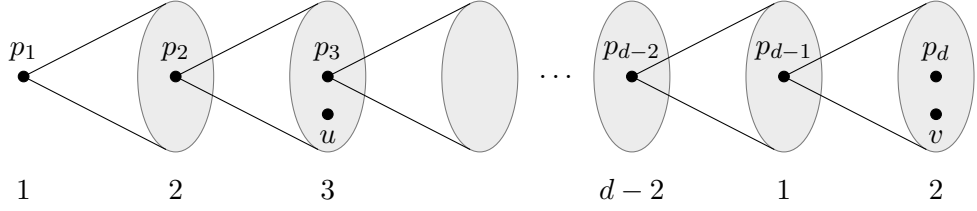
\begin{figure}
	\centering
	\begin{tikzpicture}
\ellips{0}{0}
\node[vertex, label=$p_1$] (p) at (-2,0) {};
\node (l) at (-2,-1.5) {$1$};
\ellips{2}{0}
\node[vertex, label=$p_2$] (p) at (0,0) {};
\node (l) at (0,-1.5) {$2$};
\ellips{4}{0} 
\node[vertex, label=$p_3$] (p) at (2,0) {};
\node (l) at (2,-1.5) {$3$};

\node (d) at (5,0) {$\ldots$};

\draw[fill=gray!30,opacity=0.5](6,0)circle[x radius=.5,y radius=1];
\ellips{8}{0}
\ellips{10}{0}
\node[vertex, label=$p_{d-2}$] (p) at (6,0) {};
\node (l) at (6,-1.5) {$d-2$};
\node[vertex, label=$p_{d-1}$] (p) at (8,0) {};
\node (l) at (8,-1.5) {$1$};
\node[vertex, label=$p_d$] (p) at (10,0) {};
\node (l) at (10,-1.5) {$2$};

\node[vertex, label=below:$u$] (u) at (2,-.5) {};
\node[vertex, label=below:$v$] (v) at (10,-.5) {}; 
\end{tikzpicture}
\caption{The numbers indicate the colors of the layers. The path $u, p_2, p_3, \ldots, p_{d-1}, v$ is a rainbow path. See Lemma \ref{lem:almost-all-paths-rainbow}} \label{fig:almost-all-paths-rainbow}
\end{figure}
	
	\begin{lemma} \label{lem:almost-all-paths-rainbow}
		For the following $u$ and $v$, there exists a rainbow path: 
		\begin{enumerate}
			\item for $u = p_1$, and $v$ arbitrary,
			\item for $u \in L_i$ with $i \geq 3$, and $v \in L_j$ with $j\geq 3$, 
			\item for $u \in L_2$, and $v\notin L_{d}$,
			\item for $u \in L_2$, $u \sim p_2$, and $v \in L_d$.
		\end{enumerate}	
	\end{lemma}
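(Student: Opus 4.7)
The plan is to prove all four cases by exhibiting an explicit $u,v$-path that threads through (a contiguous segment of) the spine $p_1,p_2,\ldots,p_d$, and then checking that the listed internal vertices receive distinct colors. The crucial structural input is Lemma~\ref{lem:neighbourhood}: every vertex of $L_{i+1}$ is adjacent to $p_i$. Hence, whenever $u\in L_i$ and $v\in L_j$, the path $u,p_{i-1},p_i,\ldots,p_{j-1},v$ (or an obvious truncation if $u$ or $v$ is itself a spine vertex) is always available. The first step is therefore just to read off the color sequence along the spine: under the prescribed coloring, $p_1,p_2,\ldots,p_{d-2},p_{d-1},p_d$ carry colors $1,2,3,\ldots,d-2,1,2$. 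Thus a contiguous spine segment $p_s,\ldots,p_t$ has pairwise distinct colors \emph{unless} it contains the color-$1$ pair $\{p_1,p_{d-1}\}$ or the color-$2$ pair $\{p_2,p_d\}$. This single observation drives every case.

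Next I would treat the four cases in order. In Case~(1), with $u=p_1$ and $v\in L_j$, use the path $p_1,p_2,\ldots,p_{j-1},v$; its internal colors are $2,3,\ldots,j-1$ (when $j\leq d-1$) or $2,3,\ldots,d-2,1$ (when $j=d$), which are distinct since $p_1$ is the endpoint, not internal. In Case~(2), with $u\in L_i, v\in L_j$ and $i,j\geq 3$ (WLOG $i\leq j$), use $u,p_{i-1},p_i,\ldots,p_{j-1},v$; the internal spine segment starts at index $i-1\geq 2$, so $p_1$ never appears and the only potentially problematic color, $1$ at $p_{d-1}$, has no mate. In Case~(3), with $u\in L_2$ and $v\in L_j$ for some $j\leq d-1$, use $u,p_1,p_2,\ldots,p_{j-1},v$ (or just the edge $up_1$ when $j=1$); the internal colors are $1,2,\ldots,j-1\leq d-2$, all distinct, because the offending $p_{d-1}$ and $p_d$ are excluded from the segment. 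In Case~(4), with $u\in L_2\cap N(p_2)$ and $v\in L_d$, use $u,p_2,p_3,\ldots,p_{d-1},v$; the internal colors are $2,3,\ldots,d-2,1$, and $p_1$ is again avoided as an internal vertex, so no clash occurs.

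The only technicality worth flagging is the handful of degenerate sub-situations where $u$ or $v$ already lies on the spine (for instance $u=p_i$): the constructed path would then revisit a vertex and must be trimmed. This is uniformly harmless---if $u=p_k$ one simply starts the spine walk at $p_k$ (or $p_{k+1}$) instead of at $p_{k-1}$, and the resulting internal color sequence is a subsequence of the one analyzed above, hence still distinct. I do not foresee any deeper obstacle: once the spine's color pattern $1,2,\ldots,d-2,1,2$ is in hand and Lemma~\ref{lem:neighbourhood} supplies the needed edges, each case reduces to a direct verification. The lemma intentionally \emph{excludes} the remaining problematic situation (namely $u\in L_2$, $u\not\sim p_2$, and $v\in L_d$), which will have to be handled in a subsequent argument.
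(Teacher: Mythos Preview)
Your proposal is correct and follows essentially the same approach as the paper: in each case you thread the path along the spine via Lemma~\ref{lem:neighbourhood} and verify the internal color sequence directly, exactly as the paper does. If anything, you are slightly more careful than the paper, since you explicitly spell out the color pattern $1,2,\ldots,d-2,1,2$ along $p_1,\ldots,p_d$ and address the degenerate sub-cases where $u$ or $v$ already lies on the spine, which the paper passes over in silence.
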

	\begin{proof}
		\begin{enumerate}
			\item Suppose that $v\in L_i$. By Lemma \ref{lem:neighbourhood}, we know that $v \sim p_{i-1}$. The path $p_1, p_2, \ldots, p_{i-1}, v$ is a rainbow path. 
			\item Without loss of generality, we assume that $i\leq j$. By Lemma \ref{lem:neighbourhood}, we know that $u \sim p_{i-1}$ and $v\sim p_{j-1}$. Consider the path $u, p_{i-1}, p_{i}, \ldots, p_{j-1}, v$. The internal vertices are in different layers between $L_2$ and $L_{d-1}$. Hence, this is a rainbow path. 
			\item Suppose that $v\in L_i$, with $i<d$. Again, by Lemma \ref{lem:neighbourhood}, we know that $v \sim p_{i-1}$. The path $u, p_1, p_2, \ldots, p_{i-1}, v$ is a rainbow path.
			\item Notice that the path $u, p_2, p_3, \ldots, p_{d-1}, v$ is a rainbow path. 	
		\end{enumerate}	
	\end{proof}
	
	There are some vertices $u$ and $v$, for which we did not yet construct a rainbow path. The case that is left, is the following: 
	\begin{enumerate}\setcounter{enumi}{4}
		\item for $u \in L_2$, $u\nsim p_2$, and $v \in L_d$.
	\end{enumerate}	
	The path via $P$, $u, p_1, p_2, \ldots, p_{d-1}, v$, does not suffice in this case, because it uses $p_1$ and $p_{d-1}$, which are both colored with color $1$. So this is not a rainbow path. For some cases we will show that a similar path via $Q$ is a rainbow path. For other cases, we can use the shortest $u,v$-path, that is, $X_{u,v}$ or $Y_{u,v}$ is a rainbow path.
	
	\begin{lemma} \label{lem:u-intersects-q1}
		If $u \in L_2$, $u \nsim p_2$, then $u \sim q_1$ or $u \sim q_2$. 
	\end{lemma}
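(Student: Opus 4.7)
The strategy is to exploit the extremal definitions of $p_1$, $q_1$, and $q_2$ together with the fact that $u$ is a neighbour of $p_1$. The auxiliary claim I would prove first is that $p_1 \sim q_1$. Indeed, $t(p_1)$ is smallest so $t(q_1) > t(p_1)$, and if $p_1 \nsim q_1$ we would have $p_1 \prec q_1$, forcing $b(p_1) < b(q_1)$ and contradicting that $b(q_1)$ is smallest. Hence $p_1 \sim q_1$, and in particular $b(p_1) > b(q_1)$ and $t(q_1) < t(p_2)$ is not needed.

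Since $u \in L_2$, we have $u \sim p_1$, and the extremality of $t(p_1)$ then gives $t(u) > t(p_1)$ and $b(u) < b(p_1)$. The extremality of $b(q_1)$ gives $b(u) \geq b(q_1)$, with equality only when $u = q_1$; in that degenerate case $u \sim q_2$ directly by the definition of $q_2$, so we may assume $b(u) > b(q_1)$. It now remains to split on the position of $t(u)$ relative to $t(q_1)$.

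In the case $t(u) < t(q_1)$, combined with $b(u) > b(q_1)$ we immediately obtain $u \sim q_1$. In the case $t(u) > t(q_1)$, I would claim $u \sim q_2$. The key observation is that $p_1$ is itself a valid candidate in the definition of $q_2$: it intersects $q_1$ and satisfies $b(p_1) > b(q_1)$. The maximality of $b(q_2)$ therefore yields $b(q_2) \geq b(p_1) > b(u)$. Moreover, since $q_2 \sim q_1$ and $b(q_2) > b(q_1)$, we must have $t(q_2) < t(q_1) < t(u)$. Combining $t(u) > t(q_2)$ with $b(u) < b(q_2)$ gives $u \sim q_2$, as desired.

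The only step requiring any real care is recognising that $p_1$ qualifies as a candidate for $q_2$, which is what furnishes the crucial bound $b(q_2) \geq b(p_1)$; everything else is bookkeeping about the order of top and bottom endpoints. Note also that the hypothesis $u \nsim p_2$ is not actually used in this particular argument, although it is presumably needed in the downstream application of the lemma.
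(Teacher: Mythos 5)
Your proof is correct and follows essentially the same route as the paper's: both arguments reduce to the key observation that $p_1$ intersects $q_1$ and is therefore a candidate in the definition of $q_2$, yielding $b(q_2)\geq b(p_1) > b(u)$, while $t(u) > t(q_1) > t(q_2)$ forces $u\sim q_2$ whenever $u\nsim q_1$. Your explicit handling of the degenerate case $u=q_1$ and your remark that the hypothesis $u\nsim p_2$ is unused are both accurate (the paper's proof does not use it either).
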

	\begin{proof}
		Since $u \in L_2$, we know that $u$ intersects $p_1$. Hence $t(u) > t(p_1)$ and $b(u) < b(p_1)$. 
		If $u \nsim q_1$, then $u$ is right of $q_1$, since $q_1$ has the leftmost bottom end. So $t(u) > t(q_1) > t(q_2)$ by Equation \eqref{eq:y-even}. Since $p_1$ and $q_2$ both intersect $q_1$, we now that $b(q_2) \geq b(p_1)$, by the definition of $q_2$. Thus $b(u) < b(p_1) \leq b(q_2)$. We conclude that $u$ intersects $q_2$. 
	\end{proof}
	
	\begin{lemma} \label{lem:P-Q-same-end}
		If $d = d' = \diam(G) + 1$, then $p_d = q_{d-1}$ and $p_{d-1} = q_d$. 
	\end{lemma}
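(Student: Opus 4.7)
The plan is to derive the claim as a straightforward corollary of Lemma \ref{lem:P-Q-end}, which has already identified the last two vertices of $P$ and the last two vertices of $Q$ as being precisely the two extremal segments $v_t$ (rightmost top endpoint) and $v_b$ (rightmost bottom endpoint). The only real content of the present lemma is the parity matching that comes from the hypothesis $d = d'$, which forces the two indices to have the same parity, so the two pairs $\{p_{d-1},p_d\}$ and $\{q_{d'-1},q_{d'}\}$ sit on $\{v_t, v_b\}$ in the same orientation.

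First I would split into two cases according to the parity of $d = d'$. When $d$ is even, Lemma \ref{lem:P-Q-end} gives $p_d = v_t$ and $p_{d-1} = v_b$; since $d' = d$ is also even, the same lemma yields $q_{d'} = v_b$ and $q_{d'-1} = v_t$. Substituting $d$ for $d'$ and identifying through $v_t$ and $v_b$ gives $p_d = v_t = q_{d-1}$ and $p_{d-1} = v_b = q_d$, which is exactly what is claimed. The odd case is completely symmetric: now $p_d = v_b$, $p_{d-1} = v_t$, $q_d = v_t$ and $q_{d-1} = v_b$, again yielding $p_d = q_{d-1}$ and $p_{d-1} = q_d$.

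The main point to be careful about is not to infer the parities of $d$ and $d'$ independently; it is precisely the hypothesis $d = d' = \diam(G) + 1$ (which by Lemma \ref{lem:length} is the only case that requires separate treatment, since otherwise we are already done by Lemma \ref{lem:layers}) that ties the two parities together and makes the case split clean. Beyond this parity bookkeeping, no further reasoning about the intersection model is needed; the lemma is purely an assembly of the two halves of Lemma \ref{lem:P-Q-end}, so I do not expect any genuine obstacle.
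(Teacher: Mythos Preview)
Your proposal is correct and follows essentially the same approach as the paper: both apply Lemma~\ref{lem:P-Q-end} to identify $\{p_{d-1},p_d\}$ and $\{q_{d'-1},q_{d'}\}$ with $\{v_t,v_b\}$, use the hypothesis $d=d'$ to match parities, and read off the conclusion. The paper's proof is just a terser version of your case split.
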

	\begin{proof}
		By Lemma \ref{lem:P-Q-end}, we know that, if $d$ is even, then $p_d = v_t$ and $p_{d-1} = v_b$, and conversely if $d$ is odd. For $Q$ we have that $q_d = v_b$ and $q_{d-1} = v_t$ if $d$ is even, and conversely if $d$ is odd. 
		We conclude that $p_d = q_{d-1}$ and $p_{d-1} = q_d$. 
	\end{proof}

	\begin{corollary} \label{cor:q-in-layer}
		If $d = d' = \diam(G) + 1$, it holds that $q_i \in L_{i+1}$, for every $1 \leq i < d$.
	\end{corollary}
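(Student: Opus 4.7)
The plan is to prove $q_i \in L_{i+1}$ by establishing that $d_G(p_1,q_i)=i$, via matching upper and lower bounds on the distance.

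First I would take care of the base step, showing $p_1 \neq q_1$ and $p_1 \sim q_1$, so that $q_1 \in L_2$. If $p_1=q_1$, then this vertex has both the smallest top and smallest bottom, so every other segment $u$ satisfies $t(u)>t(p_1)$ and $b(u)>b(p_1)$, meaning $u \succ p_1$; hence $p_1$ would be isolated, contradicting connectivity with $|V_G|\geq 4$. Since $p_1 \neq q_1$, the extremality of their endpoints forces $t(p_1)<t(q_1)$ and $b(q_1)<b(p_1)$, so the two segments cross and $p_1 \sim q_1$.

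For the upper bound for general $i$, I would use the path $p_1,q_1,q_2,\ldots,q_i$, which has length $i$ and exists because $Q$ is a path in $G$ starting at $q_1$. This immediately gives $d_G(p_1,q_i)\leq i$. For the lower bound, I would invoke Lemma~\ref{lem:P-Q-same-end}: since $d=d'=\diam(G)+1$, we have $p_d=q_{d-1}$. Then by the triangle inequality and the fact that $P$ and $Q$ are shortest paths (Corollary~\ref{cor:P-Q-shortest}),
\[
d-1=d_G(p_1,p_d)\leq d_G(p_1,q_i)+d_G(q_i,q_{d-1})=d_G(p_1,q_i)+(d-1-i),
\]
so $d_G(p_1,q_i)\geq i$.

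Combining both bounds yields $d_G(p_1,q_i)=i$, which is exactly the condition $q_i\in L_{i+1}$ under the paper's BFS-layer convention $L_1=\{p_1\}$. No step is particularly hard; the only subtle point is the initial argument that $p_1\neq q_1$, which crucially uses the assumption that the graph is connected with at least four vertices.
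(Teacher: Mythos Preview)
Your proof is correct and follows essentially the same approach as the paper: both anchor $q_1$ in $L_2$ via $p_1\sim q_1$, anchor $q_{d-1}$ in $L_d$ via Lemma~\ref{lem:P-Q-same-end}, and then use that $Q$ is a shortest path to pin down the intermediate layers. The paper phrases the last step more tersely (the path $q_1,\ldots,q_{d-1}$ has exactly the right length to traverse from $L_2$ to $L_d$, so each $q_i$ lands in $L_{i+1}$), whereas you unpack it via the triangle inequality; you also justify $p_1\sim q_1$ explicitly, which the paper simply asserts.
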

	\begin{proof}
		It is clear that $q_1 \in L_2$, since $q_1 \sim p_1$. By Lemma \ref{lem:P-Q-same-end}, we know that $q_{d-1} = p_d$, so $q_{d-1} \in L_{d}$. Since $q_1, q_2, \ldots, q_{d-1}$ is a path of length $d-1$, it follows that $q_i \in L_{i+1}$. 
	\end{proof}

	\begin{lemma} \label{lem:v-intersects-pd}
		If $d = d' = \diam(G) + 1$, then for every $v \in L_{d}$, if $v \nsim q_{d-2}$, then $v\sim q_{d-1}$.
	\end{lemma}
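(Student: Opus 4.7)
The plan is to avoid a parity case-analysis on the intersection model and argue purely via BFS distances. The starting observation is that $q_1 \sim p_1$: Corollary~\ref{cor:q-in-layer} places $q_1$ in $L_2 \subseteq N(p_1)$. Hence for every vertex $w$ the triangle inequality gives $d(p_1, w) \leq 1 + d(q_1, w)$. Applying this to $v$, together with $d(p_1, v) = d - 1$ (as $v \in L_d$), yields $d(q_1, v) \geq d - 2$, so $v$ lies in $M_{d-1}$ or $M_d$ (these being the only remaining possibilities, since the BFS from $q_1$ has exactly $d' = d$ layers).

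Now I would invoke Lemma~\ref{lem:neighbourhood} for the $Q$-side, which gives $M_{i+1} \subseteq N(q_i)$ for every $i$. The hypothesis $v \nsim q_{d-2}$ rules out $v \in M_{d-1}$, since otherwise $v \in M_{d-1} \subseteq N(q_{d-2})$. Therefore $v \in M_d \subseteq N(q_{d-1})$, and we conclude $v \sim q_{d-1}$.

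I do not expect a real obstacle: the proof is a two-sided sandwich of $d(q_1, v)$, with the lower bound coming from the triangle inequality through $p_1$ and the upper bound from the neighborhood lemma applied to $Q$. Every ingredient — the position of $q_1$ in the $L$-layering, the fact that $d' = d$ makes $M_d$ the last BFS layer, and Lemma~\ref{lem:neighbourhood} applied symmetrically to $P$ and $Q$ — has already been established in the preceding pages. A direct combinatorial verification via the intersection model is also feasible (comparing $t(v),b(v)$ against $t(v_t),b(v_t)$ and $t(q_{d-2}),b(q_{d-2})$), but it would require separate treatment of the two parities of $d$ and is considerably less clean than the distance argument sketched above.
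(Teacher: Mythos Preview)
Your proof is correct and genuinely different from the paper's. The paper argues directly in the intersection model: from $v\in L_d$ it gets $v\sim p_{d-1}=q_d$ (via Lemma~\ref{lem:P-Q-same-end}), then compares $t(v),b(v)$ against the extremal endpoints $v_t,v_b$ identified in Lemma~\ref{lem:P-Q-end}, and finishes with an inequality chain that needs the parity of $d$ to be fixed, yielding the ``analogous'' split you anticipated. Your argument sidesteps the model entirely: you pin $d(q_1,v)$ between $d-2$ (triangle inequality through $p_1$) and $d-1$ (since the $q_1$-BFS has $d'=d$ layers), and then Lemma~\ref{lem:neighbourhood} for $Q$ converts the layer membership into the desired adjacency. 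This is shorter, parity-free, and reuses only facts already in place (Corollary~\ref{cor:q-in-layer} for $q_1\in L_2$, Lemma~\ref{lem:neighbourhood} on the $M$-side, and the established layer count $d'$ of the $q_1$-BFS). One small wording quibble in your closing paragraph: the upper bound $d(q_1,v)\le d-1$ comes from the number of $M$-layers, not from Lemma~\ref{lem:neighbourhood}; the latter is what turns $v\in M_d$ into $v\sim q_{d-1}$ at the end.
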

	\begin{proof}
		Since $v \in L_d$, we know that $v \sim p_{d-1}$, so by Lemma \ref{lem:P-Q-same-end}, $v \sim q_{d}$. (See Figure \ref{fig:v-intersects-pd}.)
		Assume that $d$ is even. Then $q_{d} = v_b$ by Lemma~\ref{lem:P-Q-end}. It follows that $b(v)< b(v_b)$ and $t(v) > t(v_b)$. If $v \nsim q_{d-2}$, we know that $b(v) > b(q_{d-2}) > b(q_{d-1})$. Since $q_{d-1}= v_t$ by Lemma~\ref{lem:P-Q-end}, we know that $t(v) < t(q_{d-1})$. So, $v$ intersects $q_{d-1}$. 
		
		The case that $d$ is odd is analogous. 
	\end{proof}
	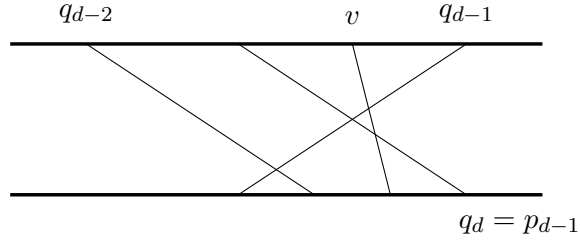
\begin{figure}
	\centering
	\begin{tikzpicture}
		\draw[line] (0,0) -- (7,0);
		\draw[line] (0,2) -- (7,2);
		
		\draw[segment] (1,2) -- (4,0);
		\draw[segment] (6,2) -- (3,0);
		\draw[segment] (3,2) -- (6,0);
		\draw[segment] (4.5,2) -- (5,0);
		
		\node[label = $q_{d-2}$] (k) at (1,2) {};
		\node[label = below:{$q_{d} = p_{d-1}$}] (k1) at (6.7,0) {};
		\node[label = $q_{d-1}$] (u) at (6,2) {};
		\node[label = $v$] (v) at (4.5,2) {};
	\end{tikzpicture}
	\caption{If a vertex $v$ in layer $L_{d}$ does not intersect $q_{d-2}$, then it intersects $q_{d-1}$. See Lemma \ref{lem:v-intersects-pd}.} \label{fig:v-intersects-pd}
\end{figure}
	
	Now we can prove for even more vertices $u$ and $v$ that there is a rainbow path from $u$ to $v$, see also Figure \ref{fig:even-more-rainbow-paths}. 
	
	\begin{lemma} \label{lem:even-more-rainbow-paths}
		For the following vertices $u$ and $v$, there is a rainbow path: 
		\begin{enumerate}
			\item[5a.] for $u \in L_2$, $u\nsim p_2$, $v \in L_d$, and $v \sim q_{d-2}$,
			\item[5b.] for $u \in L_2$, $u\nsim p_2$, $v \in L_d$, and $v \nsim q_{d-2}$, $u \sim q_2$.
		\end{enumerate}
	\end{lemma}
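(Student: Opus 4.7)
The plan is to handle both subcases by traversing the path $Q$ instead of $P$. Under the given coloring and using Corollary~\ref{cor:q-in-layer} (which places $q_i$ in $L_{i+1}$), vertex $q_i$ has color $i+1$ for $1 \le i \le d-3$, while $q_{d-2} \in L_{d-1}$ has color $1$ and $q_{d-1} \in L_d$ has color $2$. So the sequence of colors along $q_1,q_2,\ldots,q_{d-1}$ reads $2, 3, 4, \ldots, d-2, 1, 2$, which is almost a full rainbow except for a repetition between the very first and very last internal vertices. This is exactly the structure I want to exploit, since case 5a allows me to skip the last vertex $q_{d-1}$ and case 5b allows me to skip the first vertex $q_1$.

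For case 5a, I appeal to Lemma~\ref{lem:u-intersects-q1} to get $u \sim q_1$ or $u \sim q_2$, and use $v \sim q_{d-2}$ directly. This yields the path $u, q_j, q_{j+1}, \ldots, q_{d-2}, v$ with $j \in \{1,2\}$; the internal colors are $j{+}1, j{+}2, \ldots, d{-}2, 1$, which are pairwise distinct since $1$ does not appear in $\{j{+}1,\ldots,d{-}2\}$.

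For case 5b, I use that $u \sim q_2$ is given and that Lemma~\ref{lem:v-intersects-pd} supplies $v \sim q_{d-1}$. This yields the path $u, q_2, q_3, \ldots, q_{d-1}, v$, whose internal colors are $3, 4, \ldots, d-2, 1, 2$, again pairwise distinct.

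There is no real obstacle here: the key structural work is already packaged in the preceding lemmas (Corollary~\ref{cor:q-in-layer}, Lemma~\ref{lem:u-intersects-q1}, Lemma~\ref{lem:v-intersects-pd}); the only thing to double-check is that $d$ is large enough for these index ranges to make sense. Since Observation~\ref{obs:basic-diam-color} handles $\diam(G) \le 2$, we may assume $\diam(G) \ge 3$, hence $d = \diam(G)+1 \ge 4$, so the color sequences $j{+}1,\ldots,d{-}2,1$ and $3,\ldots,d{-}2,1,2$ are nontrivial and contain no duplicates.
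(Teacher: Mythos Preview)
Your proof is correct and follows essentially the same approach as the paper: both arguments route through $Q$, invoke Lemma~\ref{lem:u-intersects-q1} for case~5a and Lemma~\ref{lem:v-intersects-pd} for case~5b, and use Corollary~\ref{cor:q-in-layer} to place each $q_i$ in layer $L_{i+1}$ so that the internal vertices of the chosen $Q$-segment receive pairwise distinct colors. Your explicit listing of the color sequences and the sanity check that $d \ge 4$ are a little more detailed than the paper's version, but the underlying argument is identical.
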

	\begin{proof}
		\begin{itemize}
			\item[5a.] By Lemma \ref{lem:u-intersects-q1}, we know that $u \sim q_1$ or $u \sim q_2$. By Corollary \ref{cor:q-in-layer}, we know that $q_1, q_2, \ldots, q_{d-2}$ are in layers $L_2, L_3, \ldots, L_{d-1}$, each vertex in a different layer. So, either $u, q_1, q_2, \ldots, q_{d-2}, v$ or $u, q_2, q_3, \ldots, q_{d-2}, v$ is a rainbow path. 
			\item[5b.] By Lemma \ref{lem:v-intersects-pd}, we know that $v \sim q_{d-1}$. By Corollary \ref{cor:q-in-layer}, we know that $q_1, q_2, \ldots, q_{d-2}$ are in layers $L_2, L_3, \ldots, L_{d-1}$, each vertex in a different layer. The path $u, q_2, q_3, \ldots, q_{d-1}, v$ is a rainbow path. 
		\end{itemize}
	\end{proof}
	\begin{figure}
	\centering
	\begin{tikzpicture}
\ellips{0}{0}
\node[vertex, label=$p_1$] (p) at (-2,0) {};
\node (l) at (-2,-1.5) {$1$};
\ellipshigh{2}{0}
\node[vertex, label=$p_2$] (p) at (0,0.8) {};
\node (l) at (0,-1.5) {$2$};
\ellipshigh{4}{0} 
\node[vertex, label=$p_3$] (p) at (2,0.8) {};
\node (l) at (2,-1.5) {$3$};

\node (d) at (5,0) {$\ldots$};

\draw[fill=gray!30,opacity=0.5](6,0)circle[x radius=.5,y radius=1];
\ellipshigh{8}{0}
\ellipshigh{10}{0}
\node[vertex, label=$p_{d-2}$] (p) at (6,0.8) {};
\node (l) at (6,-1.5) {$d-2$};
\node[vertex, label={$p_{d-1}$}] (p) at (8,0.8) {};
\node (l) at (8,-1.5) {$1$};
\node[vertex, label=right:{$p_d = q_{d-1}$}] (p) at (10,0.8) {};
\node (l) at (10,-1.5) {$2$};

\node[vertex, label=$q_1$] (q1) at (0,0) {};
\node[vertex, label=below:$q_2$] (q2) at (2,0) {}; 
\node[vertex, label=below:$q_{d-3}$] (q3) at (6,0) {};
\node[vertex, label=below:$q_{d-2}$] (q) at (8,0) {}; 

\node[vertex, label=below:$u$] (u) at (0,-.5) {};
\node[vertex, label=below:$v$] (v) at (10,-.5) {}; 
\end{tikzpicture}
\caption{The numbers indicate the colors of the layers. If $u\sim q_1$ and $v \sim q_{d-1}$, the path $u, q_1, p_2, \ldots, q_{d-2}, v$ is a rainbow path. If $u \sim q_2$ and $v \nsim q_{d-2}$, the path $u, q_2, p_3, \ldots, q_{d-1}, v$ is a rainbow path. See Lemma \ref{lem:even-more-rainbow-paths}} \label{fig:even-more-rainbow-paths}
\end{figure}
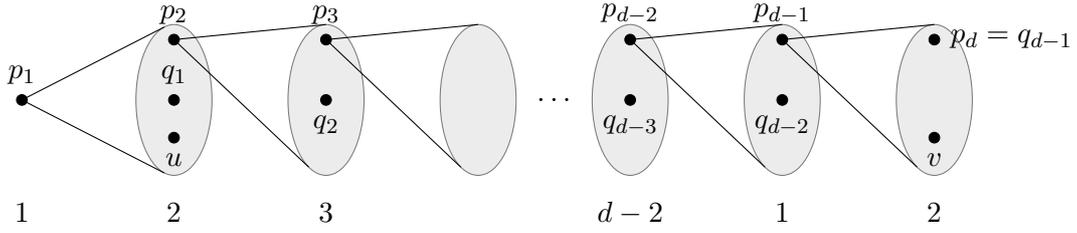

	Now there is still one case of vertices $u$ and $v$ for which we did not prove yet that there is a rainbow path. Namely:
	\begin{itemize}
		\item[5c.]  for $u \in L_2$, $u\nsim p_2$, $v \in L_d$, and $v \nsim q_{d-2}$, $u \nsim q_2$.
	\end{itemize}	
	For this last case we will show that either $X_{u,v}$ or $Y_{u,v}$ is a rainbow path.

	\begin{lemma} \label{lem:u-left-of-p2}
		If $u \sim p_1$, $u\nsim p_2$, $u\nsim q_2$, then $u \prec p_2$. 
	\end{lemma}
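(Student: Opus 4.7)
The plan is to directly unwind the definition of $p_2$. Since $u\sim p_1$ and $p_1$ was defined to minimize $t(\cdot)$ over $V_G$, we have $t(u)>t(p_1)$, and therefore $u$ lies in the pool of candidates from which $p_2$ was picked (namely, the neighbor of $p_1$ with largest top endpoint). This immediately yields $t(u)\le t(p_2)$.

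Next, I would argue that the inequality $t(u)<t(p_2)$ is strict, which reduces to $u\ne p_2$. Here is where the hypothesis $u\nsim q_2$ enters: this lemma is applied in the case $d=d'=\diam(G)+1$, so by Corollary~\ref{cor:q-in-layer} we have $q_2\in L_3$, and by Lemma~\ref{lem:neighbourhood} every vertex of $L_3$ is adjacent to $p_2$, hence $q_2\sim p_2$. The hypothesis $u\nsim q_2$ therefore forces $u\ne p_2$, and since endpoints in the intersection model can be taken pairwise distinct, $t(u)\le t(p_2)$ upgrades to $t(u)<t(p_2)$.

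Finally, I would combine the strict inequality $t(u)<t(p_2)$ with $u\nsim p_2$ via the four-way classification of two distinct segments recalled in the preliminaries. The configuration $t(u)<t(p_2)$ together with $b(u)>b(p_2)$ would place $u$ and $p_2$ in the crossing configuration, contradicting $u\nsim p_2$. The only remaining possibility is $t(u)<t(p_2)$ together with $b(u)<b(p_2)$, i.e.\ $u\prec p_2$. I do not foresee any real obstacle here: the proof is essentially an unwinding of the maximality in the definition of $p_2$ and the standard case analysis for two segments in a permutation intersection model, with the sole subtlety being the use of $u\nsim q_2$ to exclude the degenerate possibility $u=p_2$.
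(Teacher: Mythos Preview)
Your main observation---that $u$ lies in the candidate pool from which $p_2$ was chosen, hence $t(u)\le t(p_2)$---is correct and makes the core of the argument more direct than the paper's. The paper instead proceeds by contradiction: assuming $u\succ p_2$, it chains $t(u)>t(p_2)\ge t(q_1)>t(q_2)$ (maximality of $p_2$ applied to the candidate $q_1$, together with Equation~\eqref{eq:y-even}) and $b(u)<b(p_1)\le b(q_2)$ (from $u\sim p_1$ and the maximality of $q_2$ applied to the candidate $p_1$) to force $u\sim q_2$, contradicting the hypothesis. Your route sidesteps the detour through $q_1$.

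The one weakness is your treatment of $u\ne p_2$: you appeal to Corollary~\ref{cor:q-in-layer}, which carries the extra hypothesis $d=d'=\diam(G)+1$ not present in the lemma statement. You are right that the lemma is only ever invoked under that hypothesis, but as a proof of the lemma as stated this is a gap. It is easily repaired without the extra assumption, essentially by specializing the paper's own inequalities to $p_2$ itself: since $q_1\sim p_1$, the maximality of $p_2$ gives $t(p_2)\ge t(q_1)>t(q_2)$, and since $p_1\sim q_1$, the maximality of $q_2$ gives $b(q_2)\ge b(p_1)>b(p_2)$; hence $p_2\sim q_2$, so the hypothesis $u\nsim q_2$ already forces $u\ne p_2$.
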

	\begin{proof}
		Since $u \nsim p_2$, we know that either $u \prec p_2$ or $u \succ p_2$. Suppose that $u\succ p_2$. Then $t(u) > t(p_2)$. By the definition of $p_2$, we know that $t(p_2) \geq t(q_1)$ and by Equation \eqref{eq:y-even} we see that $t(q_1) > t(q_2)$. Altogether, we see that $t(u) > t(q_2)$. By the definition of $q_2$, we know that $b(q_2) \geq b(p_1)$. Since $u\sim p_1$, it holds that $b(u) < b(p_1)$. Thus $b(u) < b(q_2)$. This yields a contradiction with the fact that $u \nsim q_2$. We conclude that $u \prec p_2$. 
	\end{proof}

	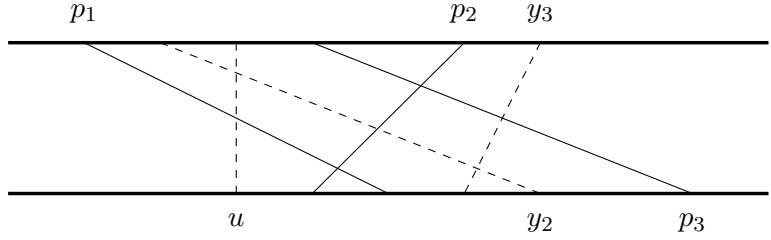
\begin{figure}
	\centering
	\begin{tikzpicture}
		\draw[line] (0,0) -- (10,0);
		\draw[line] (0,2) -- (10,2);
		
		\draw[segment] (1,2) -- (5,0);
		\draw[segment, dashed] (2,2) -- (7,0);
		\draw[segment, dashed] (3,2) -- (3,0);
		\draw[segment] (4,2) -- (9,0);
		\draw[segment] (4,0) -- (6,2);
		\draw[segment, dashed] (7,2) -- (6,0);
		
		\node[label = $p_{1}$] (k) at (1,2) {};
		\node[label = below:$u$] (u) at (3,0) {};
		\node[label = below:{$y_{2}$}] (k1) at (7,0) {};
		\node[label = $p_2$] (u) at (6,2) {};
		\node[label = below:$p_3$] (u) at (9,0) {};
		\node[label = $y_3$] (v) at (7,2) {};
	\end{tikzpicture}
	\caption{Vertex $y_2$ is in layer $L_3$. See Lemma \ref{lem:last-rainbow-paths}.} \label{fig:last-rainbow-paths}
\end{figure}
	\begin{lemma} \label{lem:last-rainbow-paths}
		For $u$ and $v$ satisfying case 5c, there is a rainbow path.
	\end{lemma}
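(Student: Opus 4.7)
The plan is to invoke Lemma~\ref{lem:X-Y-shortest} so that at least one of $X_{u,v}$ or $Y_{u,v}$ is a shortest $u,v$-path, and to show that whichever one is shortest is already a rainbow path. Without loss of generality I will handle the case where $Y_{u,v}$ is the shortest; the other case is symmetric under the top-bottom reflection of the intersection model, which swaps $P$ and $Q$ but leaves the hypotheses of case 5c invariant. Under this assumption $|Y_{u,v}| \leq \diam(G) = d-1$.

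The central claim is that $y_i \in L_{i+1}$ for every $i \geq 1$ before the path reaches $v$. Granting this, the internal vertices of $Y_{u,v}$ lie in layers $L_3,L_4,\ldots,L_{d-1}$, with possibly one additional vertex in $L_d$ if $|Y_{u,v}| = d-1$, and therefore receive colors $3,4,\ldots,d-2,1$ (and, in the longer case, $2$), which are pairwise distinct, so $Y_{u,v}$ is rainbow. For the base case $i=2$ I rule out $y_2 = p_1$ by a \emph{collapse} argument: if $y_2 = p_1$, then the greedy max-$t$/max-$b$ rule defining the remaining $y_j$'s coincides with the rule defining $P$, so $y_j = p_{j-1}$ for every $j \geq 2$; since $v$ is adjacent only to $p_{d-1}$ among the $p_i$'s (as $v \in L_d$ and Lemma~\ref{lem:neighbourhood} forces $p_i \in L_i$), this forces $Y_{u,v} = u,p_1,p_2,\ldots,p_{d-1},v$ of length $d$, contradicting $|Y_{u,v}| \leq d-1$. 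Hence $y_2 \neq p_1$; maximality of $b(y_2)$ then gives $b(y_2) > b(p_1)$, whence $p_1 \prec y_2$, so $y_2 \nsim p_1$, and since $y_2 \sim u \in L_2$ we conclude $y_2 \in L_3$.

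The inductive step proceeds by assuming $y_{i-1} \in L_i$ and that $y_{i-1}$ satisfies the structural inequalities of Equation~\eqref{eq:neighbourhood} relative to $p_{i-1}$, and showing the analogous inequalities for $y_i$ relative to $p_i$, from which $y_i \in L_{i+1}$ follows. The collapse argument again excludes the greedy choice aligning $y_i$ with $p_i$, which would pin the remainder of $Y_{u,v}$ to the backbone $P$ and inflate its length past $d-1$; Equations~\eqref{eq:y-even} and~\eqref{eq:y-odd}, together with maximality in the greedy rule, then propagate the structural invariant in the same style as the proof of Lemma~\ref{lem:neighbourhood}. The main obstacle is exactly this inductive step: the shortest-path bound $|Y_{u,v}| \leq d-1$ must be reinvoked at each layer to rule out the collapse, while the structural inequalities relative to $p_i$ are carefully propagated. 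The case 5c hypotheses, in particular $u \prec p_2$ from Lemma~\ref{lem:u-left-of-p2} and $u \sim q_1$ from Lemma~\ref{lem:u-intersects-q1}, are what initialize the invariant at the base.
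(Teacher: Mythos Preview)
Your symmetry claim for the $X_{u,v}$ case is a genuine gap. The top--bottom reflection does swap $P\leftrightarrow Q$ and $X_{u,v}\leftrightarrow Y_{u,v}$, but it also swaps the BFS layers $L_i\leftrightarrow M_i$. The coloring $c$ is defined from the $L$-layers only, so the reflected argument would prove that $X_{u,v}$ is rainbow with respect to the \emph{$M$-layer} coloring, not $c$. Moreover, case~5c is \emph{not} invariant under this reflection: $u\in L_2$ becomes $u\in M_2$, and $v\nsim q_{d-2}$ becomes $v\nsim p_{d-2}$; neither of those is among the hypotheses you have. So the $X_{u,v}$ case is simply not handled.

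The paper treats the $X_{u,v}$ case by a different and much shorter argument: since $x_2$ is chosen with $t(x_2)>t(u)$ and $b(x_2)<b(u)$, and $u\sim p_1$ with $p_1$ leftmost on top, one gets $t(x_2)>t(p_1)$ and $b(x_2)<b(p_1)$, hence $x_2\sim p_1$ and $x_2\in L_2$. Now $X_{u,v}$ has at most $d$ vertices and must meet every layer $L_2,\ldots,L_d$; the only doubled layer is $L_2$, occupied by the endpoint $x_1=u$ and by $x_2$. Thus the internal vertices $x_2,\ldots,x_{a-1}$ lie in distinct layers among $L_2,\ldots,L_{d-1}$ and receive pairwise distinct colors. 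No induction along the path is needed here.

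Your treatment of the $Y_{u,v}$ case is on the right track and matches the paper's strategy: prove $y_i\in L_{i+1}$ for all internal $y_i$. Your base case is fine (indeed, ruling out $y_2=p_1$ by the collapse-to-$P$ argument and then using $p_1\prec y_2$ to get $y_2\notin L_2$ is a clean way to land in $L_3$). However, the inductive step as written is too vague, and the collapse argument does not do the work there. The paper's step is concrete: from $y_{k-1}\in L_k$ one gets $y_{k-1}\sim p_{k-1}$ by Lemma~\ref{lem:neighbourhood}; maximality of $y_k$ and of $p_k$ then yields $t(y_k)<t(p_k)$ and $b(y_k)>b(p_k)$ (for $k$ even), so $y_k\sim p_k$ and $y_k\in L_{k-1}\cup L_k\cup L_{k+1}$. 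The case $y_k\in L_{k-1}$ is excluded by the length bound, and $y_k\in L_k$ is excluded by showing it would force $y_k\sim p_{k-2}$. You should replace the hand-wave with this computation; in particular, no ``collapse'' is invoked beyond the base case.
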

	\begin{proof}
		We distinguish two cases, based on Lemma~\ref{lem:X-Y-shortest}: either $X_{u,v}$ is a shortest $u,v$-path or $Y_{u,v}$ is a shortest $u,v$-path.
		
		Suppose that $X_{u,v}$ is a shortest $u,v$-path. Notice that $X_{u,v}$ has at least one vertex in every layer $L_2, L_3, \ldots, L_d$. Since $X_{u,v}$ has length at most $d-1$, there is at most one layer which contains two vertices of $X_{u,v}$. It is clear that $x_1 = u$ is in layer $L_2$. We will show that $x_2$ is in layer $L_2$ as well. By definition of $x_2$, we have $t(x_2) > t(u)$ and $b(x_2) < b(u)$. Since $u \sim p_1$ and $p_1$ has the leftmost top end, we know that $t(u) > t(p_1)$ and $b(u) < b(p_1)$. We conclude that  $t(x_2) > t(p_1)$ and $b(x_2) < b(p_1)$, thus $x_2 \sim p_1$. So we see that $x_1$ and $x_2$ are both in layer $L_2$, so all internal vertices of $X_{u,v}$ are in different layers. So $X_{u,v}$ is a rainbow path.   		
		
		Suppose that $Y_{u,v}$ is a shortest $u,v$-path. Write $Y_{u,v} = u, y_2, y_3, \ldots, y_{\alpha-1}, v$. Then $\alpha = d$ or $\alpha = d-1$; note that $\alpha \leq \diam(G)+1 = d$ and that $d-1 \leq \alpha$ because $Y_{u,v}$ contains a vertex from every layer. We prove by induction that $y_i \in L_{i+1}$ for all $2 \leq i\leq \alpha - 1$. 
		
		Since $y_2$ and $p_1$ both intersect $u$, by the definition of $y_2$, it follows that $b(y_2) \geq b(p_1)$. See Figure \ref{fig:last-rainbow-paths}. If $y_2 = p_1$, then $y_{u,v} = u, p_1, p_2, \ldots, p_{d-1}, v$. Notice that the length of this path is $d = \diam(G) + 1$. This yields a contradiction with the fact that $Y_{u,v}$ is a shortest $u,v$-path. Hence, $y_2 \neq p_1$, and $b(y_2) > b(p_1)$. Since $p_1$ is the vertex with the leftmost top end, we see that $t(p_1) < t(y_2)$. Hence $y_2 \succ p_1$. Since $y_2$ does not intersect $p_1$, it follows that $y_2 \notin L_2$. 
		
		By the definition of $y_2$, we know that $t(y_2) < t(u)$. By Lemma \ref{lem:u-left-of-p2}, it holds that $t(u) < t(p_2)$, thus $t(y_2) < t(p_2)$. Moreover, $b(y_2) > b(p_1) > b(p_2)$ (by Equation \eqref{eq:x-even}). Hence, $y_2$ intersects $p_2$. We conclude that $y_2 \in L_3$. 
		
		Suppose that for all $i < k$, for some $k > 2$, it holds that $y_i \succ p_{i-1}$ and $y_i \in L_{i+1}$.  Now consider $y_k$. Suppose that $k$ is even. 
		By the induction hypothesis and Lemma \ref{lem:neighbourhood}, we know that $y_{k-1} \sim p_{k-1}$, since $y_{k-1} \in L_k$. By definition of $y_k$, it follows that $b(y_k) \geq b(p_{k-1})$. And by Equation \eqref{eq:x-even}, we know that $b(p_{k-1}) > b(p_{k})$, thus $b(y_k) > b(p_{k})$. Similarly, by the definition of $p_k$, we know that $t(p_k) \geq t(y_{k-1})$. And by Equation \eqref{eq:y-even}, we know that $t(y_k) < t(y_{k-1})$, hence $t(p_k) > t(y_k)$. We conclude that $p_k$ intersects $y_k$. It follows that $y_k$ is in layer $k-1$, $k$ or $k+1$. 
		
		Notice that if $y_k \in L_{k-1}$, then the length of $Y_{u,v}$ is at least $d= \diam(G) + 1$. This yields a contradiction with the fact that $Y_{u,v}$ is a shortest $u,v$-path. Thus $y_k \notin L_{k-1}$. 
		Suppose that $y_k \in L_{k}$. Then $y_k$ intersects $p_{k-1}$. We have seen that  $b(y_k) \geq b(p_{k-1})$, thus $t(y_k) < t(p_{k-1})$. By Equation \eqref{eq:x-odd}, we have $b(p_{k-1}) > b(p_{k-2})$, Thus $b(y_k) > b(p_{k-2})$. By Equation \eqref{eq:x-odd}, we also have that $t(p_{k-1}) < t(p_{k-2})$, thus $t(y_k) < t(p_{k-2})$. It follows that $y_k \sim p_{k-2}$. This yields a contradiction with the assumption that $y_k \in L_k$. 
		We conclude that $y_k \in L_{k+1}$.  
		
		The case for $k$ odd is analogous. 
		Since $y_i \in L_{i+1}$ for all internal vertices $y_i$ of $Y_{u,v}$, we conclude that $Y_{u,v}$ is a rainbow path. 
	\end{proof}
	
	\begin{theorem}[=Theorem \ref{thm:permutationIntro}]\label{thm:permutation}
		For every $n$-vertex permutation graph $G$, it holds that $\rvc(G) = \diam(G) - 1$. Moreover, we can compute an optimal rainbow vertex coloring in $O(n^2)$ time.
	\end{theorem}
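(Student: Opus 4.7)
The plan is to assemble the theorem from the preparation already done. First, dispose of the trivial cases: if $\diam(G) \leq 2$, then Observation~\ref{obs:basic-diam-color} gives $\rvc(G) = 1$, and I interpret $\diam(G)-1$ in the statement accordingly (the theorem is to be read for $\diam(G) \geq 3$, where the lower bound bites). For the lower bound $\rvc(G) \geq \diam(G)-1$, pick any two vertices $u,v$ with $d_G(u,v) = \diam(G)$; any shortest $u,v$-path has exactly $\diam(G)-1$ internal vertices, and in a rainbow vertex coloring these internal vertices must all receive distinct colors, so at least $\diam(G)-1$ colors are required.

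For the upper bound, I split on whether $d$ or $d'$ equals $\diam(G)$. By Lemma~\ref{lem:length}, $d,d' \in \{\diam(G),\diam(G)+1\}$. If $d = \diam(G)$ or $d' = \diam(G)$, Lemma~\ref{lem:layers} directly yields a rainbow vertex coloring with $\diam(G)-1$ colors, matching the lower bound. Otherwise $d = d' = \diam(G)+1$, and I use the modified BFS-layer coloring already proposed in the text (before Lemma~\ref{lem:almost-all-paths-rainbow}), which assigns color $i$ to $L_i$ for $1 \leq i \leq d-2$, color $1$ to $L_{d-1}$, and color $2$ to $L_d$. This uses $d-2 = \diam(G)-1$ colors. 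To verify it is a rainbow vertex coloring, take any two vertices $u,v$ and dispatch to the appropriate case: the four cases in Lemma~\ref{lem:almost-all-paths-rainbow} handle everything except the pairs described as case~5, which are further partitioned into cases 5a, 5b (Lemma~\ref{lem:even-more-rainbow-paths}) and 5c (Lemma~\ref{lem:last-rainbow-paths}). A short case-table arguing that these cases are exhaustive completes the correctness proof.

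For the running-time claim, I first obtain an intersection model in $O(n+m) = O(n^2)$ time via McConnell and Spinrad~\cite{McConnellS1999}. Constructing $p_1$ and the path $P$ requires, at each step, picking the neighbour of $p_{i-1}$ with the largest top (or bottom) endpoint; doing this naively costs $O(n)$ per step and $O(n^2)$ overall; the path $Q$ is built analogously. A BFS from $p_1$ (and from $q_1$) giving the layers $L_i$ (and $M_i$) runs in $O(n+m) = O(n^2)$ time. Deciding which of the two colorings to apply requires only comparing $d$ and $d'$ with $\diam(G)$, and once the correct layering is fixed the coloring itself is written out in $O(n)$ time. The total is thus $O(n^2)$.

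The step I expect to be the main obstacle, even in a sketch, is confirming that cases~1--5c above are genuinely exhaustive over all ordered pairs $(u,v)$: one has to verify that if $u \in L_i$, $v \in L_j$ with say $i \leq j$, then after symmetry either $u = p_1$, or $i,j \geq 3$, or $i = 2$ (and then the further split on $j = d$ vs.\ $j < d$, and on whether $u \sim p_2$, $u \sim q_2$, $v \sim q_{d-2}$, apply). Writing out this partition carefully, and then simply invoking the three lemmas, is what turns the scattered rainbow-path constructions into a proof of the theorem.
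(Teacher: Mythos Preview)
Your proposal is correct and follows essentially the same route as the paper: split on whether $d$ or $d'$ equals $\diam(G)$ via Lemma~\ref{lem:length}, apply Lemma~\ref{lem:layers} in the easy case, and in the case $d=d'=\diam(G)+1$ use the shifted BFS-layer coloring together with the exhaustive case analysis of Lemmas~\ref{lem:almost-all-paths-rainbow}, \ref{lem:even-more-rainbow-paths}, and~\ref{lem:last-rainbow-paths}. The only small omission is that you compare $d$ and $d'$ with $\diam(G)$ without saying how $\diam(G)$ itself is obtained; the paper invokes the $O(n^2)$ diameter algorithm of Mondal et al.~\cite{MondalPP2003} for this (knowing $d$ and $d'$ alone does not determine $\diam(G)$ when $d=d'$).
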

	\begin{proof}
		By Lemma~\ref{lem:length} we know that either $d = \diam(G)$ or $d = \diam(G) + 1$, and either $d' = \diam(G)$ or
		$d' = \diam(G) + 1$.
		If $d = \diam(G)$ or if $d' = \diam(G)$, we have seen a rainbow coloring of $G$ with $\diam(G) - 1$ colors in Lemma~\ref{lem:layers}. If both $d$ and $d'$ equal $\diam(G) + 1$, then we have seen a coloring of $G$ with $\diam(G) - 1$ colors. Lemmas~\ref{lem:almost-all-paths-rainbow}, \ref{lem:even-more-rainbow-paths} and~\ref{lem:last-rainbow-paths} show that this coloring is indeed a rainbow coloring. We conclude that $\rvc(G) = \diam(G) -1$.
		
		Assume that we are given a permutation model of the graph and thus know the values $t(v)$ and $b(v)$ for each vertex $v \in V(G)$. Otherwise, a permutation model can be computed in linear time~\cite{McConnellS1999}.
		First, compute $d$ and $d'$. Following the description of $P$ and $Q$, this takes linear time. Computing the diameter of $G$ takes $O(n^2)$ time using the algorithm of Mondal et al.~\cite{MondalPP2003}.
		The colorings given by Lemma~\ref{lem:layers} and before Lemma~\ref{lem:almost-all-paths-rainbow} can each be computed in linear time through a breadth-first search. By the preceding arguments, an optimal rainbow vertex coloring can be computed in $O(n^2)$ time.
	\end{proof}

	\section{Split strongly chordal graphs}
In this section, we show that \krvc~and \ksrvc~are polynomial-time solvable on split strongly chordal graphs. We show that this result is tight in the sense that both problems become \NP-complete on split graphs if we forbid any finite family of suns.

In order to prove our next theorem we will use the following property of dually chordal graphs, a graph class that contains that of strongly chordal graphs~\cite{DUALLYCHORDAL}.

\begin{lemma}[{Brandst{\"a}dt et al.~\cite{DUALLYCHORDAL}}]\label{lem:duallychordal}
A graph $G$ is dually chordal if and only if $G$ has a spanning tree $T$ such that every maximal clique of $G$ induces a subtree of $T$.
\end{lemma}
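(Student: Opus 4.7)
The plan is to work with a standard equivalent definition of dually chordal graphs, namely through the existence of a \emph{maximum neighborhood ordering} (MNO): an ordering $v_1, \ldots, v_n$ of $V(G)$ such that for every $i$ there is a vertex $u_i \in N_{G_i}[v_i]$, where $G_i = G[\{v_i,\ldots,v_n\}]$, satisfying $N_{G_i}[v_i] \subseteq N_{G_i}[u_i]$. Taking this as the working definition, the task splits cleanly into two implications.

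For the forward direction, I would construct $T$ by adding, for each $i < n$, the edge $v_i u_i$ (choosing $u_i \neq v_i$ whenever possible, which in a connected graph can be arranged except trivially at $i = n$). This produces a spanning tree: every new edge connects $v_i$ to a vertex of strictly larger index, so no cycle is created, and the total of $n - 1$ edges yields connectivity. To show every maximal clique $K$ induces a subtree of $T$, let $v_j$ be the smallest-index vertex in $K$. The MNO property at $v_j$ gives $K \subseteq N[u_j]$, so $K \cup \{u_j\}$ is a clique, and maximality of $K$ forces $u_j \in K$. The tree edge $v_j u_j$ therefore lies in $T[K]$, and an induction on $|K|$ applied to $K \setminus \{v_j\}$ (which extends to a maximal clique of $G_{j+1}$) gives connectivity of $T[K]$.

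For the converse, assume such a $T$ exists. I would first upgrade the hypothesis to the statement that every closed neighborhood $N[v]$ induces a subtree of $T$: since $N[v]$ is the union of all maximal cliques containing $v$, and each such clique is a subtree of $T$ all containing the common vertex $v$, the union is again a subtree. From this I would extract an MNO inductively: pick a leaf $w$ of the subtree of $T$ spanning the currently remaining vertices, and declare its unique $T$-neighbor $p$ in that subtree to be its maximum neighbor. The subtree property of $N[w]$, combined with $w$ being a leaf in the remaining pruned tree, forces every vertex of $N_{G_i}[w] \setminus \{w\}$ to lie on the $p$-side of the edge $wp$; the subtree property of $N[p]$ then puts all of them inside $N[p]$, which is exactly the MNO condition.

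The main obstacle is the bookkeeping in the forward induction: one must verify that $K \setminus \{v_j\}$ really does extend to a maximal clique of $G_{j+1}$, and that the restriction of $T$ to $V(G_{j+1})$ still witnesses the subtree hypothesis so that the induction hypothesis can be reapplied. The converse is comparatively routine once the closed-neighborhood subtree observation is in place, ultimately leaning on the classical Helly property for subtrees of a tree.
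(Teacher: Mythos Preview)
The paper does not prove this lemma; it is quoted from Brandst\"adt et al.\ and used as a black box, so there is no paper proof to compare against and your proposal has to be assessed on its own.

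Your converse direction is essentially correct, modulo one slip. To show that a leaf $w$ with tree-neighbour $p$ satisfies $N[w] \subseteq N[p]$, the argument should invoke the subtree property of $N[x]$ for each $x \in N[w]\setminus\{w\}$, not of $N[p]$: since $N[x]$ is a subtree of $T$ containing both $x$ and $w$, and $w$ is a leaf of $T$, the $T$-path from $x$ to $w$ must end with the edge $pw$, whence $p \in N[x]$, i.e.\ $x \in N[p]$. With that fix (and the routine check that the subtree hypothesis persists in $T-w$), the induction goes through.

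The forward direction, however, has a gap that is more than bookkeeping: the tree obtained by joining each $v_i$ to an \emph{arbitrarily chosen} maximum neighbour $u_i$ need not have the required property at all. Take $V=\{v_1,u,a,b,c\}$ with edge set
\[
\{v_1u,\; v_1a,\; v_1b,\; ua,\; ub,\; ab,\; ac,\; bc\}.
\]
Then $v_1,u,b,c,a$ is a maximum neighborhood ordering, and the choices $u_1=u$, $u_2=b$, $u_3=c$, $u_4=a$ are all legitimate, producing the path $T\colon v_1\text{--}u\text{--}b\text{--}c\text{--}a$. But the maximal clique $\{v_1,u,a,b\}$ does \emph{not} induce a subtree of this $T$: the vertex $a$ is isolated in $T[\{v_1,u,a,b\}]$, since its sole $T$-neighbour is $c$. (A good spanning tree does exist here---choosing $u_3=a$ instead already works---but nothing in your procedure forces the right choice.) So the issue is not merely whether $K\setminus\{v_j\}$ is maximal in $G_{j+1}$; the tree itself may be wrong from the start, and no amount of inductive bookkeeping will repair it. You need either to constrain the selection of the $u_i$ or to construct $T$ by a different route, for instance via the hypertree characterisation of the clique hypergraph.
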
 

We prove a tree with a stronger property exists in split strongly chordal graphs.

\begin{lemma} \label{lem:split-spanning}
Let $G=(V,E)$ be a connected split strongly chordal graph, with $V=K\cup S$, where $K$ is a clique and $S$ is an independent set. Then $G$ has a spanning tree $T$ such that every maximal clique of $G$ induces a subtree of $T$ and every vertex of $S$ is a leaf of $T$.
\end{lemma}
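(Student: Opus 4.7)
The plan is to reduce to the case where $K$ is a maximal clique of $G$ and then invoke Lemma~\ref{lem:duallychordal}, observing that the resulting spanning tree already makes every vertex of $S$ a leaf. To carry out the reduction, I set $U = \{v \in S : N(v) = K\}$ (the ``universal'' $S$-vertices) and consider $G' = G - U$ with the partition $K \cup (S \setminus U)$. Since strong chordality is hereditary, $G'$ remains split strongly chordal, and it is connected because every $v \in S \setminus U$ still has at least one neighbor in $K$. Crucially, $K$ is a maximal clique of $G'$: by the definition of $U$, no $v \in S \setminus U$ has $N(v) = K$, so $K$ cannot be extended in $G'$.

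Since strongly chordal graphs are dually chordal, Lemma~\ref{lem:duallychordal} applies to $G'$ and yields a spanning tree $T''$ of $G'$ such that every maximal clique of $G'$, including $K$, induces a subtree of $T''$. The core observation is then that every $v \in S \setminus U$ must be a leaf of $T''$. Suppose for contradiction that such a $v$ had two $T''$-neighbors $u_1, u_2 \in N(v) \subseteq K$. Then the unique $T''$-path between $u_1$ and $u_2$ would be $u_1, v, u_2$. But $T''[K]$ is a connected subtree of $T''$, so it contains a path from $u_1$ to $u_2$ entirely inside $K$, which avoids $v \notin K$. This second $T''$-path contradicts the uniqueness of paths in the tree $T''$, so every $v \in S \setminus U$ has degree one in $T''$.

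To conclude, let $T_K = T''[K]$; by the preceding argument this is a tree on $K$. Define the spanning tree $T$ of $G$ to consist of $T_K$ together with one edge $v\phi(v)$ for each $v \in S$, where $\phi(v) \in N(v)$ is chosen arbitrarily. By construction every $v \in S$ is a leaf of $T$, so it remains only to verify the subtree condition on every maximal clique of $G$. If the maximal clique is $K$ (which happens exactly when $U = \emptyset$), it induces $T_K$. If it is $K \cup \{v\}$ for some $v \in U$, it induces $T_K$ with the leaf $v$ appended. If it is $N(v) \cup \{v\}$ for some $v \in S \setminus U$, it induces $T_K[N(v)] = T''[N(v)]$ with the leaf $v$ appended, and $T''[N(v)]$ is connected because it is obtained from the subtree $T''[N(v) \cup \{v\}]$ by removing the leaf $v$. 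The main subtlety in this plan is the handling of universal vertices, for which the path-uniqueness argument forcing $v$ to be a leaf would otherwise fail; this is precisely what the preliminary reduction to $G'$ accomplishes.
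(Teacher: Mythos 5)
Your proof is correct, but it takes a genuinely different route from the paper. The paper applies Lemma~\ref{lem:duallychordal} to $G$ itself and then \emph{repairs} the resulting tree: for any non-leaf $x\in S$ with tree-neighbors $u,v\in K$ it adds $uv$ and deletes $vx$, checks that this preserves the clique-subtree property (any maximal clique containing $v$ and $x$ also contains $u$, since $G[N(x)]$ is a clique) while strictly decreasing $\deg(x)$, and iterates until all of $S$ consists of leaves. You instead preprocess so that no repair is needed: after deleting the universal $S$-vertices, $K$ is a maximal clique of the reduced graph, and the subtree condition on $K$ together with uniqueness of paths in a tree already forces every remaining $S$-vertex to be a leaf of the tree given by Lemma~\ref{lem:duallychordal}; you then rebuild a spanning tree of $G$ by hanging all of $S$ as pendants off $T''[K]$ and verify the subtree condition clique by clique. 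Your correctly identified subtlety (universal $S$-vertices, for which the path-uniqueness argument fails) is exactly the obstruction that makes some reduction or repair necessary. What each approach buys: the paper's local-modification argument is generic and never needs to enumerate the maximal cliques of $G$; yours avoids the iteration and termination bookkeeping and gets the leaf property for free, at the price of the (easy, and essentially complete in your write-up) classification of the maximal cliques of a split graph --- the only step left slightly implicit is that $N[v]$ is also a maximal clique of the reduced graph $G'$, which is needed to call $T''[N(v)\cup\{v\}]$ a subtree, and which holds by the same one-line argument as in $G$.
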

\begin{proof}
Since $G$ is a strongly chordal graph (and thus a dually chordal graph~\cite{DUALLYCHORDAL}), by Theorem~\ref{lem:duallychordal}, $G$ has a spanning tree $T'$ such that the vertices of every maximal clique of $G$ induce a subtree of $T'$. We observe the following simple property.

We will now modify $T'$ in order to obtain another spanning tree of $G$ that has the same property and, additionally, is such that every vertex of~$S$ is a leaf.\\ 

\noindent {\bf Tree Modification.} Let $x\in S$ be such that $x$ is not a leaf of $T'$. Let $u,v\in K$ be two neighbors of $x$ in $T'$. Then
\begin{enumerate}
\item[(i)] add the edge $uv$ to $T'$;
\item[(ii)] delete the edge $vx$ from $T'$.
\end{enumerate}

Observe that the result from this operation is still a tree and it is still spanning.

\begin{claim}\label{claim:modissafe}
Let $T''$ be the tree obtained after the application of the Tree Modification to $T'$. Then the following holds:
\begin{enumerate}
\item $d_{T''}(x)<d_{T'}(x)$;
\item Every maximal clique of $G$ induces a subtree of $T''$.
\end{enumerate}
\end{claim}

\begin{claimproof}
It is easy to see that 1.\ holds, since the edge $vx$ was deleted in step (ii) and no other edge incident to $x$ was added. To see that 2.\ also holds note that since $u$ is adjacent to both $v$ and $x$ and since $G[N(x)]$ is a clique, every maximal clique $D$ of $G$ that contains $v$ and $x$ also contains $u$. Since $uv,ux\in E(T'')$, the vertices of $D$ still induce a connected subgraph of $T''$. 
\end{claimproof}

We now iteratively apply Tree modification to $T'$ on non-leaf vertices of $S$. Indeed, if $x \in S$ is not a leaf of $T'$, then it has at least two neighbors in $T'$. 
Since $T'$ is a spanning tree, any neighbors of $x$ in $T'$ must be adjacent to $x$ in $G$, and thus are in $K$. 
Hence, Tree Modification can be applied. By Claim~\ref{claim:modissafe}, we can safely apply Tree Modification repeatedly to $T'$ until we obtain a tree $T$ in which all the vertices of $S$ have degree~$1$ in $T$.
\end{proof}

\begin{theorem}[=Theorem \ref{thm:split-strongly-chordal-Intro}] \label{thm:split-strongly-chordal}
If $G$ is a split strongly chordal graph with $\ell$ cut vertices, then $\rvc(G)=\srvc(G)=\max\{\diam(G)-1,\ell\}$.
\end{theorem}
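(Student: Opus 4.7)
The lower bound splits into two parts. The inequality $\rvc(G)\ge \diam(G)-1$ is the standard observation that the $\diam(G)-1$ internal vertices of a diameter-realizing shortest path need pairwise distinct colors. For $\rvc(G)\ge \ell$, I would observe that in a connected split graph the cut vertices are exactly the vertices $v\in K$ admitting a \emph{private pendant} $s_v\in S$ with $N(s_v)=\{v\}$: indeed, removing $v\in K$ disconnects $G$ iff some $s\in S$ loses its only neighbour. For any two cut vertices $v_i,v_j$, every $s_{v_i}$-$s_{v_j}$ path traverses both $v_i$ and $v_j$ as internal vertices, forcing $c(v_i)\neq c(v_j)$ in any rainbow vertex coloring. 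Hence all $\ell$ cut vertices need pairwise distinct colors, and since $\srvc(G)\ge \rvc(G)$ the same lower bound applies to $\srvc(G)$.

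For the upper bound I would use that $\diam(G)\le 3$ for every connected split graph. If $\diam(G)\le 2$ then $\ell\le 1$ (two cut vertices would yield private pendants at distance $3$), so $\max\{\diam(G)-1,\ell\}\le 1$ and Observation~\ref{obs:basic-diam-color} already provides a strong rainbow vertex coloring. So assume $\diam(G)=3$ and set $k=\max\{2,\ell\}$. The only obstruction to strong rainbow vertex-connectedness then comes from pairs $s,s'\in S$ with $N(s)\cap N(s')=\emptyset$: every shortest $s$-$s'$ path has the form $s{-}v{-}v'{-}s'$ with $v\in N(s)$, $v'\in N(s')$, so it suffices to ensure that $c(N(s)\cup N(s'))$ contains at least two distinct colors.

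To produce such a coloring I would invoke Lemma~\ref{lem:split-spanning} to pick a spanning tree $T$ of $G$ in which $T[K]$ is a spanning subtree of $G[K]$ and every $N(s)$ ($s\in S$) induces a subtree of $T[K]$. Assign the cut vertices $v_1,\ldots,v_\ell$ the distinct colors $1,\ldots,\ell$, and extend to the remaining vertices of $K$ by properly $2$-colouring each connected component $C$ of $T[K]\setminus\{v_1,\ldots,v_\ell\}$, choosing the bipartition of $C$ so that any non-cut vertex adjacent in $T[K]$ to a cut vertex $v_i$ with $i\in\{1,2\}$ gets color $3-i$; vertices of $S$ may be colored arbitrarily. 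With this coloring, the verification for a distance-$3$ pair $(s,s')$ reduces to a short case analysis: if both $N(s)$ and $N(s')$ are singletons, they are private pendants of distinct cut vertices and so have distinct colors; if $|N(s)|\ge 2$, then $N(s)$ contains an edge of $T[K]$ whose two endpoints, by construction, have distinct colors, producing two colors already in $c(N(s))$ and hence in the union.

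The main obstacle is arguing that the bipartition constraints above can always be met simultaneously, especially when $\ell=2$ and a single non-cut component of $T[K]\setminus\{v_1,v_2\}$ has neighbours of both $v_1$ and $v_2$: the two constraints ``color~$2$ on neighbours of $v_1$'' and ``color~$1$ on neighbours of $v_2$'' can be locally inconsistent. Here the subtree property guaranteed by Lemma~\ref{lem:split-spanning} is crucial: it restricts the arrangement of the subtrees $\{N(s):s\in S\}$ inside $T[K]$ sharply enough that, when a local bipartition conflict forces some edge of $T[K]$ to be monochromatic, that edge is not simultaneously contained in any disjoint subtree pair $(N(s),N(s'))$ that would violate the SRVC condition. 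A finer case split on $\ell$ and on the relative position of the cut vertices inside $T[K]$ should then turn this observation into a full proof.
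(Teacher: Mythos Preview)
Your overall strategy matches the paper's: both proofs use Lemma~\ref{lem:split-spanning} to obtain a spanning tree $\mathcal{T}$ of $G[K]$ in which every $N(s)$ ($s\in S$) is a subtree, properly $2$-colour $\mathcal{T}$, and then override the colours of the cut vertices with pairwise distinct labels. Your lower-bound argument is correct and your reduction of the upper bound to ``for every distance-$3$ pair $s,s'$, the set $c(N(s))\cup c(N(s'))$ has two colours'' is exactly the paper's reduction.

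The gap is precisely the one you flag in your last paragraph, and you do not close it. Fixing the cut-vertex colours \emph{first} and then asking each component of $\mathcal{T}\setminus\{v_1,\dots,v_\ell\}$ to respect the constraints ``neighbour of $v_1$ gets colour $2$'' and ``neighbour of $v_2$ gets colour $1$'' can genuinely fail: take $K=\{v_1,w,v_2\}$ with $\mathcal{T}$ the path $v_1\!-\!w\!-\!v_2$ and $v_1,v_2$ cut vertices; then $w$ is forced to both colours. Your sentence ``the subtree property restricts the arrangement sharply enough\ldots'' is pointing at the right phenomenon (any monochromatic edge you are forced to create is incident with a cut vertex, and hence any $N(s)$ that is monochromatic of size $\ge 2$ must equal that edge), but you neither specify which edge to leave monochromatic nor carry out the verification, and ``a finer case split should then turn this into a full proof'' is not a proof.

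The paper avoids the obstacle by reversing the order of operations. It first properly $2$-colours all of $\mathcal{T}$. If two cut vertices already receive different colours, it simply recolours the remaining $\ell-2$ cut vertices with fresh colours $3,\dots,\ell$; every $N(s)$ of size $\ge 2$ stays bichromatic because the underlying $2$-colouring of $\mathcal{T}$ is untouched on non-cut vertices. If all cut vertices receive the same colour, they are pairwise at even $\mathcal{T}$-distance; the paper then picks $c_1,c_2\in C$ with no cut vertex on the $c_1$--$c_2$ path in $\mathcal{T}$, lets $z$ be the neighbour of $c_1$ on that path, and flips the $2$-colouring on the $z$-side so that $\phi'(c_1)=\phi'(z)=1$ while $\phi'(c_2)=2$. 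This produces exactly one monochromatic edge $\{c_1,z\}$, and the subtree property forces any monochromatic $N(w)$ with $|N(w)|\ge 2$ to equal $\{c_1,z\}$; a short case analysis on whether $|N(u)|,|N(v)|$ are $1$ or $\ge 2$ then finishes the verification. Replacing your ``constrain-then-colour'' step by this ``colour-then-flip'' step is exactly what is needed to complete your argument.
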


\begin{proof}
Let $G=(V,E)$ be a split strongly chordal graph, with $V=K\cup S$, where $K$ is a clique and $S$ is an independent set.
Note that if $\diam(G)\leq 2$, we can (strong) rainbow color $G$ by assigning the same color to all the vertices. 
Notice that in this case $G$ has at most one cut vertex, thus $\rvc(G)=\srvc(G)=\max\{\diam(G)-1,\ell\}$.

Assume then that $\diam(G)=3$ (recall that if $G$ is a split graph, then $\diam(G)\leq 3$). 
By Lemma~\ref{lem:split-spanning}, $G$ has a spanning tree $T$ such that every maximal clique of $G$ induces a subtree of $T$ and every vertex of $S$ is a leaf of $T$.
Let $\mathcal{T}$ denote the subtree of $T$ induced by the vertices of $K$, that is, the subtree of $T$ obtained by the deletion of the leaves corresponding to vertices of $S$. Note that $\mathcal{T}$ is a tree with $V_{T} = K$.

\begin{claim}\label{claim:Tisgood}
For every $x\in S$, $N(x)$ induces a subtree of $\mathcal{T}$.
\end{claim}
\begin{claimproof}
Since $N[x]$ is a maximal clique of $G$, the vertices of $N[x]$ induce a subtree of $T$. Since $x$ is a leaf of $T$, we conclude that $N(x)$ indeed induces a subtree of $\mathcal{T}$.
\end{claimproof}

We will now use the tree~$\mathcal{T}$ to provide a (strong) rainbow coloring of $G$.

Suppose first that $G$ is 2-connected, that is, that no vertex of $S$ has degree one in $G$. Color the vertices of $K$ according to a proper 2-coloring of the vertices of $\mathcal{T}$, and give arbitrary colors to the vertices of $S$. Let $\phi$ be the coloring of $G$ obtained in this way. Note that $\phi$ is indeed a (strong) rainbow coloring of $G$. To see this, let $u,v\in V$ be such that $d_G(u,v)= 3$. Since $G$ is a split graph, we have that $u,v\in S$. Since $G$ is 2-connected, $|N(u)|\geq 2$ and $|N(v)|\geq 2$. Moreover, since $N(u)$ and $N(v)$ induce subtrees of $\mathcal{T}$, we know that these two sets are not monochromatic under $\phi$. Thus, there exist $x\in N(u)$ and $y\in N(v)$ such that $\phi(x)\neq \phi(y)$, which shows that $uxyv$ is a rainbow (shortest) path between $u$ and $v$.

We now consider the case in which $G$ has cut vertices. Let $C\subset V$ be the set of cut vertices of $G$. Consider a proper 2-coloring $\phi$ of $\mathcal{T}$. If there exist $c_1,c_2\in C$ such that $\phi(c_1)\neq\phi(c_2)$, then we can obtain a (strong) rainbow coloring for $G$ with $\ell$ colors by assigning distinct colors in the set $\{3,\ldots,\ell\}$ to the remaining cut vertices of $G$. Note that with this coloring of $\mathcal{T}$, it still holds that for every $w\in S$, if $|N(w)|>1$, then $N(w)$ is not monochromatic under $\phi$. Since all the cut vertices were assigned distinct colors, by the same argument used in the 2-connected case, this is indeed a (strong) rainbow coloring of $G$. 
Note that this reasoning also applies if $|C|=1$, so from now on we may assume $|C| \geq 2$.

If all the vertices of $C$ were assigned the same color, since $\phi$ was a proper 2-coloring of $\mathcal{T}$, we have that for every $x,y\in C$, $d_\mathcal{T}(x,y)\geq 2$. Let $c_1,c_2\in C$ be two cut vertices such that the unique path connecting $c_1$ and $c_2$ in $\mathcal{T}$ contains no other vertex of $C$. Let $z$ be the vertex adjacent to $c_1$ in this path. Note that $z\notin C$. We will consider the following coloring $\phi'$ of $\mathcal{T}$. Let $\phi'(c_1)=\phi'(z)=1$. Now we extend $\phi'$ by considering a proper 2-coloring of the subtree of $\mathcal{T}$ rooted in $c_1$ (resp.\ $z$) that assigns color $1$ to the vertex $c_1$ (resp.\ $z$). Note that now we have $\phi'(c_2)=2$. Finally, assign distinct colors from $\{3,\ldots,\ell\}$ to the vertices of $C\setminus\{c_1,c_2\}$. Note that, under this coloring, if there exists $w\in S$ such that $|N(w)|>1$ and $N(w)$ is monochromatic under $\phi'$, then $N(w)=\{c_1,z\}$. To obtain a (strong) rainbow coloring of $G$ with $\ell$ colors, we color the vertices of $K$ according to $\phi'$ and give arbitrary colors to the vertices of $S$. To see that this is indeed a (strong) rainbow coloring of $G$, let $u,v\in V$ be such that $d_G(u,v)= 3$. If $|N(u)|=|N(v)|=1$, the unique shortest path between $u$ and $v$ is a rainbow path, since all cut vertices of $G$ received distinct colors. Assume $|N(u)|\geq 2$. 

Recall that if there exists $w\in S$ such that $|N(w)|\geq 2$ and $N(w)$ is monochromatic under $\phi'$, then $N(w)=\{c_1,z\}$. Moreover, since $d_G(u,v)=3$, we have $N(u)\cap N(v)=\emptyset$. If $|N(v)|\geq 2$, then at most one among $N(u)$ and $N(v)$ is monochromatic. Thus there exists $x\in N(u)$ and $y\in N(v)$ such that $\phi'(x)\neq \phi'(y)$, and hence $uxyv$ is a rainbow (shortest) path between $u$ and $v$. To conclude, consider the case in which $|N(v)|=1$. Recall that $c_1$ is the only cut vertex such that $\phi'(c_1)=1$. Therefore, if $N(v)=\{c_1\}$, then $N(u)$ is \emph{not} monochromatic, which implies that we can find a vertex $x\in N(u)$ such that $\phi'(x)\neq 1$. Finally, if $N(v)=\{c\}$ with $c\neq c_1$, then we can again find $x\in N(u)$ such that $\phi'(x)\neq \phi'(c)$, since $\phi'(c)\neq 1$, $\phi'(c_1)=\phi'(z)=1$ and any monochromatic subtree of $\mathcal{T}$ induced by the open neighborhood of a vertex of $S$ is of the form $\{c_1,z\}$.  
\end{proof}

We now show that both \krvc~and \ksrvc~are \NP-complete if we only forbid a finite number of suns. In what follows, we make use of the same reduction of Heggernes et al.~\cite{MFCS2018} for split graphs. Their reduction is from {\sc Hypergraph Coloring}. Given a hypergraph $\mathcal{H} = (U, \mathcal{E})$, where $U=\{u_1,\ldots, u_n\}$, they construct a split graph $G=(K' \cup I', \; E)$, where $K'=K_1'\cup \dots \cup K_{n+1}'$ with $K_i':=\left\{ u^i_t\mid u_t\in U \right\}$, $I'=I_1'\cup \dots \cup I_{n+1}'$ with $I_i':= \{ x_e^i \mid e \in \mathcal{E} \}$ and $E:=   \{ xy \mid x,y\in K'\}$ $\cup$ $\{ u^i_tx_e^i \mid u_t \in U, e \in \calE, u_t \in e, 1\leq i \leq n+1\}$. 

In our case, we start with an instance of {\sc Graph Coloring} restricted to $(C_3,\ldots,C_p)$-free graphs, a problem that was shown to be \NP-complete by Kr{\' a}l' et al~\cite{KRAL} (see also~\cite{GOLOVACHSURVEY}) for every fixed $k\geq 3$. We can see an input $G=(V,E)$ of {\sc Graph Coloring} as a hypergraph in which every hyperedge has size two. We perform the same construction and obtain a graph $G'$. The fact that $G'$ is a yes instance to \krvc~(and \ksrvc ) if and only if $G$ is a yes instance to {\sc Graph Coloring} follows from \cite[Lemma 11]{MFCS2018}. We now show that $G'$ is a split graph with no induced $t$-sun, with $3\leq t\leq p$.  

\begin{lemma}\label{lem:sunfree}
$G'$ is a split $(S_3,\ldots,S_p)$-free graph.
\end{lemma}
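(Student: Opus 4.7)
My plan is to observe first that $G'$ is split directly from the construction, and then to rule out induced $t$-suns for $3 \leq t \leq p$ by contradiction. That $G'$ is a split graph is immediate: by construction $K'$ is a clique, and the only edges incident to $I'$ go between $I'$ and $K'$, so $I'$ is independent. The content of the lemma lies in the sun-freeness.

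I would suppose that $G'$ contains an induced $t$-sun $S_t$ with clique $c_1,\ldots,c_t$ and independent set $v_1,\ldots,v_t$ (with $v_i\sim c_i, c_{i+1}$, indices modulo $t$), for some $3 \leq t \leq p$, and first pin down where the sun vertices sit in the bipartition. Since the $c_i$ form a clique of size $t\geq 3$ and $I'$ is independent, at most one $c_i$ can lie in $I'$. If some $c_j$ were in $I'$, then both of its sun-neighbors $v_{j-1}$ and $v_j$ would have to lie in $K'$ (they are adjacent in $G'$ to $c_j\in I'$, and $I'$ is independent), but then $v_{j-1}v_j$ would itself be an edge of $G'$, violating the independence of $\{v_1,\ldots,v_t\}$. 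Hence every $c_i$ lies in $K'$. On the other hand, if some $v_i$ lay in $K'$, it would be adjacent to every other vertex of $K'$ and hence to all $t$ of the $c_j$'s, contradicting the fact that $\{c_i,c_{i+1}\}$ is its only neighborhood inside the sun-clique once $t\geq 3$; so every $v_i$ lies in $I'$.

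Next I would decode the indices. Writing $c_i = u_{s_i}^{b_i}$ and $v_i = x_{e_i}^{a_i}$, the adjacency $v_i \sim c_i$ forces $a_i = b_i$ and $u_{s_i}\in e_i$, while $v_i \sim c_{i+1}$ forces $a_i = b_{i+1}$ and $u_{s_{i+1}}\in e_i$. Reading this around the sun gives $b_1=b_2=\cdots=b_t=:b$, so all $c_i$ live in a single copy $K'_b$, and in particular the indices $s_1,\ldots,s_t$ are pairwise distinct. Because we started from a graph, every hyperedge has size two, so $e_i = \{u_{s_i}, u_{s_{i+1}}\}$ is itself an edge of $G$. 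Therefore $u_{s_1}u_{s_2}\cdots u_{s_t}u_{s_1}$ is a cycle of length $t$ in $G$, and since a shortest cycle of a graph is always induced, $G$ would contain an induced $C_{t'}$ with $3\leq t'\leq t\leq p$, contradicting the hypothesis that $G$ is $(C_3,\ldots,C_p)$-free.

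The only step that is not essentially mechanical is showing that the entire sun-clique must live in $K'$; after that, the rest is index-chasing through the construction, with the decisive use of $|e_i|=2$ to convert the sun into an honest cycle of $G$ and close the argument.
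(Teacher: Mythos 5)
Your proof is correct and follows essentially the same route as the paper: place the sun's clique in a single copy $K_p'$ and its independent set in $I_p'$, then use the fact that hyperedges have size two to recover a cycle of length $t\leq p$ in $G$, contradicting $(C_3,\ldots,C_p)$-freeness. The only difference is cosmetic: where the paper locates the sun in the bipartition via the observation that every vertex of $I'$ has degree two, you do a short case analysis, which is equally valid.
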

\begin{proof}
Let $G=(V,E)$ be the $(C_3,\ldots,C_p)$-free graph that originated $G'$. It is easy to see that $G'$ is a split graph with clique $K'$ and independent set $I'$. Suppose $G'$ contains an induced $t$-sun, for some $t\in\{3,\ldots,k\}$,  on vertex set $A=\{v_1,\ldots,v_t,w_1,\ldots,w_t\}$, where $G'[\{v_1,\ldots,v_t\}]$ is a clique, $G'[\{w_1,\ldots,w_t\}]$ is an independent set, $w_i$ is adjacent to $v_i$ and $v_{i+1}$ for $1\leq i\leq t-1$ and $w_t$ is adjacent to $v_t$ and $v_1$. Since every vertex of $I'$ has degree two and $K'$ is a clique, we necessarily have that $\{v_1,\ldots,v_t\}\subset K'$ and $\{w_1,\ldots,w_t\}\subset I'$. Moreover, note that, by construction, for every $j$, $N_{G'}(I_j')\subseteq K_j'$. Hence, there exists $p\in \{1,\ldots,n+1\}$ such that $\{v_1,\ldots,v_t\}\subset K_p'$ and $\{w_1,\ldots,w_t\}\subset I_p'$. For simplicity assume $v_i=u^p_i$. Since $w_i$ is adjacent to $u^p_i$ and $u^p_{i+1}$ for $1\leq i<t$, we have that $u_iu_{i+1}\in E(G)$ for $1\leq i<t$. Analogously, it holds that $u_tu_1 \in E(G)$ as well. This implies that $u_1u_2\ldots u_tu_1$ is a cycle in $G$ and therefore $G$ contains an induced cycle of length at most $t$. This is a contradiction since $t\leq p$ and $G$ is $(C_3,\ldots,C_p)$-free.
\end{proof}

From Lemma~\ref{lem:sunfree} and Lemma 11 of~\cite{MFCS2018}, we obtain the following theorem.

\begin{theorem}
For any fixed $p\geq 3$, \krvc~and \ksrvc~are \NP-complete on split $(S_3,\ldots,S_p)$-free graphs for any fixed $p \geq 3$.
\end{theorem}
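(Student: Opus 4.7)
The plan is to assemble the two ingredients developed in the paragraphs immediately before the theorem. Start from an instance $G=(V,E)$ of \col\ restricted to $(C_3,\ldots,C_p)$-free graphs; by Kr{\'a}l' et al., this problem is \NP-complete for every fixed $k\geq 3$. View $G$ as a hypergraph whose hyperedges all have size two, and apply the Heggernes et al.\ construction recalled before the theorem to obtain a split graph $G'$ in polynomial time. The plan is then to verify only two points, both already in hand, and to wrap them together.

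First, I would appeal to Lemma~\ref{lem:sunfree} to see that $G'$ lies in the target class: whenever the input $G$ is $(C_3,\ldots,C_p)$-free, the produced split graph $G'$ is $(S_3,\ldots,S_p)$-free. Second, I would invoke Lemma~11 of~\cite{MFCS2018}, which asserts that the very same construction yields a \yes-instance of \krvc\ (and likewise of \ksrvc) with $k$ colors if and only if the original hypergraph is $k$-colorable. Specialized to our hypergraph of size-two hyperedges, this is exactly the condition that $G$ admits a proper $k$-coloring. Thus the map $G\mapsto(G',k)$ is a polynomial-time reduction from \col\ on $(C_3,\ldots,C_p)$-free graphs to \krvc\ on split $(S_3,\ldots,S_p)$-free graphs, and the same reduction works verbatim for \ksrvc.

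Membership in \NP\ is routine: given a candidate coloring, one can verify in polynomial time (via all-pairs shortest paths) that every pair of vertices is connected by a rainbow, respectively rainbow shortest, path. Combining this with the hardness reduction yields \NP-completeness of both \krvc\ and \ksrvc\ on split $(S_3,\ldots,S_p)$-free graphs for every fixed $p\geq 3$, as required.

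There is, frankly, no hard step remaining: Lemma~\ref{lem:sunfree} handled the structural part, and Lemma~11 of~\cite{MFCS2018} handled the equivalence of the decision questions. The only mild care needed is to record that the hardness of \col\ on $(C_3,\ldots,C_p)$-free graphs holds for every fixed number of colors $k\geq 3$ (which is exactly the statement of Kr{\'a}l' et al.), so that the reduction produces instances of \krvc\ and \ksrvc\ with a fixed number of colors as well.
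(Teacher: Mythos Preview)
Your proposal is correct and follows exactly the approach of the paper: the theorem is stated immediately after Lemma~\ref{lem:sunfree} with only the one-line justification that it follows from Lemma~\ref{lem:sunfree} together with Lemma~11 of~\cite{MFCS2018}, and the surrounding text already sets up the reduction from \col\ on $(C_3,\ldots,C_p)$-free graphs via the Heggernes et al.\ construction. Your write-up simply makes explicit the membership-in-\NP\ part and the assembly of the two lemmas, which the paper leaves implicit.
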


	\section{Powers of trees}

In this section we study powers of trees. Let $T$ be a tree, and $z$ in the center of $T$. Let $e = zv$ be an edge that is incident to $z$, with $v$ not in the center. When $e$ is removed from the tree, the tree will fall apart in two parts, a \emph{branch} is the part that does not contain $z$. If the center of $T$ contains only one vertex, the number of branches equals the degree of $z$. 

\subsection{Squares of trees}

Two trivial lower bounds for the rainbow coloring number of a graph $G$ are the number of cut vertices in $G$ and $\diam(G) -1$. 
In squares of trees we found graphs that need more than $\diam(T^2)-1$ colors. Notice that squares of trees are 2-connected, so there are no cut vertices. 

\begin{lemma}\label{lem:square-lowerbound}
	Let $T$ be a tree such that the center of $T$ consist of a single vertex $z$, $T$ has diameter at least $6$, and there are at least three branches from the center with maximum length. Then $\srvc(T^2) \geq \rvc(T^2) \geq \diam(T^2)$. 
\end{lemma}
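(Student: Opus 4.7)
My plan is to derive the lower bound by locating three carefully chosen leaves, showing the uniqueness of the shortest $T^2$-paths between them, and then turning the resulting colour constraints into a partition contradiction. Let $D=\diam(T)$; since the centre of $T$ is the single vertex $z$, $D$ is even, say $D=2r$, and the hypothesis $D\geq 6$ gives $r\geq 3$. Moreover $\diam(T^2)=\lceil D/2 \rceil = r$, so it suffices to show $\rvc(T^2)\geq r$. Pick leaves $u_1,u_2,u_3$, one in each of three distinct maximum-length branches; each satisfies $d_T(u_i,z)=r$ and $d_T(u_i,u_j)=2r$ for $i\neq j$, so $d_{T^2}(u_i,u_j)=r$. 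Denote the $T$-path from $u_i$ to $z$ by $u_i = v_{i,0},v_{i,1},\ldots,v_{i,r}=z$.

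The key structural step is that the shortest $u_i$-$u_j$ path in $T^2$ is unique. Indeed, such a path uses exactly $r$ edges, each corresponding in $T$ to a step of length at most~$2$; since the total $T$-distance covered must be at least $d_T(u_i,u_j)=2r$, every step must have $T$-length exactly~$2$, and the equality case of the triangle inequality in a tree then forces each intermediate vertex to lie on the unique $T$-path from $u_i$ to $u_j$. The internal vertices of the unique shortest $T^2$-path are therefore precisely $v_{i,2},v_{i,4},\ldots$ on the $u_i$-side, possibly $z$ (exactly when $r$ is even), and $\ldots,v_{j,4},v_{j,2}$ on the $u_j$-side; there are $r-1$ of them in total.

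Suppose for contradiction that $T^2$ admits a rainbow vertex colouring $\phi$ with $r-1$ colours. Any rainbow path uses at most $r-1$ internal vertices and hence has length at most $r$; since $d_{T^2}(u_i,u_j)=r$, any rainbow $u_i$-$u_j$ path is a shortest one, and by uniqueness it must be the unique shortest $T^2$-path, whose $r-1$ internal vertices therefore realise every colour exactly once. Set $C_i:=\{\phi(v_{i,2k}) : 1\leq 2k<r\}$. Rainbowness of the $u_i$-$u_j$ path yields
\[
\begin{cases} C_i \sqcup C_j = \{1,\ldots,r-1\} & \text{if $r$ is odd,}\\ C_i \sqcup \{\phi(z)\} \sqcup C_j = \{1,\ldots,r-1\} & \text{if $r$ is even.}\end{cases}
\]
Applying this to the pairs $(1,2)$ and $(1,3)$ yields $C_2=C_3$; applying it to $(2,3)$ additionally demands $C_2\cap C_3=\emptyset$. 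Together these force $C_2=C_3=\emptyset$, contradicting $|C_i|=\lfloor(r-1)/2\rfloor\geq 1$, where we use the hypothesis $r\geq 3$. Hence $\rvc(T^2)\geq r=\diam(T^2)$, and $\srvc(T^2)\geq\rvc(T^2)$ follows immediately from the definitions.

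The main obstacle is establishing the uniqueness of the shortest $T^2$-path; once that is in hand, the parity-dependent appearance of $z$ as an internal vertex is only a minor bookkeeping issue, and the three-branches hypothesis is exactly what causes the final partition argument to collapse.
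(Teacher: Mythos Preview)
Your proof is correct and follows essentially the same approach as the paper: pick three leaves at maximum depth in three distinct branches, use that the pairwise shortest $T^2$-paths are unique and must be rainbow, and derive a colour repetition from the overlap structure. Your version is actually more complete than the paper's --- you supply the (easy) uniqueness argument for the shortest path, which the paper merely asserts, and you package the contradiction cleanly via the sets $C_i$ rather than arguing informally about ``the colors used for $p_{j+1},\ldots,p_k$''.
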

\begin{proof}
	Let $v_1$, $v_2$, and $v_3$ be three vertices with maximum distance to $z$ in three different branches. 
	We consider the case that $\diam(T^2)$ is odd. 
	There is a unique shortest path $P = v_1, p_1, p_2, \ldots, p_k, v_2$ from $v_1$ to $v_2$ in $T^2$.
	Analogously, there is a unique shortest path $Q = v_1, q_1, q_2, \ldots, q_k, v_3$ from $v_1$ to $v_3$ in $T^2$. Notice that $q_1 = p_1$, $q_2 = p_2$, $\ldots$, $q_{j} = p_{j}$, where $j = \lfloor\frac{\diam(T^2)}{2}\rfloor$. That is, $P$ and $Q$ use the same vertices in the branch of $v_1$. The unique shortest path $R$ in $T^2$ from $v_2$ to $v_3$ is $v_2, p_k, \ldots, p_{j+1}, q_{j+1}, \ldots, q_k, v_3$. See Figure \ref{fig:square-diam-lowerbound}. 
	
	We give a proof by contradiction. Let $c$ be a rainbow vertex coloring that uses at most $\diam(T^2) -1$ colors.  
	Notice that the paths $P$, $Q$, and $R$ have length $\diam(T^2)$. 
	Therefore, for each of these paths, all internal vertices are assigned different colors and all colors appear in the path. 
	Since the first $j$ vertices of the paths $P$ and $Q$ are equal, we see that the colors used for $p_{j+1}, \ldots, p_k$ are the same as the colors used for $q_{j+1}, \ldots, q_k$. Since $\diam(T) \geq 6$, $\{p_{j+1}, \ldots, p_k\}$ and $\{q_{j+1}, \ldots, q_k\}$ are non-empty. Hence, there is a color that appears twice in $R$, which yields a contradiction. 
	We conclude that $\rvc(T^2) \geq \diam(T^2)$. 
	
	The case that $\diam(T^2)$ is even is analogous. 
\end{proof}

\begin{figure}
	\centering
	\begin{tikzpicture}
		\node[vertex] (z) at (1.5,0) {};
		\node[vertex, label=left:{$p_2=q_2$}] (a1) at (0,-1) {};
		\node[vertex] (a2) at (0,-2) {};
		\node[vertex, label=left:{$p_1 = q_1$}] (a3) at (0,-3) {};
		\node[vertex] (a4) at (0,-4) {};
		\node[vertex, label=left:$v_1$] (a5) at (0,-5) {};
		\node[vertex, label=left:$p_3$] (b1) at (1.5,-1) {};
		\node[vertex] (b2) at (1.5,-2) {};
		\node[vertex, label=left:$p_4$] (b3) at (1.5,-3) {};
		\node[vertex] (b4) at (1.5,-4) {};
		\node[vertex, label=left:$v_2$] (b5) at (1.5,-5) {};
		\node[vertex, label=left:$q_3$] (c1) at (3,-1) {};
		\node[vertex] (c2) at (3,-2) {};
		\node[vertex, label=left:$q_4$] (c3) at (3,-3) {};
		\node[vertex] (c4) at (3,-4) {};
		\node[vertex, label=left:$v_3$] (c5) at (3,-5) {};
		\draw[edge] (z) -- (a1) -- (a2) -- (a3) -- (a4) -- (a5);
		\draw[edge] (z) -- (b1) -- (b2) -- (b3) -- (b4) -- (b5);
		\draw[edge] (z) -- (c1) -- (c2) -- (c3) -- (c4) -- (c5);
	\end{tikzpicture}
	\caption{A graph $T$ for which $T^2$ needs $\diam(T^2)$ colours, see Lemma \ref{lem:square-lowerbound}.  } \label{fig:square-diam-lowerbound}
\end{figure}
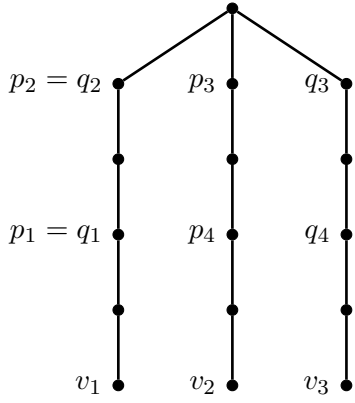

The class of graphs described in the statement of Lemma \ref{lem:square-lowerbound} needs exactly $\diam(T^2)$ colors. 
We define layer $i$ as the set of all vertices with distance $\lfloor\diam(T)/2\rfloor - i$ to the center of $T$. For a vertex $v$, we write $l(v)$ for the layer that it is contained in, so $l(v) = \lfloor\diam(T)/2\rfloor - d$, where $d$ is the distance of $v$ to the center of $T$. 

\begin{lemma} \label{lem:square-upperbound}
	Let $T$ be a tree such that the center of $T$ consist of a single vertex, $T$ has diameter at least $6$, and there are at least three branches from the center with maximum length. Then $\rvc(T^2) = \diam(T^2)$. 
\end{lemma}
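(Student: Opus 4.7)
The plan is to define an explicit coloring of $T^2$ with exactly $D/2 = \diam(T^2)$ colors (where $D = \diam(T)$) and then verify that it is rainbow; combined with Lemma~\ref{lem:square-lowerbound} this yields the equality. I would set $c(v) := l(v)+1$ for $v \ne z$ and $c(z) := 1$. For $v \ne z$, write $v^{(0)}=v, v^{(1)}, \ldots, v^{(p_v)}=z$ for the $T$-path from $v$ to $z$, so $v^{(j)}$ lies in layer $l(v)+j$. Since $v^{(i)}$ and $v^{(j)}$ are $T^2$-adjacent iff $|i-j|\le 2$, any $T^2$-path along these ancestors corresponds to a subset $J\subseteq\{0,1,\ldots,p_v\}$ containing the endpoints with consecutive elements differing by at most $2$; the paths I construct will be specified this way on each side.

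For every pair $u\ne v$ I would produce a rainbow $u$--$v$ path by case analysis. The easy cases are: one of $u,v$ is $z$, or $u,v$ lie in a common branch with one an ancestor of the other. In both I take the whole $T$-path (all unit jumps); the internal layers are strictly increasing, so the colors are pairwise distinct. The two main cases are $u,v$ in different branches (build $u\to\cdots\to z\to\cdots\to v$) and $u,v$ in a common branch with LCA $w\notin\{u,v\}$ (build $u\to\cdots\to w\to\cdots\to v$). For both I would apply an \emph{alternating offset} strategy: pick the $u$-side to visit odd offsets from $u$ (i.e.\ $u^{(1)},u^{(3)},\ldots$), and pick the $v$-side to visit even offsets from $v$ if $l(u)\equiv l(v)\pmod 2$, or odd offsets from $v$ otherwise. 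A one-line parity check then shows the internal visited layers on the two sides are disjoint: in the same-parity case they have parities $1{+}l(u)$ and $l(v)$, which differ; in the opposite-parity case they have parities $1{+}l(u)$ and $1{+}l(v)$, which again differ. Since colors depend only on the layer, and since layer $0$ (the only layer sharing color $1$ with $z$) never appears internally, the internal colors are pairwise distinct; the transit vertex contributes a color distinct from all others ($w$'s color $l(w)+1$ strictly exceeds every other internal layer, and $z$'s color $1$ appears nowhere else).

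The hard part will be the bookkeeping: checking that the alternating sequences are $T^2$-valid (jumps $\le 2$) for all parities of $p_u$ and $p_v$, that disjointness really holds in each parity sub-case, and that degenerate situations cause no trouble (e.g.\ $u$ adjacent to $z$ in $T$, forcing a trivial $u$-side, or $v$ a child of $w$ so the $v$-side is empty). The cleanest presentation likely isolates a single combinatorial lemma: \emph{for any $0\le \ell_1,\ell_2<L$, there exist two valid sequences on $\{0,1,\ldots,L\}$ from $\ell_1$ and $\ell_2$ to $L$ with disjoint interiors}, constructed by the parity rule above. The two main cases of the theorem then follow from this lemma applied with $(\ell_1,\ell_2,L)=(l(u),l(v),D/2)$ and $(l(u),l(v),l(w))$, respectively. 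It is worth noting that the three-branches-of-maximum-length hypothesis of the statement is not used in this upper bound argument; it is needed only for the matching lower bound in Lemma~\ref{lem:square-lowerbound}.
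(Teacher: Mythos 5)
Your proposal is correct and follows essentially the same route as the paper: color by layer with $\diam(T^2)$ colors and route each pair through the lowest common ancestor, using layers of one parity on the $u$-side and the other parity on the $v$-side so that internal colors are distinct. The paper simply fixes ``even layers up to $w$, odd layers down to $v$'' (avoiding your case split on the parity of $l(u)-l(v)$) and places the single color collision on layers $0$ and $1$ rather than on layer $0$ and the center, but these differences are cosmetic.
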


\begin{proof}
	Consider the following coloring $c$:
	\begin{align*}
		c(v) = \begin{cases}
			i & \text{if $v$ is in layer $i$, $1\leq i\leq\frac{\diam(T)}{2}$}\\
			1 & \text{otherwise}. 
			\end{cases}
	\end{align*}
	Notice that the number of colors in this coloring is $\frac{\diam(T)}{2} = \diam(T^2)$. 
	
	We claim that this is a rainbow coloring. Let $u$ and $v$ be two vertices in $T^2$. 
	Let $w$ be the lowest common ancestor of $u$ and $v$. Consider the following $u-v$-path: use the even layers to go from $u$ to $w$, and the odd layers to go from $w$ to $v$. See Figure \ref{fig:rainbowpath}.  Notice that the internal vertices of such a path are in different layer, hence this is a rainbow path. 
	
	Combining this with Lemma \ref{lem:square-lowerbound}, we conclude that $\rvc(T^2) = \diam(T^2)$. 
\end{proof}
\begin{figure}
	\centering
	\begin{tikzpicture}
		\node[path, label=left:$6$] (z) at (1.5,0) {};
		\node[vertex, label=left:$5$] (a1) at (0,-1) {};
		\node[path, label=left:$4$] (a2) at (0,-2) {};
		\node[vertex, label=left:$3$] (a3) at (0,-3) {};
		\node[path, label=left:$2$] (a4) at (0,-4) {};
		\node[vertex, label=left:$1$] (a5) at (0,-5) {};
		\node[path, label=right:$u$] (a6) at (0,-6) {};
		\node[path, label=left:$5$] (b1) at (1.5,-1) {};
		\node[vertex, label=left:$4$] (b2) at (1.5,-2) {};
		\node[path, label=left:$3$] (b3) at (1.5,-3) {};
		\node[vertex, label=left:$2$] (b4) at (1.5,-4) {};
		\node[path, label=left:$1$] (b5) at (1.5,-5) {};
		\node[path, label=right:$v$] (b6) at (1.5,-6) {};
		\node[vertex, label=left:$5$] (c1) at (3,-1) {};
		\node[vertex, label=left:$4$] (c2) at (3,-2) {};
		\node[vertex, label=left:$3$] (c3) at (3,-3) {};
		\node[vertex, label=left:$2$] (c4) at (3,-4) {};
		\node[vertex, label=left:$1$] (c5) at (3,-5) {};
		\node[vertex] (c6) at (3,-6) {};
		\draw[edge] (z) -- (a1) -- (a2) -- (a3) -- (a4) -- (a5) -- (a6);
		\draw[edge] (z) -- (b1) -- (b2) -- (b3) -- (b4) -- (b5) -- (b6);
		\draw[edge] (z) -- (c1) -- (c2) -- (c3) -- (c4) -- (c5) -- (c6);
	\end{tikzpicture}
	\caption{A tree $T$. The numbers indicate the colouring of $T^2$. The gray square vertices are a rainbow path in $T^2$ from $u$ to $v$. } \label{fig:rainbowpath}
\end{figure}

Notice that this is not a strong rainbow vertex coloring, since the $u-v$-path described in the proof is not necessarily a shortest path. 

In squares of trees this is the only example that needs more than $\diam(T^2) - 1$ colors. For trees $T$ with $\diam(T) \leq 4$, it holds that $\diam(T^2) \leq 2$. This shows that $\rvc(T^2) = 1$ in those cases. We will distinguishing two cases for the rest of the graphs. 

\begin{lemma}\label{lem:square-two-branches}
	Let $T$ be a tree such that the center of $T$ consist of a single vertex, $T$ has diameter at least $6$, and there are exactly two branches from the center with maximum length. Then $\rvc(T^2) = \diam(T^2)-1$. 
\end{lemma}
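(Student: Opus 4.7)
My plan is to prove both bounds. The lower bound $\rvc(T^2)\ge\diam(T^2)-1$ is a standard generic inequality, so it suffices to exhibit a rainbow vertex coloring of $T^2$ with $k-1$ colours, where $k:=\diam(T)/2=\diam(T^2)$.

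For the upper bound I plan to construct an explicit coloring $c\colon V(T)\to\{1,\ldots,k-1\}$ based on distance from the center $z$, but treating the two long branches $B_1,B_2$ asymmetrically. The key observation is that the only pairs at $T^2$-distance $k$ are pairs $(u,v)$ with $u$ and $v$ at distance exactly $k$ from $z$ in different branches, which forces $u\in B_1,\,v\in B_2$ (since no other branch reaches depth $k$); for such a pair the length-$k$ shortest $T^2$-path has $k-1$ internal vertices and must receive all $k-1$ colours bijectively. I will colour the vertices of $B_1$ and $B_2$ at the same distance $d$ from $z$ with \emph{complementary} colours, so that along the path through $z$ (or, when $k$ is odd, through the bridge edge $r_1r_2$ that exists in $T^2$) the $B_1$-side colours, the colour of $z$, and the $B_2$-side colours together partition $\{1,\ldots,k-1\}$. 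Vertices of the shorter branches will be coloured by distance from $z$, mirroring the scheme of Lemma~\ref{lem:square-upperbound}.

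Verification proceeds by case analysis on the positions of $u,v$: (a) $u,v$ share a branch with LCA strictly inside that branch, so the rainbow path can be taken inside the branch and the layer coloring suffices; (b) $u,v$ lie in distinct branches, including the diametrical case $u\in B_1,\,v\in B_2$ at maximum depth, in which paths cross through $z$ or through $r_1r_2$ and the complementary coloring delivers the required bijection on the internal colours; (c) one of $u,v$ is $z$ or lies in a short branch, so that any path has length strictly less than $k$ and fewer colour constraints.

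The main obstacle I expect is a sub-case of (a): pairs $u,v$ both at depth $k$ in $B_1$ but in different sub-branches of $r_1$ (the root of $B_1$). The shortest $T^2$-path between them has length $k-1$ and two of its internal vertices lie in $D_2\cap B_1$, which would collide under any purely layer-based coloring. To resolve this I plan to reroute through $r_1$, using a length-$(k-1)$ or length-$k$ alternative path of the form $u,\text{parent}(u),r_1,\text{parent}(v),v$ (or a lengthened analogue for $k>4$) that never uses two distinct vertices of $D_2\cap B_1$ simultaneously; the colour of $r_1$ together with carefully chosen colours on the intervening odd layers will then make the path rainbow. Carrying out this reroute argument uniformly over all sub-branch configurations, together with the straightforward cases (a)--(c), is where the bulk of the work lies, and completing it yields the upper bound $\rvc(T^2)\le k-1$.
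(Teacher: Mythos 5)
Your overall strategy---a layer-based coloring in which the two longest branches receive ``complementary'' colors, with rainbow paths routed through the center---is the same as the paper's, but two of your supporting claims are wrong, and they are load-bearing. First, the set of pairs at $T^2$-distance exactly $k=\diam(T^2)$ is larger than you claim: if $u$ is at depth $k$ (distance $k$ from $z$ in $T$) in a longest branch and $v$ is at depth $k-1$ in \emph{any} other branch, including a short one, then $d_T(u,v)=2k-1$ and hence $d_{T^2}(u,v)=\lceil (2k-1)/2\rceil=k$. So your case (c) is false as stated, and the coloring of the short branches is just as constrained as that of $B_2$: for such a pair every rainbow path must be a shortest path whose $k-1$ internal vertices realize all $k-1$ colors bijectively. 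This is exactly why the paper's proof folds all branches other than one fixed longest branch (together with $z$) into a single part $B_2$ colored by layer, and applies the swapped coloring only to the remaining longest branch $B_1$, so that the cross-branch argument covers these pairs as well.

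Second, your proposed repair of the same-branch obstacle does not work as described. For $u,v$ at maximum depth in different sub-branches of $r_1$, the path $u,\mathrm{parent}(u),r_1,\mathrm{parent}(v),v$ (and any symmetric lengthened analogue) contains \emph{two distinct internal vertices at the same depth in $B_1$}---the two parents, and in longer versions a colliding pair at every depth traversed on both sides---and these receive equal colors under any coloring of $B_1$ that is constant on layers, which is what you propose. The collision is not confined to $D_2\cap B_1$ and cannot be repaired by adjusting the colors of ``the intervening odd layers.'' The mechanism that actually works, and that the paper reuses from Lemma~\ref{lem:square-upperbound}, is to make the ascent from $u$ to the lowest common ancestor use only even layers and the descent to $v$ use only odd layers; the resulting walk is a valid $T^2$-path (consecutive layers differ by at most $2$), is generally \emph{not} a shortest path and need not be, and visits each layer at most once, so no collision can occur. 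Until you replace your reroute by such a parity-alternating path and extend the bijection argument to the depth-$k$/depth-$(k-1)$ pairs above, the upper bound is not established.
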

\begin{proof}
	Let $B_1$ be one of the branches with maximum length. Let $B_2$ be all other branches, together with the center vertex. 
	Suppose that $\diam(T^2)$ is odd. Consider the following coloring $c$:
	\begin{align*}
	c(v) = \begin{cases}
	i & \text{if $v\in B_2$ is in layer $i$, $1\leq i\leq\diam(T^2)-1$}\\
	2i-1 & \text{if $v\in B_1$ is in layer $2i$, $1\leq i\leq\frac{\diam(T^2)-1}{2}$}\\
	2i & \text{if $v\in B_1$ is in layer $2i-1$, $1\leq i\leq\frac{\diam(T^2)-1}{2}$}\\
	1 & \text{otherwise}. 
	\end{cases}
	\end{align*}
	Intuitively, we color $B_2$ per layer, and we color $B_1$ similar, but with the colors of layer 1 and 2 swapped, and the colors of layers 3 and 4 swapped, etc. See Figure \ref{fig:square-two-branches}. 
	The number of colors used in this coloring equals $\diam(T^2)-1$. 
	
	Let $u$ and $v$ be two vertices of $T$. We claim that there exists a rainbow path between $u$ and $v$. If $u$ and $v$ are both in $B_i$, for $i=1,2$, we can use the same path as in the proof of Lemma \ref{lem:square-upperbound}. Suppose without loss of generality that $l(u) \leq l(v)$ and let $w$ be the lowest common ancestor of $u$ and $v$. Use the even layers to go from $u$ to $w$, and the odd layers to go from $w$ to $v$. Notice that every layer has a unique color, except for the center vertex, which has the same color as layer 1 in $B_2$ and as layer 2 in $B_1$. When this path contains the center vertex, both $u$ and $v$ are in $B_2$. Because this path does not contain a vertex in layer 1 as an internal vertex, this is a rainbow path. 
	
	If $u \in B_i$ and $v\in B_j$, with $i\neq j$, consider the following path. Let $P_1$ be the path from $u$ taking the even layers towards the center, as far as possible, so $P_1$ ends at a neighbour of the center. Let $P_2$ be the same from $v$. Then the path $P_1$ followed by $P_2$ reversed, is a path from $u$ to $v$. This is a rainbow path since the colors used in the even layers of $B_1$ are exactly the colors of the odd layers of $B_2$ and vice versa. We conclude that $c$ is a rainbow coloring. 
	
	Suppose that $\diam(T^2)$ is even. We slightly modify the coloring $c$:
	\begin{align*}
		c(v) = \begin{cases}
			i & \text{if $v\in B_2$ is in layer $i$, $1\leq i\leq\diam(T^2)-1$}\\
			2i+1 & \text{if $v\in B_1$ is in layer $2i$, $1\leq i\leq\frac{\diam(T^2)-1}{2}$}\\
			2i & \text{if $v\in B_1$ is in layer $2i+1$, $1\leq i\leq\frac{\diam(T^2)-1}{2}$}\\
			1 & \text{otherwise}. 
		\end{cases}
	\end{align*}
	The paths constructed above are rainbow paths in this coloring as well. 
\end{proof}
\begin{figure}
	\centering
	\begin{tikzpicture}
	\node[vertex, label=left:$1$] (z) at (1.5,0) {};
	\node[path, label=left:$3$] (a1) at (0,-1) {};
	\node[vertex, label=left:$4$] (a2) at (0,-2) {};
	\node[path, label=left:$1$] (a3) at (0,-3) {};
	\node[vertex, label=left:$2$] (a4) at (0,-4) {};
	\node[vertex, label=left:$1$] (a6) at (0,-5) {};
	\node[vertex, label=left:$2$] (d4) at (-1.5,-4) {};
	\node[path, label=right:$u$] (d6) at (-1.5,-5) {};
	\node[path, label=left:$4$] (b1) at (1.5,-1) {};
	\node[vertex, label=left:$3$] (b2) at (1.5,-2) {};
	\node[path, label=left:$2$] (b3) at (1.5,-3) {};
	\node[vertex, label=left:$1$] (b4) at (1.5,-4) {};
	\node[path, label=right:$v$] (b6) at (1.5,-5) {};
	\node[vertex, label=left:$3$] (e2) at (3,-2) {};
	\node[vertex, label=left:$2$] (e3) at (3,-3) {};
	\node[vertex, label=left:$1$] (e4) at (3,-4) {};
	\node[vertex, label=left:$4$] (c1) at (4.5,-1) {};
	\node[vertex, label=left:$3$] (c2) at (4.5,-2) {};
	\node[vertex, label=left:$2$] (c3) at (4.5,-3) {};
	\node[vertex, label=left:$1$] (c4) at (4.5,-4) {};
	\draw[edge] (z) -- (a1) -- (a2) -- (a3) -- (a4) -- (a6);
	\draw[edge] (a3) -- (d4) -- (d6);
	\draw[edge] (z) -- (b1) -- (b2) -- (b3) -- (b4) -- (b6);
	\draw[edge] (b1) -- (e2) -- (e3) -- (e4);
	\draw[edge] (z) -- (c1) -- (c2) -- (c3) -- (c4);
	\end{tikzpicture}\qquad\qquad
	\begin{tikzpicture}
		\node[path, label=left:$1$] (z) at (1.5,0) {};
		\node[vertex, label=left:$4$] (a1) at (0,-1) {};
		\node[path, label=left:$5$] (a2) at (0,-2) {};
		\node[vertex, label=left:$2$] (a3) at (0,-3) {};
		\node[vertex, label=left:$3$] (a4) at (0,-4) {};
		\node[vertex, label=left:$1$] (a5) at (0,-5) {};
		\node[vertex, label=left:$1$] (a6) at (0,-6) {};
		\node[path, label=left:$3$] (d4) at (-1.5,-4) {};
		\node[vertex, label=left:$1$] (d5) at (-1.5,-5) {};
		\node[path, label=right:$u$] (d6) at (-1.5,-6) {};
		\node[vertex, label=left:$5$] (b1) at (1.5,-1) {};
		\node[path, label=left:$4$] (b2) at (1.5,-2) {};
		\node[vertex, label=left:$3$] (b3) at (1.5,-3) {};
		\node[path, label=left:$2$] (b4) at (1.5,-4) {};
		\node[vertex, label=left:$1$] (b5) at (1.5,-5) {};
		\node[path, label=right:$v$] (b6) at (1.5,-6) {};
		\node[vertex, label=left:$4$] (e2) at (3,-2) {};
		\node[vertex, label=left:$3$] (e3) at (3,-3) {};
		\node[vertex, label=left:$2$] (e4) at (3,-4) {};
		\node[vertex, label=left:$1$] (e5) at (3,-5) {};
		\node[vertex, label=left:$5$] (c1) at (4.5,-1) {};
		\node[vertex, label=left:$4$] (c2) at (4.5,-2) {};
		\node[vertex, label=left:$3$] (c3) at (4.5,-3) {};
		\node[vertex, label=left:$2$] (c4) at (4.5,-4) {};
		\node[vertex, label=left:$1$] (c5) at (4.5,-5) {};
		\draw[edge] (z) -- (a1) -- (a2) -- (a3) -- (a4) -- (a5) -- (a6);
		\draw[edge] (a3) -- (d4) -- (d5) -- (d6);
		\draw[edge] (z) -- (b1) -- (b2) -- (b3) -- (b4) -- (b5) -- (b6);
		\draw[edge] (b1) -- (e2) -- (e3) -- (e4) -- (e5);
		\draw[edge] (z) -- (c1) -- (c2) -- (c3) -- (c4) -- (c5);
	\end{tikzpicture}
	\caption{A tree $T$. The numbers indicate a rainbow colouring of $T^2$. The gray square vertices are a rainbow path from $u$ to $v$. See Lemma \ref{lem:square-two-branches}.  } \label{fig:square-two-branches}
\end{figure}

Again, this is not a strong rainbow coloring.

\begin{lemma}\label{lem:square-center-two-vertices}
	Let $T$ be a tree such that the center of $T$ consist of two vertices and $T$ has diameter at least $5$. Then $\rvc(T^2) = \diam(T^2)-1$. 
\end{lemma}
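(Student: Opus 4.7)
The plan is to emulate the approach of Lemma~\ref{lem:square-two-branches}: color each vertex by its layer with respect to the two centers and then exhibit a rainbow $u$-$v$ walk in $T^2$ for every pair. Because $T$ has two centers, $\diam(T)$ is odd; write $\diam(T)=2h+1$, so $\diam(T^2)=h+1$ and the target is $\rvc(T^2)\leq h$. Deleting the edge $z_1z_2$ splits $T$ into subtrees $B_1\ni z_1$ and $B_2\ni z_2$; for $v\in B_i$, set $l(v)=h-d_T(v,z_i)\in\{0,1,\dots,h\}$, so that both $z_1$ and $z_2$ occupy layer $h$. The coloring I propose is $c(v)=l(v)$ when $l(v)\geq 1$ and $c(v)=1$ when $l(v)=0$, which uses exactly $h$ colors, matching the standard lower bound $\rvc(T^2)\geq\diam(T^2)-1$.

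The key observation is that every $T^2$-walk corresponds to choosing positions on the unique $T$-path between the endpoints, with consecutive positions differing by at most $2$ in $T$-distance. For a \emph{cross pair} $u\in B_1$, $v\in B_2$, the $T$-path passes through both centers and has length $m=2h+1-l(u)-l(v)$. I would argue that there is always a shortest $T^2$-path whose internal vertices sit at pairwise distinct layers in $\{1,\dots,h\}$: when $m$ is even, use the path visiting positions $0,2,\dots,m$; when $m$ is odd, use the path $0,1,3,5,\dots,m$ that places its single length-$1$ edge incident to $u$. In both cases a parity computation---two same-parity positions on opposite sides of the center could share a layer only if their sum equals the odd number $2h+1-2l(u)$---rules out layer collisions, while the fact that $z_1,z_2$ lie at consecutive positions of opposite parity ensures that layer $h$ is visited at most once along the chosen path.

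For a \emph{same-side pair} $u,v\in B_i$ with lowest common ancestor $w$, the $T$-path has a palindromic layer sequence around $w$ (when $l(u)=l(v)$), so shortest paths may repeat layers. Here I would construct a walk of length $d_{T^2}(u,v)+1$ by choosing, for each layer $\lambda$ between the minimum endpoint layer and $l(w)$, exactly one of the (at most two) $T$-path positions realizing $\lambda$. The selection reduces to partitioning an index set $\{1,2,\dots,k-1\}$---where $k$ is the relevant depth from the endpoints to $w$---into two subsets $S_L$ (layers picked from the $u$-side) and $S_R$ (layers picked from the $v$-side) such that neither set contains two consecutive integers; the split ``odd indices versus even indices'' always achieves this. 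The no-two-consecutive condition is precisely what guarantees that the sorted selected positions, together with the endpoints and $w$, have consecutive $T$-distances at most $2$, i.e.\ form a valid $T^2$-walk. The chosen internal vertices then have distinct layers by construction, hence distinct colors.

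The main obstacle I expect is verifying the combinatorial equivalence between the no-two-consecutive property of $S_L,S_R$ and the validity of the resulting walk, which is handled by a short complement argument on each half of the walk separately. Some routine bookkeeping covers the degenerate subcases: $k\leq 2$ (the shortest path has at most one internal vertex and is trivially rainbow), $l(u)\neq l(v)$ (the partition argument adapts, with the layers in $\{\min(l(u),l(v))+1,\dots,\max(l(u),l(v))-1\}$ forced onto a single side), and $u\in\{z_1,z_2\}$ or $v\in\{z_1,z_2\}$ (reducing to a descent within one subtree already handled in the cross case). Combined with the standard lower bound, these verifications yield $\rvc(T^2)=\diam(T^2)-1$.
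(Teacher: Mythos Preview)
Your coloring is identical to the paper's, and your rainbow-path constructions are correct, but you have re-derived the paper's simple ``even layers up, odd layers down'' trick in a considerably more elaborate disguise. In the paper, for a cross pair $u\in B_1$, $v\in B_2$ one simply walks from $u$ to $z_1$ through the even-indexed layers and then from $z_1$ to $v$ through the odd-indexed layers; for a same-side pair one does the same with the lowest common ancestor $w$ in place of $z_1$. Your position-parity argument for cross pairs and your $S_L/S_R$ partition with the no-two-consecutive condition are precisely this even/odd split, rephrased in terms of positions on the $T$-path; in particular, your ``odd indices versus even indices'' split is literally the paper's choice. One minor inaccuracy: when $l(u)\neq l(v)$ your walk has length $l(w)-\min(l(u),l(v))+1$, which can exceed $d_{T^2}(u,v)+1$, though this does not affect correctness.
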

\begin{proof}
	We will color per layer: 
	\begin{align*}
	c(v) = \begin{cases}
	i & \text{if $v$ is in layer $i$, $1\leq i\leq\frac{\diam(T)}{2}$}\\
	1 & \text{otherwise}. 
	\end{cases}
	\end{align*}
	
	The number of colors used in this coloring equals $\lfloor\frac{\diam(T)}{2}\rfloor = \diam(T^2)-1$. 
	
	We claim that this is a rainbow coloring. 
	Let $u$ and $v$ be two vertices of $T$. Write $z_1$ and $z_2$ for the two center vertices. If $u$ and $v$ are both in a branch of $z_i$, for $i=1,2$, we can use the same path as in the proof of Lemma \ref{lem:square-upperbound}: Let $w$ be the lowest common ancestor of $u$ and $v$. Use the even layers to go from $u$ to $w$, and the odd layers to go from $w$ to $v$. Since we use at most one vertex per layer, this is a rainbow path. 
	
	Otherwise, assume without loss of generality that $u$ is in a branch of $z_1$ and $v$ in a branch of $z_2$. Use the even layers to go from $u$ to $z_1$, and the odd layers to go from $z_1$ to $v$. Again, this is a rainbow path. 
	See Figure \ref{fig:square-center-two-vertices}. 
\end{proof}
\begin{figure}
	\centering
	\begin{tikzpicture}
	\node[path, label=left:$6$] (z1) at (.75,0) {};
	\node[vertex, label=right:$6$] (z2) at (2.25,0) {};
	\node[vertex, label=left:$5$] (a1) at (0,-1) {};
	\node[path, label=left:$4$] (a2) at (0,-2) {};
	\node[vertex, label=left:$3$] (a3) at (0,-3) {};
	\node[path, label=left:$2$] (a4) at (0,-4) {};
	\node[vertex, label=left:$1$] (a5) at (0,-5) {};
	\node[path, label=right:$u$] (a6) at (0,-6) {};
	\node[path, label=left:$5$] (b1) at (1.5,-1) {};
	\node[vertex, label=left:$4$] (b2) at (1.5,-2) {};
	\node[path, label=left:$3$] (b3) at (1.5,-3) {};
	\node[vertex, label=left:$2$] (b4) at (1.5,-4) {};
	\node[path, label=left:$1$] (b5) at (1.5,-5) {};
	\node[path, label=right:$v$] (b6) at (1.5,-6) {};
	\node[vertex, label=left:$5$] (c1) at (3,-1) {};
	\node[vertex, label=left:$4$] (c2) at (3,-2) {};
	\node[vertex, label=left:$3$] (c3) at (3,-3) {};
	\node[vertex, label=left:$2$] (c4) at (3,-4) {};
	\node[vertex, label=left:$1$] (c5) at (3,-5) {};
	\node[vertex] (c6) at (3,-6) {};
	\draw[edge] (z1) -- (a1) -- (a2) -- (a3) -- (a4) -- (a5) -- (a6);
	\draw[edge] (z2) -- (b1) -- (b2) -- (b3) -- (b4) -- (b5) -- (b6);
	\draw[edge] (z2) -- (c1) -- (c2) -- (c3) -- (c4) -- (c5) -- (c6);
	\draw[edge] (z1) -- (z2);
	\end{tikzpicture}
	\caption{A tree $T$. The numbers indicate the colouring of $T^2$. The gray square vertices are a rainbow path in $T^2$ from $u$ to $v$. } \label{fig:square-center-two-vertices}
\end{figure}
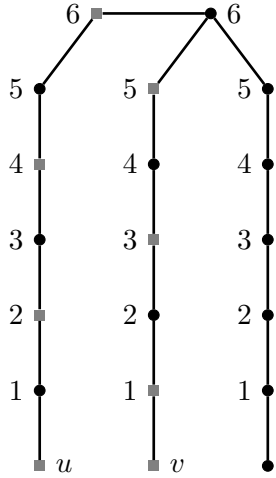

\subsection{Higher powers of trees}

In this section we consider higher powers $T^k$ for $k\geq3$. We will see that the same results hold for the higher powers, but we have to distinguish more cases. Specifically, for the cases where the center consists of a single vertex, we have to consider whether $\diam(T) \equiv 0 \pmod k$ or not. For squares of trees we did not consider the case that $\diam(T) \not\equiv 0 \pmod k$, since the diameter is always even if the center consists of a single vertex. 

\begin{lemma}\label{lem:higher-lowerbound}
	Let $T$ be a tree such that the center of $T$ consist of a single vertex, $T$ has diameter at least $3k$, $\diam(T) \equiv 0 \pmod k$, and there are at least three branches from the center with maximum length. Then $\srvc(T^k) \geq \rvc(T^k) \geq \diam(T^k)$. 
\end{lemma}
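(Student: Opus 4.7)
The plan is to mimic Lemma~\ref{lem:square-lowerbound}, extending the $k=2$ argument to general $k$. Set $D := \diam(T^k)$. Because $z$ is the unique center, $\diam(T)$ is even; combined with $\diam(T) \equiv 0 \pmod k$ and $\diam(T) \geq 3k$, this gives $D = \diam(T)/k \geq 3$. I would pick three vertices $v_1, v_2, v_3$ at tree-distance $r := \diam(T)/2$ from $z$, one in each of three distinct maximum-length branches $B_1, B_2, B_3$, and write $u_j^{(i)}$ for the unique vertex of $B_i$ at distance $j$ from $z$.

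The core structural claim is that for $i \neq j$, the shortest path $P_{ij}$ in $T^k$ from $v_i$ to $v_j$ is unique, has length exactly $D$, and its internal vertex set decomposes naturally by branch. To prove this I would use a tight-triangle-inequality argument: any length-$D$ sequence $v_i = x_0, x_1, \ldots, x_D = v_j$ realizing a path in $T^k$ satisfies $\sum_\ell d_T(x_\ell, x_{\ell+1}) \leq Dk$ and $\geq d_T(v_i, v_j) = 2r = Dk$; hence each step has tree-distance exactly $k$, and each $x_\ell$ must lie on the unique tree-path from $v_i$ to $v_j$ at tree-distance $\ell k$ from $v_i$. This forces a split into two subcases: if $k \mid r$ (so $D$ even), then $P_{ij}$ passes through $z$ as its central vertex, with internal set $S_i \sqcup \{z\} \sqcup S_j$ where $S_i := \{u_{k}^{(i)}, u_{2k}^{(i)}, \ldots, u_{r-k}^{(i)}\}$; if $k \nmid r$ (which forces $k$ even and $D$ odd), then $P_{ij}$ jumps from $u_{k/2}^{(i)}$ directly to $u_{k/2}^{(j)}$, skipping $z$, with internal set $S_i' \sqcup S_j'$ where $S_i' := \{u_{k/2}^{(i)}, u_{3k/2}^{(i)}, \ldots, u_{r-k}^{(i)}\}$. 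In both cases the ``$i$-side'' of $P_{ij}$ is independent of $j$, and each side is non-empty because $D \geq 3$.

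With this in hand, the contradiction is immediate. Suppose for contradiction that $c$ is a rainbow vertex coloring of $T^k$ using only $D - 1$ colors. Since each $P_{ij}$ has $D - 1$ internal vertices, all colored distinctly, $c$ restricted to those internal vertices is a bijection onto the full color palette. In the even-$D$ subcase, applying this to $P_{12}$ and $P_{13}$ yields $c(S_2) = c(S_3) = \{1,\ldots,D-1\} \setminus (c(S_1) \cup \{c(z)\})$, while applying it to $P_{23}$ forces $c(S_2) \cap c(S_3) = \emptyset$; hence $c(S_2) = \emptyset$, contradicting $|S_2| \geq 1$. The odd-$D$ subcase is identical after dropping $\{c(z)\}$ from the formulas. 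We conclude $\rvc(T^k) \geq D = \diam(T^k)$, and the inequality $\srvc(T^k) \geq \rvc(T^k)$ is automatic.

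The main obstacle I anticipate is the structural characterization of shortest paths between leaves of distinct maximum branches, specifically the divisibility-driven parity split: when $D$ is even the unique shortest path uses $z$, and when $D$ is odd it jumps over $z$ from $u_{k/2}^{(i)}$ to $u_{k/2}^{(j)}$. The triangle-inequality argument handles both cases uniformly, and once uniqueness and the branch-decomposition of $P_{ij}$ are established, the color-counting contradiction is essentially a verbatim copy of the $k = 2$ argument in Lemma~\ref{lem:square-lowerbound}.
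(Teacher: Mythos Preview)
Your proposal is correct and follows exactly the approach the paper intends: the paper's proof of this lemma is literally the one sentence ``This proof is analogous to the proof of Lemma~\ref{lem:square-lowerbound},'' and your write-up is a faithful (and more detailed) generalisation of that argument. In particular, your tight-triangle-inequality justification for the uniqueness and structure of the shortest $v_i$--$v_j$ paths in $T^k$ makes explicit what Lemma~\ref{lem:square-lowerbound} simply asserts, and your parity split on $D$ matches the paper's odd/even case split.
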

\begin{proof}
	This proof is analogous to the proof of Lemma \ref{lem:square-lowerbound}. 
\end{proof}

As for squares of trees, we can proof that $\diam(T^k)$ colors suffices in this case.
\begin{lemma} \label{lem:higher-upperbound}
	Let $T$ be a tree such that the center of $T$ consist of a single vertex, $T$ has diameter at least $3k$, $\diam(T) \equiv 0 \pmod k$, and there are at least three branches from the center with maximum length. Then $\rvc(T^k) = \diam(T^k)$. 
\end{lemma}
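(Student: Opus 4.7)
The lower bound $\rvc(T^k) \geq m$ (where I write $\diam(T) = mk$, so that $\diam(T^k) = m$) is Lemma~\ref{lem:higher-lowerbound}. For the matching upper bound, my plan is to generalize the coloring scheme of Lemma~\ref{lem:square-upperbound} to higher powers. Let $z$ denote the unique center of $T$; the deepest leaves then lie at tree-distance $mk/2$ from $z$. I would partition the possible tree-distances $\{0,1,\dots,mk/2\}$ from $z$ into $m$ consecutive ``layers'' of size roughly $k/2$, and color each vertex by the index of its layer. A concrete choice, which specializes to the coloring of Lemma~\ref{lem:square-upperbound} when $k=2$, is
\[
c(v) = \begin{cases} \left\lceil (mk - 2\, d_T(v,z))/k \right\rceil & \text{if } d_T(v,z) < mk/2, \\ 1 & \text{if } d_T(v,z) = mk/2. \end{cases}
\]
This uses exactly $m$ colors, so it remains to verify that it is a rainbow vertex coloring.

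The property I would then exploit is that a $T^k$-edge whose underlying tree-path has length exactly $k$ moves its endpoints across exactly two layers, while shorter tree-jumps move across zero or one layer in a predictable way. Thus concatenating jumps of tree-length $k$ visits layers in an arithmetic progression. For any $u,v \in V(T)$, let $w$ denote their lowest common ancestor in $T$ (rooted at $z$); the rainbow path is obtained by walking from $u$ up toward $w$ and then from $w$ down to $v$ entirely using tree-jumps of length $k$, except possibly a single shorter initial jump on each side. The lengths of the two initial short jumps are chosen so that the arithmetic progressions of layers visited on the $u$-side and the $v$-side of $w$ interleave without overlap, giving distinct colors at all internal vertices.

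The main obstacle I expect is completing the case analysis over all pairs $u,v$, in particular the critical case in which both $u$ and $v$ are leaves at maximum depth in different branches: here the naive shortest path through $z$ visits precisely the same depths on both sides, causing color collisions. To resolve this, one chooses two distinct initial jump lengths $j_1,j_2 \in \{1,\dots,k-1\}$ (one on each side) so that subsequent jumps of length $k$ keep the visited depths in disjoint residue classes modulo $k$, and hence in disjoint layers. The remaining cases---one of $u,v$ being $z$, both in the same branch, or one in a shorter branch---follow similar lines to Lemma~\ref{lem:square-upperbound}, using the assumption of at least three branches of maximum length to reroute when the naive path fails. The hypothesis $\diam(T) \equiv 0 \pmod k$ is used throughout to ensure that the layers partition the depth range cleanly and that tree-jumps of length $k$ behave uniformly on the color classes.
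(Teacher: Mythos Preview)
Your path construction has a concrete gap. Take $k=3$, $m=\diam(T^k)=4$, so $mk/2=6$, and let $u,v$ be deepest leaves in distinct branches; then $w=z$. Under your coloring $c(z)=4$, and also $c(\text{depth }1)=\lceil(12-2)/3\rceil=4$. If you route through $z$ as you propose, with $j_1=1$ and $j_2=k-1=2$ the $v$-side internal vertex adjacent to $z$ lands at depth~$1$ and collides with~$z$. More generally, for even $m$ the depths hit on each side lie in the residue class $k-j\pmod k$, so the two internal vertices flanking $z$ both sit at depths in $\{1,\dots,k-1\}$; there the only colors your formula produces are $m$ and $m-1$, and since $c(z)=m$, both flanking vertices must carry $m-1$ and collide with each other. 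Thus the path through $w$ is never rainbow in this case. The coloring itself may still be valid, but the fix (skipping $w$ and jumping directly between the two sides) is not what you describe and needs its own argument. Two smaller issues: your phrase ``disjoint residue classes modulo $k$, and hence in disjoint layers'' conflates two different partitions---your layers are depth-\emph{intervals} of width about $k/2$, not residue classes---and the hypothesis of three maximum-length branches is used only for the lower bound in Lemma~\ref{lem:higher-lowerbound}, not for the coloring (so there is nothing to ``reroute'').

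The paper takes a different route that avoids this difficulty entirely. Rather than partitioning all depths into $m$ blocks, it gives distinct colors only to the layers $i$ with $i\equiv 0$ or $i\equiv -1\pmod k$ (for $0<i<\diam(T)/2$) and assigns color~$1$ to every other vertex, including the center. The rainbow $u$--$v$ path then hops through layers $\equiv 0\pmod k$ from $u$ up to $w$, includes $w$, and hops through layers $\equiv -1\pmod k$ from $w$ down to $v$. The two residue classes are disjoint by construction, all their colors lie strictly below $l(w)$, and $w$ either carries its own layer number or color~$1$---neither of which appears elsewhere on the path. This makes the verification a one-line check, with no initial-jump tuning or case analysis on $u,v$.
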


\begin{proof}
	To describe a rainbow coloring with $\diam(T^k)$ colors, we need to count the layers bottom up. That is, we start counting the layers at the vertices that are furthest from the center. Recall that $l(v)$ is the number of the layer that contains $v$. 
	
	Consider the following coloring $c$:
	\begin{align*}
		c(v) = \begin{cases}
			1 & \text{if $v$ is the center vertex,}\\
			i & \text{if $l(v)=i$, and $i\equiv 0 \pmod k$ or $i\equiv -1 \pmod k$, $0 < i < \frac{\diam(T)}{2}$,}\\
			1 & \text{otherwise}. 
		\end{cases}
	\end{align*}
	We will count the number of colors we used. We have one color for every layer $i$ with $i \equiv -1$, or $i \equiv 0 \pmod k$ minus layer 0 and the layer with the center vertex, and one extra color $1$ for the rest of the vertices. There are $\frac{\diam(T)}{2}$ layers excluding the center vertex. Divide the layers $0, 1, 2, \ldots, \frac{\diam(T)}{2} -1$ into blocks of size $k$ (the topmost block does not need to be a complete block). The vertices in layers $-1\pmod k$ are exactly the topmost vertices of the complete blocks, and the vertices in layers $0\pmod k$ are exactly the lowest vertices in the blocks. The number of complete blocks is $\lfloor\frac{\diam(T)/2}{k}\rfloor$, thus the number of layers that is $-1 \pmod k$, is $\lfloor\frac{\diam(T)/2}{k}\rfloor$. And the total number of blocks is $\lceil\frac{\diam(T)/2}{k}\rceil$, so the number of layers that is $0 \pmod k$, minus layer 0, is $\lceil\frac{\diam(T)/2}{k}\rceil - 1$. So the total number of colors used in this coloring is \begin{align*}
	\left\lfloor\frac{\diam(T)/2}{k}\right\rfloor + \left\lceil\frac{\diam(T)/2}{k}\right\rceil - 1 + 1 &=  \left\lfloor\frac{\diam(T^k)}{2}\right\rfloor + \left\lceil\frac{\diam(T^k)}{2}\right\rceil = \diam(T^k).
	\end{align*} 
	
	We claim that this is a rainbow coloring. Let $u$ and $v$ be two vertices in $T^k$. 
	Let $w$ be the lowest common ancestor of $u$ and $v$. Consider the following $u-v$-path: use the layers that are $0 \pmod k$ to go from $u$ to $w$, and the layers that are $-1\pmod k$ to go from $w$ to $v$. 
	See Figure \ref{fig:higher-upperbound}. 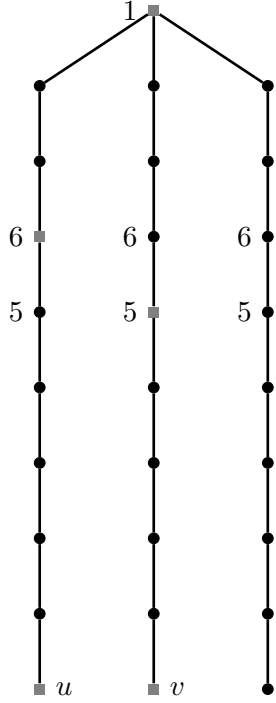
\begin{figure}
	\centering
	\begin{tikzpicture}
		\node[path, label=left:$1$] (z) at (1.5,0) {};
		\node[vertex] (a1) at (0,-1) {};
		\node[vertex] (a2) at (0,-2) {};
		\node[path, label=left:$6$] (a3) at (0,-3) {};
		\node[vertex, label=left:$5$] (a4) at (0,-4) {};
		\node[vertex] (a5) at (0,-5) {};
		\node[vertex] (a6) at (0,-6) {};
		\node[vertex] (a7) at (0,-7) {};
		\node[vertex] (a8) at (0,-8) {};
		\node[path, label=right:$u$] (a9) at (0,-9) {};
		\node[vertex] (b1) at (1.5,-1) {};
		\node[vertex] (b2) at (1.5,-2) {};
		\node[vertex, label=left:$6$] (b3) at (1.5,-3) {};
		\node[path, label=left:$5$] (b4) at (1.5,-4) {};
		\node[vertex] (b5) at (1.5,-5) {};
		\node[vertex] (b6) at (1.5,-6) {};
		\node[vertex] (b7) at (1.5,-7) {};
		\node[vertex] (b8) at (1.5,-8) {};
		\node[path, label=right:$v$] (b9) at (1.5,-9) {};
		\node[vertex] (c1) at (3,-1) {};
		\node[vertex] (c2) at (3,-2) {};
		\node[vertex, label=left:$6$] (c3) at (3,-3) {};
		\node[vertex, label=left:$5$] (c4) at (3,-4) {};
		\node[vertex] (c5) at (3,-5) {};
		\node[vertex] (c6) at (3,-6) {};
		\node[vertex] (c7) at (3,-7) {};
		\node[vertex] (c8) at (3,-8) {};
		\node[vertex] (c9) at (3,-9) {};
		\draw[edge] (z) -- (a1) -- (a2) -- (a3) -- (a4) -- (a5) -- (a6) -- (a7) -- (a8) -- (a9);
		\draw[edge] (z) -- (b1) -- (b2) -- (b3) -- (b4) -- (b5) -- (b6) -- (b7) -- (b8) -- (b9);
		\draw[edge] (z) -- (c1) -- (c2) -- (c3) -- (c4) -- (c5) -- (c6) -- (c7) -- (c8) -- (c9);
	\end{tikzpicture}
	\caption{A tree $T$. Let $k=6$. The numbers indicate the coloring of $T^6$, all unlabeled vertices have color $1$.  The gray square vertices are a rainbow path in $T^6$ from $u$ to $v$. See Lemma \ref{lem:higher-upperbound}.} \label{fig:higher-upperbound}
\end{figure}
	Notice that the internal vertices of such a path are in different layer, all of which are $0$ or $-1 \pmod k$, except for vertex $w$. Hence this is a rainbow path. 
	
	Combining this with Lemma \ref{lem:higher-lowerbound}, we conclude that $\rvc(T^k) = \diam(T^k)$. 
\end{proof}

As in squares of trees, these are the only powers of trees with $\rvc(T^k) \geq \diam(T^k)$. For all other powers of trees it holds that $\rvc(T^k) = \diam(T^k)-1$. 
If $\diam(T) \leq 2k$, then $\diam(T^k) \leq 2$, so then one color suffices. 

\begin{lemma} \label{lem:higher-three-branches}
	Let $T$ be a tree such that the center of $T$ consist of a single vertex, $T$ has diameter at least $2k+1$, and $\diam(T) \not\equiv 0 \pmod k$. 
	Then $\rvc(T^k) = \diam(T^k)-1$. 
\end{lemma}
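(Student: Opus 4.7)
The lower bound $\rvc(T^k) \geq \diam(T^k) - 1$ is immediate: any path in $T^k$ between two vertices at $T^k$-distance $\diam(T^k)$ has at least $\diam(T^k) - 1$ internal vertices, all of which must receive distinct colors in a rainbow coloring. For the upper bound, I plan to adapt the layer-based coloring from Lemma~\ref{lem:higher-upperbound}: index the layers $l(v)$ of $T$ from the leaves at maximum depth ($l=0$) up to the center ($l = d/2$, where $d = \diam(T)$), and assign color $l$ to each vertex at a layer $l$ whenever $l \equiv 0$ or $l \equiv -1 \pmod{k}$ and $0 < l < d/2$, coloring every other vertex with color~$1$. Writing $d = qk + r$ with $0 < r < k$, so $\diam(T^k) = q+1$, and noting that $d$ is even since $T$ has a single center, the counting carried out in the proof of Lemma~\ref{lem:higher-upperbound} gives $\lfloor (d/2)/k \rfloor + \lceil (d/2)/k \rceil$ colors.

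A short parity check then shows this quantity equals $q = \diam(T^k) - 1$ precisely when $q$ is odd, and equals $q + 1 = \diam(T^k)$ when $q$ is even. Hence when $q$ is odd I can use the coloring of Lemma~\ref{lem:higher-upperbound} directly, and its rainbow property follows verbatim from the ascent--descent argument there: given $u,v$, let $w$ be their lowest common ancestor in $T$, go from $u$ up to $w$ jumping to the next layer $\equiv 0 \pmod{k}$ at each step, then from $w$ down to $v$ via the next layer $\equiv -1 \pmod{k}$ at each step. The ascent colors and descent colors lie in disjoint residue classes modulo~$k$, and the base color~$1$ appears at most once on the path (at $w$), so the path is rainbow.

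When $q = 2m$ is even (so $r$ is also even), I will save one color by exploiting that, because $r > 0$, the shortest $T^k$-path between two max-distant vertices $u, v$ in different branches has length $2m + 1$ and does \emph{not} need to pass through the center $z$: it can cross directly from $u$'s branch to $v$'s branch at a pair $(a_m, b_m)$ with $d_T(a_m, b_m) \leq k$. Concretely, taking $a_i$ in $u$'s branch at layer $ik$ and $b_i$ in $v$'s branch at layer $ik - 1$ for $i = 1, \ldots, m$ places all $2m$ internal vertices of this shortest path in distinct special layers of the Lemma~\ref{lem:higher-upperbound} coloring. I therefore drop the base color $1$ entirely and recolor every previously uncolored vertex (including $z$) by an arbitrary existing special color, obtaining a coloring with exactly $2m = \diam(T^k) - 1$ colors that remains rainbow between max-distant pairs in different branches.

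The main obstacle will then be verifying the rainbow property in this even case for all remaining vertex pairs, where the canonical crossing structure above does not directly apply: pairs $u, v$ lying in the same branch, pairs close to $z$ whose only reasonable paths traverse $z$, and pairs whose branches do not extend deep enough for the required layers $ik$ and $ik-1$. For these I plan a case analysis in the spirit of Lemmas~\ref{lem:almost-all-paths-rainbow}--\ref{lem:last-rainbow-paths} from the permutation graph section, producing in each case either a truncated version of the canonical ascent/crossing/descent path or an ad hoc alternative that remains rainbow; ensuring that all such path choices are simultaneously compatible with the single recoloring of non-special layers (so that no color is repeated among internal vertices in any case) is the most delicate point.
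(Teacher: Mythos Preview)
Your overall strategy coincides with the paper's. In fact, your parity split on $q$ (where $\diam(T)=qk+r$, $0<r<k$) is exactly the paper's case distinction: with $d$ even and a single center, one checks that the $T$-distance between the two highest ``$0\bmod k$'' layer vertices in two longest branches equals $r$ when $q$ is even and $k+r$ when $q$ is odd, so ``$q$ odd'' is precisely the paper's case $d(b_1,b_2)>k$ and ``$q$ even'' is $d(b_1,b_2)<k$. Your $q$-odd argument (use the Lemma~\ref{lem:higher-upperbound} coloring verbatim; it already has $q=\diam(T^k)-1$ colors; the ascend-$0$/descend-$(-1)$ path through the lowest common ancestor $w$ is rainbow) is correct and matches the paper.

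The gap is in the $q$-even case. Recoloring the non-special layers by an \emph{arbitrary} special color does not work cleanly: if you pick, say, the lowest special color, then for many pairs $u,v$ with lowest common ancestor $w$ strictly between two special layers you get a genuine repeat on the canonical path, and ``skipping $w$'' can fail because the last ascent vertex and first descent vertex may be more than $k$ apart in $T$. The paper's fix is to recolor with the \emph{highest} special color $k i_0$, where $i_0=\lfloor(\diam(T)/2-1)/k\rfloor$. With that specific choice the remaining verification is far simpler than the permutation-graph style case analysis you anticipate: there are only two subcases. If $l(w)\le k i_0$, the usual path $u\to\cdots\to w\to\cdots\to v$ (ascent on layers $\equiv 0$, descent on layers $\equiv -1$) never touches layer $k i_0$ except possibly at $w$ itself, so $w$'s recolored value $k i_0$ is unique on the path. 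If $l(w)>k i_0$, then the last ascent vertex (layer $k i_0$) and first descent vertex (layer $k i_0-1$) are at $T$-distance $\le r+1\le k$, hence adjacent in $T^k$, and one simply omits $w$ from the path. So replace ``arbitrary'' by ``highest'' and the ``delicate point'' you flag disappears.
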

\begin{proof}
	We will start with a partial coloring of the vertices,
	almost the same coloring as in the proof of Lemma \ref{lem:higher-upperbound}:
	\begin{align*}
	c(v) = \begin{cases}
	i & \text{if $l(v)=i$, and $i\equiv 0 \pmod k$ or $i\equiv -1 \pmod k$, $0 < i < \frac{\diam(T)}{2}$}. 
	\end{cases}
	\end{align*}

	We colored every layer that is $0$ or $-1 \pmod k$, excluding layer $0$ and the center vertex, with a unique color, see Figure \ref{fig:higher-three-branches}. \begin{figure}
	\centering
	\begin{tikzpicture}
	\node[vertex] (z) at (1.5,-2) {};
	\node[vertex] (a3) at (0,-3) {};
	\node[vertex] (a4) at (0,-4) {};
	\node[vertex, label=left:$5$] (a5) at (0,-5) {};
	\node[vertex, label=left:$4$] (a6) at (0,-6) {};
	\node[vertex] (a7) at (0,-7) {};
	\node[vertex] (a8) at (0,-8) {};
	\node[vertex] (a9) at (0,-9) {};
	\node[vertex] (a10) at (0,-10) {};
	\node[vertex] (b3) at (1.5,-3) {};
	\node[vertex] (b4) at (1.5,-4) {};
	\node[vertex, label=left:$5$] (b5) at (1.5,-5) {};
	\node[vertex, label=left:$4$] (b6) at (1.5,-6) {};
	\node[vertex] (b7) at (1.5,-7) {};
	\node[vertex] (b8) at (1.5,-8) {};
	\node[vertex] (b9) at (1.5,-9) {};
	\node[vertex] (b10) at (1.5,-10) {};
	\node[vertex] (c3) at (3,-3) {};
	\node[vertex] (c4) at (3,-4) {};
	\node[vertex, label=left:$5$] (c5) at (3,-5) {};
	\node[vertex, label=left:$4$] (c6) at (3,-6) {};
	\node[vertex] (c7) at (3,-7) {};
	\node[vertex] (c8) at (3,-8) {};
	\node[vertex] (c9) at (3,-9) {};
	\node[vertex] (c10) at (3,-10) {};
	\draw[edge] (z) -- (a3) -- (a4) -- (a5) -- (a6) -- (a7) -- (a8) -- (a9) -- (a10);
	\draw[edge] (z) -- (b3) -- (b4) -- (b5) -- (b6) -- (b7) -- (b8) -- (b9) -- (b10);
	\draw[edge] (z)-- (c3) -- (c4) -- (c5) -- (c6) -- (c7) -- (c8) -- (c9) -- (c10);
	\end{tikzpicture}
	\qquad\qquad
	\begin{tikzpicture}
	\node[vertex] (z) at (1.5,-3) {};
	\node[vertex] (a4) at (0,-4) {};
	\node[vertex, label=left:$5$] (a5) at (0,-5) {};
	\node[vertex, label=left:$4$] (a6) at (0,-6) {};
	\node[vertex] (a7) at (0,-7) {};
	\node[vertex] (a8) at (0,-8) {};
	\node[vertex] (a9) at (0,-9) {};
	\node[vertex] (a10) at (0,-10) {};
	\node[vertex] (b4) at (1.5,-4) {};
	\node[vertex, label=left:$5$] (b5) at (1.5,-5) {};
	\node[vertex, label=left:$4$] (b6) at (1.5,-6) {};
	\node[vertex] (b7) at (1.5,-7) {};
	\node[vertex] (b8) at (1.5,-8) {};
	\node[vertex] (b9) at (1.5,-9) {};
	\node[vertex] (b10) at (1.5,-10) {};
	\node[vertex] (c4) at (3,-4) {};
	\node[vertex, label=left:$5$] (c5) at (3,-5) {};
	\node[vertex, label=left:$4$] (c6) at (3,-6) {};
	\node[vertex] (c7) at (3,-7) {};
	\node[vertex] (c8) at (3,-8) {};
	\node[vertex] (c9) at (3,-9) {};
	\node[vertex] (c10) at (3,-10) {};
	\draw[edge] (z) -- (a4) -- (a5) -- (a6) -- (a7) -- (a8) -- (a9) -- (a10);
	\draw[edge] (z) -- (b4) -- (b5) -- (b6) -- (b7) -- (b8) -- (b9) -- (b10);
	\draw[edge] (z) -- (c4) -- (c5) -- (c6) -- (c7) -- (c8) -- (c9) -- (c10);
	\end{tikzpicture}
	\caption{Trees $T$, $k=5$. The numbers indicate the partial coloring of $T^k$, as in Lemma \ref{lem:higher-three-branches}.  
	} \label{fig:higher-three-branches}
\end{figure}
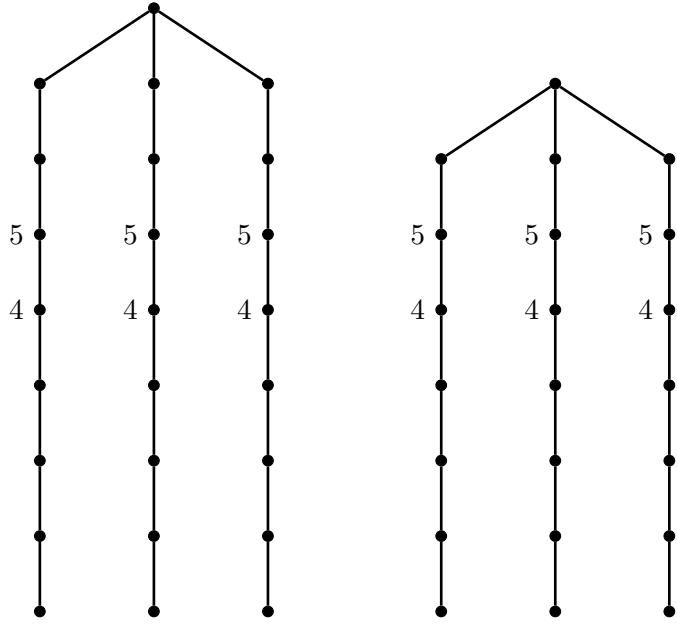
	The idea of a rainbow path from a vertex $u$ to a vertex $v$ will be, as in Lemma \ref{lem:higher-upperbound}, that we walk from $u$ to the lowest common ancestor $w$ using the layer $0 \pmod k$ and then from $w$ to $v$ via the layer $-1\pmod k$. This time, since we want to use only $\diam(T^k) - 1$ colors, we have to be careful about the color that we assign to the lowest common ancestor $w$. 
	
	First, we count the number of colors we used so far. The number of layers $l$ excluding $0$ and the center is $l=\frac{\diam(T)}{2} - 1$. We will divide those layers in blocks of size $k$, starting at layer $1$ (the topmost block does not need to be a complete block). Notice that the top two layers in every complete block are colored. There are $\lfloor \frac{l}{k} \rfloor$ complete blocks, so $2 \lfloor \frac{l}{k} \rfloor$ colors for those blocks. If the topmost block is not a complete block, it uses a color only if it has size $k-1$. But then the center vertex is in a layer $0 \pmod k$, which implies that $\diam(T) \equiv 0 \pmod k$, a contradiction with the assumptions. We conclude that the number of colors used is $2 \lfloor \frac{l}{k} \rfloor$. 	
	
	We distinguish cases for the coloring of rest of the vertices. 
	Let $z$ be the center vertex, and let $B_1$ and $B_2$ be two longest branches from $z$. Define $b_1$ as a vertex in $B_1$ in the highest layer $0 \pmod k$ that is colored and $b_2$ as a vertex in $B_2$ in the highest layer $0 \pmod k$ that is colored. Let $d(b_1, b_2)$ be the distance between $b_1$ and $b_2$ in $T$. 
	
	Suppose that $d(b_1, b_2) > k$. We claim that $\diam(T^k)-1 \geq 2\lfloor\frac{l}{k}\rfloor + 1$. Let $u \in B_1$, $v\in B_2$ be vertices in layer $0$. A $u,v$-path contains a vertex in every complete block in $B_1$, a vertex in every complete block in $B_2$ and a vertex in a topmost incomplete block or $z$. All in all, this are $2\lfloor\frac{l}{k}\rfloor + 1$ internal vertices. So $\diam(T^k) - 1 \geq 2\lfloor\frac{l}{k}\rfloor + 1$. It follows that we can use one more color in our coloring $c$. Use this extra color for all uncolored vertices. Then, for any two vertices $u, v$, the path described before is a rainbow path: use the $0\pmod k$ layers to go from $u$ to $z$ and the $-1\pmod k$ layers to go from $z$ to $v$.  
	
	Now suppose that $d(b_1, b_2) = k$. It follows that $k \mid \diam(T)$, a contradiction with the assumptions of the lemma.  
	
	Now suppose that $d(b_1, b_2) \leq k-1$. Color all uncolored vertices with the highest color used, that is color $ki$ where $i = \lfloor\frac{\diam(T)/2-1}{k}\rfloor$. Let $u$ and $v$ be two vertices. We distinguish two cases. 
	
	If the lowest common ancestor $w$ is in layer $ki$ or lower, we use the same path described above: use the $0\pmod k$ layers to go from $u$ to $w$ and the $-1\pmod k$ layers to go from $w$ to $v$. Because $w$ is in layer $ki$ or lower, no vertex of layer $ki$ is used as internal vertex of the $u-w$ and $w-v$ paths. So the color of $w$ is unique in the path, and this is a rainbow path. 
	
	If $w$ is above layer $ki$, then we use the same path but exclude vertex $w$. Write $w_1$ for the last internal vertex of the $u-w$-path, and $w_2$ for the first internal vertex of the $w-v$-path. Notice that $w_1$ is in layer $ki$ and $w_2$ is in layer $ki-1$. As $d(b_1, b_2) \leq k-1$, the distance between $w_1$ and $w_2$ is at most $k$, so there exists an edge $w_1w_2$ in $T^k$. We conclude that the path using the $0 \pmod k$ vertices to go from $u$ to $w_1$ combined with the path using the $-1 \pmod k$ vertices to go from $w_2$ to $v$ is a rainbow path in $T^k$.  
\end{proof}

\begin{lemma}\label{lem:higher-two-branches}
	Let $T$ be a tree such that the center of $T$ consist of a single vertex, $T$ has diameter at least $3k$, $\diam(T) \equiv 0 \pmod k$, and there are exactly two branches from the center with maximum length. Then $\rvc(T^k) = \diam(T^k)-1$. 
\end{lemma}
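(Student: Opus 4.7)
The plan is to adapt the construction of Lemma~\ref{lem:square-two-branches} to higher powers. Starting from the coloring of Lemma~\ref{lem:higher-upperbound}, which uses $\diam(T^k)$ colors, I will produce a coloring with $\diam(T^k)-1$ colors via two modifications. Let $B_1,B_2$ be the two branches of maximum length from the unique center $z$. First, within $B_1$ only, for every $i\geq 1$ such that both layers $ik-1$ and $ik$ lie in $(0,\diam(T)/2)$, I swap the two colors assigned by Lemma~\ref{lem:higher-upperbound}, so that in $B_1$ the vertex at layer $ik$ receives color $ik-1$ and the vertex at layer $ik-1$ receives color $ik$; the coloring of $B_2$ and of every shorter branch remains unchanged. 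Second, I identify the color of the topmost colored layer (the layer closest to $z$) with color~$1$. This removes exactly one color from the palette, bringing the total to $\diam(T^k)-1$.

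Next, I will verify the rainbow property case by case on the position of $u$ and $v$. For $u,v$ in the same branch, I reuse the Lemma~\ref{lem:higher-upperbound} path going from $u$ up to their lowest common ancestor via $0\pmod k$ layers and from there down to $v$ via $-1\pmod k$ layers. In $B_2$ the coloring is unchanged apart from a single layer (the identified topmost one), which appears at most once on any such path, so the Lemma~\ref{lem:higher-upperbound} argument carries through. In $B_1$ the swap only permutes which side of each pair contributes which color, leaving the sequence of internal colors a permutation of the original and therefore still all distinct. For $u,v$ in different branches the shortest $u,v$-path crosses between $B_1$ and $B_2$ at its midpoint: when $\diam(T^k)$ is even the crossing is forced through $z$, and the two vertices adjacent to $z$ sit at the topmost paired $0\pmod k$ layer in $B_1$ and $B_2$, distinguished in color by the swap; when $\diam(T^k)$ is odd the crossing is a direct $T^k$-edge between $B_1$ and $B_2$ at that same layer, again distinguished by the swap.

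The main obstacle will be the sub-diameter paths whose lowest common ancestor lies exactly in the layer that was identified with color~$1$. There, the identified layer, the center, and the surrounding ``otherwise'' layers all share color~$1$, which threatens to produce two color-$1$ internal vertices on a single path. I plan to resolve this by showing that one can always choose, for each such pair, a variant of the Lemma~\ref{lem:higher-upperbound} path that uses the $-1\pmod k$ layer on the problematic side in place of the $0\pmod k$ layer (or vice versa); thanks to the swap in $B_1$ this variant produces a distinct color on that side. In a few boundary subcases a single $T^k$-edge that shortcuts directly across the identified layer suffices instead. The matching lower bound $\rvc(T^k)\geq\diam(T^k)-1$ is immediate since $\rvc(G)\geq\diam(G)-1$ for every connected graph~$G$ with $\diam(G)\geq 2$.
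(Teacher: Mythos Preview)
Your high-level plan --- swap the paired colors in $B_1$ and merge the topmost color class with color~$1$ --- is exactly the paper's idea, and your treatment of pairs $u,v$ inside a single branch and of the cross case $B_1\leftrightarrow B_2$ is fine. However, there is a genuine gap: your case analysis forgets the \emph{shorter} branches (those other than $B_1,B_2$), and with your coloring these cases actually fail.

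Concretely, take $\diam(T^k)$ even, write $m:=\diam(T)/2=jk$, and suppose $T$ has a third branch $B_3$ from $z$ of length $\ell_3$ with $k<\ell_3<m$ (nothing in the hypotheses rules this out). Let $u\in B_2$ lie at layer~$0$ and $v\in B_3$ lie at layer $m-\ell_3$. Under your coloring, $B_2$ and $B_3$ are colored identically: the layers $k,2k,\dots,(j-1)k$ carry colors $k,2k,\dots,(j-1)k$ in both, while every layer in $\{(j-1)k+1,\dots,m-1\}$ and the center $z$ carry color~$1$. Any $T^k$-path from $u$ to $v$ must use a vertex within $T$-distance $k$ of $z$ \emph{and} a further $B_3$-vertex at layer $\le (j-1)k$. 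Ascending in $B_2$ via layers $\equiv 0\pmod k$ and descending in $B_3$ via layers $\equiv -1\pmod k$ forces both $z$ (color~$1$) and the $B_3$-vertex at layer $m-1$ (color~$1$ after your identification) onto the path, since $z$ is not $T^k$-adjacent to any $B_3$-vertex at layer $\le (j-1)k-1$. Descending via layers $\equiv 0\pmod k$ instead repeats the ascent colors. No single-edge shortcut avoids both collisions. The same failure occurs for $u,v$ in two distinct shorter branches. Your ``main obstacle'' paragraph does not cover this: the swap you invoke lives only in $B_1$, so it cannot rescue a $B_2\leftrightarrow B_3$ or $B_3\leftrightarrow B_3'$ path.

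The paper resolves this by coloring the shorter branches $B_3$ with a \emph{shifted} scheme: layer $ik$ keeps color $ik$, but the companion colored layer is $ik+1$ (with color $ik-1$) rather than $ik-1$. One then descends in $B_3$ via layers $\equiv 1\pmod k$, whose colors $k-1,2k-1,\dots$ are disjoint from the ascent colors $k,2k,\dots$ and from color~$1$ at the top. You need an analogous third ingredient beyond the $B_1$-swap and the top-layer merge.
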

\begin{proof}
	To prove this, we will combine the ideas of Lemma \ref{lem:square-two-branches} and \ref{lem:higher-three-branches}. 
	Let $B_1$ and $B_2$ be the branches with maximum length. Let $B_3$ be all other branches. 
	Suppose that $\diam(T^k)$ is even. Consider the following coloring $c$, see Figure \ref{fig:higher-two-branches-even}:
	\begin{align*}
	c(v) = \begin{cases}
	i & \text{if $v\in B_2$ is in layer $i$, $i \equiv 0, -1 \pmod k$, $1\leq i<\diam(T)/2-1$}\\
	ki-1 & \text{if $v\in B_1$ is in layer $ki$, $1\leq ki < \diam(T)/2-1$}\\
	ki & \text{if $v\in B_1$ is in layer $ki-1$, $1\leq ki - 1 < \diam(T)/2-1$}\\
	ki & \text{if $v\in B_3$ is in layer $ki$, $1\leq ki<\diam(T)/2-1$}\\
	ki-1 & \text{if $v\in B_3$ is in layer $ki+1$, $1\leq ki+1 <\diam(T)/2-1$}\\
	1 & \text{otherwise}. 
	\end{cases}
	\end{align*}

	First, we count the number of colors we used. We used the colors $ki$ and $ki - 1$ for $1\leq ki - 1 < \diam(T)/2-1$. These are \begin{align*}
		2\left\lfloor \frac{\diam(T)/2-1}{k}\right\rfloor 
		&= 2\left\lfloor \frac{\diam(T)/k}{2} - \frac{1}{k}\right\rfloor \\
		&= 2 \left\lfloor \frac{\diam(T^k)}{2} - \frac{1}{k}\right\rfloor\\
		&= 2 \left( \frac{\diam(T^k)}{2} - 1 \right)\\
		&= \diam(T^k) - 2 
		\end{align*} 
	colors. Here, the third line holds since we assumed that $\diam(T^k)$ is even. We used one extra color for the rest of the vertices, which makes a total of $\diam(T^k) - 1$ colors. 
	
	Let $u$ and $v$ be two vertices of $T$. We claim that there exists a rainbow path between $u$ and $v$ in $T^k$. We will distinguish several cases. 
	
	Suppose that $u \in B_1$, $v\notin B_1$. Let $z$ be the center of the graph. Use the layers $0 \pmod k$, to walk from $u$ to $z$, and the layers $0\pmod k$, to go from $z$ to $v$. This is a rainbow path. 
	
	Suppose that $u \in B_i$, $v\in B_i$, $i= 1,2$. Let $w$ be the lowest common ancestor of $u$ and $v$. Use the layers $0\pmod k$ to go from $u$ to $w$, and the layers $-1 \pmod k$ to go from $w$ to $v$. This is a rainbow path. 
	
	Suppose that $u \in B_i$, $v\in B_3$, $i = 2,3$. Let $w$ be the lowest common ancestor of $u$ and $v$. Use the layers $0 \pmod k$ to walk from $u$ to $w$, and the layers $1\pmod k$ to go from $w$ to $v$. This is a rainbow path.  
	
	Now suppose that $\diam(T^k)$ is odd. We slightly change the coloring: use color $ki$ with $i = \lfloor\frac{\diam(T)/2-1}{k}\rfloor$ for the case `otherwise' instead of color $1$. Notice that this is the highest color used in the rest of the coloring.   
	The number of colors used is: \begin{align*}
		2\left\lfloor\frac{\diam(T)/2-1}{k}\right\rfloor 
		&= 2\left\lfloor\frac{\diam(T)/k}{2} - \frac{1}{k}\right\rfloor \\
		&= 2\left\lfloor\frac{\diam(T^k)}{2} - \frac{1}{k}\right\rfloor \\
		&\leq 2  \left\lfloor\frac{\diam(T^k)}{2}\right\rfloor \\
		&= \diam(T^k) - 1
		\end{align*}

We can use almost the same paths as in the case of $\diam(T^k)$ being even. 

Suppose that $u \in B_1$, $v\notin B_1$. Let $z$ be the center of the graph. Let $P$ be the path from $u$ to $z$ that uses the layers $0 \pmod k$, and $Q$ the path from $z$ to $v$ that uses the layers $0\pmod k$. Let $p_j$ be the last vertex before $z$ in $P$ and let $q_1$ be the first vertex after $z$ in $Q$. Notice that $z$ has the same color as $q_1$, so we cannot include both $z$ and $q_1$ in a rainbow path. Notice that both $p_j$ and $q_1$ are in the highest layer $ki$. Since the diameter $\diam(T) \equiv 0 \pmod k$ and $\diam(T^k)$ is odd, it follows that the distance between $p_j$ and $q_1$ in $T$ is exactly $k$. Thus there is an edge $p_jq_1$ in $T^k$. Combine the paths $P$ and $Q$ but exclude vertex $z$, to obtain a path from $u$ to $v$. This is a rainbow path. 

Consider the other two cases, so, suppose that $u \in B_i$, $v\in B_i$, $i= 1,2$ or $u \in B_i$, $v\in B_3$, $i = 2,3$. Let $w$ be the lowest common ancestor of $u$ and $v$. If $w$ is in layer $i$ with $i \leq k\lfloor\frac{\diam(T)/2-1}{k}\rfloor$, then use the same path as in the case $\diam(T^k)$ even. If $w$ is in layer $i$ with $i > k\lfloor\frac{\diam(T)/2-1}{k}\rfloor$, then $w$ has the same color as its predecessor $p_j$ or successor $q_1$ in the path described in the case $\diam(T^k)$ even. As in the previous case, the distance between $p_j$ and $q_1$ in $T$ is at most $k$, so there exists an edge $p_jq_1$ in $T^k$. So if we exclude $w$ from the path described in the case $\diam(T^k)$ even, we still have a path, and this is a rainbow path. 
\end{proof}
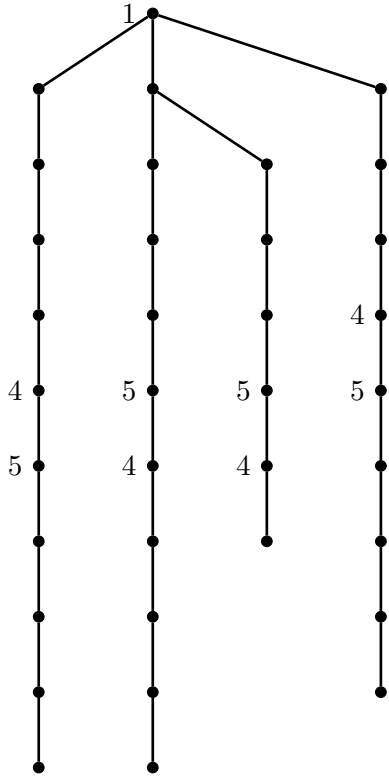
\begin{figure}
	\centering
	\begin{tikzpicture}
	\node[vertex, label=left:$1$] (z) at (1.5,0) {};
	\node[vertex] (a1) at (0,-1) {};
	\node[vertex] (a2) at (0,-2) {};
	\node[vertex] (a3) at (0,-3) {};
	\node[vertex] (a4) at (0,-4) {};
	\node[vertex, label=left:$4$] (a5) at (0,-5) {};
	\node[vertex, label=left:$5$] (a6) at (0,-6) {};
	\node[vertex] (a7) at (0,-7) {};
	\node[vertex] (a8) at (0,-8) {};
	\node[vertex] (a9) at (0,-9) {};
	\node[vertex] (a10) at (0,-10) {};
	\node[vertex] (b1) at (1.5,-1) {};
	\node[vertex] (b2) at (1.5,-2) {};
	\node[vertex] (b3) at (1.5,-3) {};
	\node[vertex] (b4) at (1.5,-4) {};
	\node[vertex, label=left:$5$] (b5) at (1.5,-5) {};
	\node[vertex, label=left:$4$] (b6) at (1.5,-6) {};
	\node[vertex] (b7) at (1.5,-7) {};
	\node[vertex] (b8) at (1.5,-8) {};
	\node[vertex] (b9) at (1.5,-9) {};
	\node[vertex] (b10) at (1.5,-10) {};
	\node[vertex] (e2) at (3,-2) {};
	\node[vertex] (e3) at (3,-3) {};
	\node[vertex] (e4) at (3,-4) {};
	\node[vertex, label=left:$5$] (e5) at (3,-5) {};
	\node[vertex, label=left:$4$] (e6) at (3,-6) {};
	\node[vertex] (e7) at (3,-7) {};
	\node[vertex] (c1) at (4.5,-1) {};
	\node[vertex] (c2) at (4.5,-2) {};
	\node[vertex] (c3) at (4.5,-3) {};
	\node[vertex, label=left:$4$] (c4) at (4.5,-4) {};
	\node[vertex, label=left:$5$] (c5) at (4.5,-5) {};
	\node[vertex] (c6) at (4.5,-6) {};
	\node[vertex] (c7) at (4.5,-7) {};
	\node[vertex] (c8) at (4.5,-8) {};
	\node[vertex] (c9) at (4.5,-9) {};
	\draw[edge] (z) -- (a1) -- (a2) -- (a3) -- (a4) -- (a5) -- (a6) -- (a7) -- (a8) -- (a9) -- (a10);
	\draw[edge] (z) -- (b1) -- (b2) -- (b3) -- (b4) -- (b5) -- (b6) -- (b7) -- (b8) -- (b9) -- (b10);
	\draw[edge] (b1) -- (e2) -- (e3) -- (e4) -- (e5) -- (e6) -- (e7);
	\draw[edge] (z) -- (c1) -- (c2) -- (c3) -- (c4) -- (c5) -- (c6) -- (c7) -- (c8) -- (c9);
	\end{tikzpicture}
	\caption{A tree $T$. The numbers indicate a rainbow colouring of $T^k$, for $k=5$, the unlabeled vertices have color $1$. See Lemma \ref{lem:higher-two-branches}.  } \label{fig:higher-two-branches-even}
\end{figure}

\begin{lemma}\label{lem:higher-center-two-vertices}
	Let $T$ be a tree such that the center of $T$ consist of two vertices and $T$ has diameter at least $2k+1$. Then $\rvc(T^k) = \diam(T^k)-1$. 
\end{lemma}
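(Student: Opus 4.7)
The plan is to prove $\rvc(T^k) = \diam(T^k) - 1$ by verifying the two inequalities separately. The lower bound $\rvc(T^k) \ge \diam(T^k) - 1$ holds for every graph, so I only need to exhibit a rainbow coloring of $T^k$ using at most $\diam(T^k) - 1$ colors. For this I will adapt the layer-based coloring from Lemma~\ref{lem:higher-upperbound}, setting $z_1, z_2$ for the two centers, $r = (\diam(T) - 1)/2$ (note $r \ge k$ by hypothesis), and $\ell(v) = \min(d_T(v, z_1), d_T(v, z_2))$. The coloring $c$ will assign distinct colors to layers with $\ell(v) \in \{1, \ldots, r\}$ at residues $0$ or $-1 \pmod k$, and color $1$ to all other vertices, including both centers $z_1, z_2$. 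A careful count, mirroring that in Lemma~\ref{lem:higher-upperbound}, shows this uses at most $\diam(T^k)$ colors; when it uses exactly $\diam(T^k)$ I will collapse one colored layer into color $1$ to save a color. The ability to save this color comes from the two-center setting: since $\diam(T) = 2r + 1$ is odd, the two centers jointly occupy the depth-$0$ slot, and a layer at the opposite extreme can be merged with color $1$ without breaking the rainbow property.

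The rainbow path verification proceeds by case analysis on whether $u$ and $v$ lie on the same side of the edge $z_1 z_2$. In the cross-center case (say $u$ in the subtree $B_1$ of $z_1$ and $v$ in the subtree $B_2$ of $z_2$), the tree-path from $u$ to $v$ has layer sequence $\ell(u), \ell(u) - 1, \ldots, 0, 0, 1, \ldots, \ell(v)$ passing through both centers; sampling it with $T^k$-jumps of length $k$ lands the internal vertices on layers of residue $0 \pmod k$ on the $u$-side and residue $-1 \pmod k$ on the $v$-side, as in Lemma~\ref{lem:higher-upperbound}, giving them distinct colors. In the same-side case (both $u, v \in B_1$, say), let $w$ be their LCA in $B_1$; the tree-path goes up from $u$ to $w$ and back down to $v$, and I will sample it so that the up-segment contributes layers at residue $0 \pmod k$ and the down-segment contributes layers at residue $-1 \pmod k$, following the pattern of Lemma~\ref{lem:square-center-two-vertices}, so the two segments use disjoint color classes and the internal vertices are therefore pairwise differently colored.

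The main obstacle I expect is the same-side case when $d_{T^k}(u, v)$ equals $\diam(T^k)$: here the sampling must be tight enough that transitioning between the up and down residues at $w$ requires a single jump of length strictly less than $k$ near $w$ to shift the residue class. Handling this carefully, including boundary situations where $\ell(u)$ or $\ell(v)$ sits in a particular residue modulo $k$, requires checking a handful of subcases parallel to the case analyses inside Lemmas~\ref{lem:higher-two-branches} and~\ref{lem:higher-three-branches}. The hypothesis $\diam(T) \ge 2k + 1$, equivalently $r \ge k$, provides the room in each branch needed for this transition jump to exist.
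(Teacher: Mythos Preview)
Your proposal has a genuine gap in the cross-center case that cannot be repaired by merely collapsing a layer. The coloring you describe is \emph{symmetric}: a vertex at layer $\ell$ in $B_1$ receives the same color as a vertex at layer $\ell$ in $B_2$. This symmetry is fatal whenever $\diam(T)$ is a multiple of $k$ (which occurs for odd $k$, e.g.\ $\diam(T)=15$, $k=5$, $r=7$). Take $u\in B_1$ and $v\in B_2$ at maximal depth $r$. Since $d_T(u,v)=\diam(T)=k\cdot\diam(T^k)$, \emph{every} $T^k$-path of length $\diam(T^k)$ between $u$ and $v$ is forced: its internal vertices sit at tree-distance exactly $k,2k,\ldots$ from $u$, hence at the same layer on the $u$-side and on the $v$-side (in the example, both at $\ell=2$). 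Under your symmetric coloring these two vertices get the same color. And with only $\diam(T^k)-1$ colors available, any rainbow $u$--$v$ path has at most $\diam(T^k)-1$ internal vertices, so it must be a shortest path; thus no rainbow $u$--$v$ path exists. Your ``collapse one layer'' step does not help here, and your claim that sampling by length-$k$ jumps lands on residues $0$ and $-1$ is simply false: the residues you land on depend on $\ell(u)\bmod k$, not on $0$.

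The missing idea, which the paper implements, is to break the symmetry: color layers in $B_1$ by the rule $c(\text{layer }ki)=ki$, $c(\text{layer }ki-1)=ki-1$, but in $B_2$ \emph{swap} these two, setting $c(\text{layer }ki)=ki-1$ and $c(\text{layer }ki-1)=ki$. Then the cross-center path can use the residue-$0$ layers on \emph{both} sides (jumping directly from $a_i\in B_1$ to $b_j\in B_2$ at the top colored layer when $d_T(a_i,b_j)\le k$), and the swap guarantees these vertices receive different colors. This is exactly the trick already used in Lemma~\ref{lem:square-two-branches} and Lemma~\ref{lem:higher-two-branches}; you cited these lemmas for the same-side transition subcases but did not import the swap into your coloring. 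Once the swap is in place, the paper's case split on whether $d_T(a_1,b_1)\le k$ or $>k$ handles both the color count and the rainbow-path construction cleanly.
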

\begin{proof}
	Let $z_1$ and $z_2$ be the two center vertices. Let $B_1$ be all branches of $z_1$, including $z_1$, and $B_2$ be all branches of $z_2$, including $z_2$. 
	
	Divide the layers $1, 2, \ldots, \frac{\diam(T) - 1}{2}$ in blocks of size $k$, starting at layer $1$ (the topmost block does not need to be a complete block). Let $l$ be the number of complete blocks in $B_1$. Let $a_1, a_2, \ldots$ be the topmost vertices in the topmost complete blocks in $B_1$ (possibly $a_1 = z_1$), that is, $a_1, a_2, \ldots$ are all vertices in $B_1$ in layer $kl$. Analoguously, let $b_1, b_2, \ldots$ be the topmost vertices in the topmost complete blocks in $B_2$ (possibly $b_1 = z_2$), that is, $b_1, b_2, \ldots$ are all vertices in $B_2$ in layer $kl$. Let $d(a_1, b_1)$ be the distance between $a_1$ and $b_1$ in $T$. We will distinguish two cases. 
	
	First, suppose that $1\leq d(a_1, b_1) \leq k$. 
	Consider the following partial coloring: 
	\begin{align*}
		c(v) = \begin{cases}
			i & \text{if $v\in B_1$ is in layer $i$, $i \equiv 0, -1 \pmod k$, $1\leq i\leq kl$}\\
			ki-1 & \text{if $v\in B_2$ is in layer $ki$, $1\leq ki \leq kl$}\\
			ki & \text{if $v\in B_2$ is in layer $ki-1$, $1\leq ki - 1 \leq kl$}. 
		\end{cases}
	\end{align*}
	Color the rest of the vertices with the highest color used in the partial coloring, that is, with color $kl$. 
	
	Notice that the partial coloring colors the top two vertices of every complete block, so the number of colors we used equals $2l$. 
	Let $u$ be a vertex in layer 0 in $B_1$ and $v$ a vertex in layer 0 in $B_2$. Notice that a shortest path from $u$ to $v$ uses at least one vertex from every complete block. It follows that $\diam(T^k) \geq 2l + 1$. So we conclude that the number of colors we used is at most $\diam(T^k)-1$. 
	
	We show that this coloring is a rainbow coloring. 
	Let $u \in B_1$ and $v\in B_2$ be two vertices in different sets of branches. Consider the path $P$ from $u$ to its ancestor $a_1$ using the layers that are $0 \pmod k$, and the path $Q$ from the ancestor $b_j$ of $v$ to $v$ using the layers that are $0 \pmod k$. Since $d(a_i, b_j) = d(a_1, b_1) \leq k$, there exists an edge $a_i b_j$ in $T^k$. Hence we can combine $P$ and $Q$ to a path from $u$ to $v$. This is a rainbow path. 
	
	Let $u$ and $v$ be two vertices in the same set of branches, so $u$ and $v$ are both in $B_1$ or both in $B_2$. Let $w$ be their lowest common ancestor, and let $i$ be the layer that contains $w$. If $i \leq kl$, then consider the path $P$ from $u$ to $w$ using the layer $0 \pmod k$ and the path $Q$ from $w$ to $v$ using the layers $-1 \pmod k$. Combining them yields a rainbow path from $u$ to $v$. 
	Suppose that $i > kl$. Notice that we cannot simply use the path $PQ$, since $w$ has the same color as the vertices in layer $kl$ or $kl-1$ (depending on whether $u$ and $v$ are in $B_1$ or $B_2$). Let $P$ and $Q$ be as before and let $p_j$ be the last vertex before $w$ in $P$ and $q_1$ the first vertex after $w$ in $Q$. Notice that $p_j$ is in layer $kl$ and $q_1$ is in layer $kl-1$. Then the distance between $p_j$ and $q_1$ in $T$ is at most $d(a_1, b_1) \leq k$, hence there is an edge $p_jq_1$ in $T^k$. So combining $P$ and $Q$ but excluding vertex $w$ is a path from $u$ to $v$, and it is a rainbow path.

	Second, suppose that $d(a_1, b_1) > k$. 
	Consider the following coloring:
	\begin{align*}
	c(v) = \begin{cases}
	i & \text{if $v\in B_1$ is in layer $i$, $i \equiv 0, -1 \pmod k$, $1\leq i \leq kl$}\\
	ki-1 & \text{if $v\in B_2$ is in layer $ki$, $1\leq ki \leq kl$}\\
	ki & \text{if $v\in B_2$ is in layer $ki-1$, $1\leq ki - 1 \leq kl$} \\
	1 & \text{otherwise}. 
	\end{cases}
	\end{align*}
	The number of colors we used in this coloring is $2l + 1$. The diameter of this graph is also $2l+2$: for a vertex $u$ in layer $0$ of $B_1$ and $v$ in layer $0$ of $B_2$, a shortest path needs at least one vertex from every complete block and a vertex from the topmost (incomplete) blocks. Such a path has length at least $2l+2$. So the number of colors we used is at most $\diam(T^k) - 1$. 
	
	We will show that this is a rainbow coloring. For vertices $u \in B_1$ and $v \in B_2$, we use the following path: from $u$ to $z_1$ using the layers $0 \pmod k$ and from $z_1$ to $v$ using the layers $0 \pmod k$. For two vertices $u$ and $v$ in the same set of branches, so both in $B_1$ or both in $B_2$, use the layers $0\pmod k$ to go from $u$ to the lowest common ancestor $w$ and the layers $-1 \pmod k$ to go from $w$ to $v$. Those paths are rainbow paths. 
\end{proof}
\begin{figure}
	\centering
	\begin{tikzpicture}
	\node[vertex] (z1) at (.75,0) {};
	\node[vertex] (z2) at (2.25,0) {};
	\node[vertex] (a1) at (0,-1) {};
	\node[vertex, label=left:$5$, label=right:$a_1$] (a2) at (0,-2) {};
	\node[vertex, label=left:$4$] (a3) at (0,-3) {};
	\node[vertex] (a4) at (0,-4) {};
	\node[vertex] (a5) at (0,-5) {};
	\node[vertex] (a6) at (0,-6) {};
	\node[vertex] (a7) at (0,-7) {};
	\node[vertex] (b1) at (1.5,-1) {};
	\node[vertex, label=left:$4$, label=right:$b_1$] (b2) at (1.5,-2) {};
	\node[vertex, label=left:$5$] (b3) at (1.5,-3) {};
	\node[vertex] (b4) at (1.5,-4) {};
	\node[vertex] (b5) at (1.5,-5) {};
	\node[vertex] (b6) at (1.5,-6) {};
	\node[vertex] (b7) at (1.5,-7) {};
	\node[vertex] (c1) at (3,-1) {};
	\node[vertex, label=left:$4$, label=right:$b_2$] (c2) at (3,-2) {};
	\node[vertex, label=left:$5$] (c3) at (3,-3) {};
	\node[vertex] (c4) at (3,-4) {};
	\node[vertex] (c5) at (3,-5) {};
	\node[vertex] (c6) at (3,-6) {};
	\node[vertex] (c7) at (3,-7) {};
	\draw[edge] (z1) -- (a1) -- (a2) -- (a3) -- (a4) -- (a5) -- (a6) -- (a7);
	\draw[edge] (z2) -- (b1) -- (b2) -- (b3) -- (b4) -- (b5) -- (b6) -- (b7);
	\draw[edge] (z2) -- (c1) -- (c2) -- (c3) -- (c4) -- (c5) -- (c6) -- (c7);
	\draw[edge] (z1) -- (z2);
	\end{tikzpicture}
	\qquad\qquad
	\begin{tikzpicture}
	\node[vertex, label=left:$10$, label=above:$a_1$] (z1) at (.75,3) {};
	\node[vertex, label=right:$9$, label=above:$b_1$] (z2) at (2.25,3) {};
	\node[vertex, label=left:$9$] (a-2) at (0,2) {};
	\node[vertex] (a-1) at (0,1) {};
	\node[vertex] (a0) at (0,0) {};
	\node[vertex] (a1) at (0,-1) {};
	\node[vertex, label=left:$5$] (a2) at (0,-2) {};
	\node[vertex, label=left:$4$] (a3) at (0,-3) {};
	\node[vertex] (a4) at (0,-4) {};
	\node[vertex] (a5) at (0,-5) {};
	\node[vertex] (a6) at (0,-6) {};
	\node[vertex] (a7) at (0,-7) {};
	\node[vertex, label=left:$10$] (b-2) at (1.5,2) {};
	\node[vertex] (b-1) at (1.5,1) {};
	\node[vertex] (b0) at (01.5,0) {};
	\node[vertex] (b1) at (1.5,-1) {};
	\node[vertex, label=left:$4$] (b2) at (1.5,-2) {};
	\node[vertex, label=left:$5$] (b3) at (1.5,-3) {};
	\node[vertex] (b4) at (1.5,-4) {};
	\node[vertex] (b5) at (1.5,-5) {};
	\node[vertex] (b6) at (1.5,-6) {};
	\node[vertex] (b7) at (1.5,-7) {};
	\node[vertex, label=left:$10$] (c-2) at (3,2) {};
	\node[vertex] (c-1) at (3,1) {};
	\node[vertex] (c0) at (3,0) {};
	\node[vertex] (c1) at (3,-1) {};
	\node[vertex, label=left:$4$] (c2) at (3,-2) {};
	\node[vertex, label=left:$5$] (c3) at (3,-3) {};
	\node[vertex] (c4) at (3,-4) {};
	\node[vertex] (c5) at (3,-5) {};
	\node[vertex] (c6) at (3,-6) {};
	\node[vertex] (c7) at (3,-7) {};
	\draw[edge] (z1) -- (a-2) -- (a-1) -- (a0) -- (a1) -- (a2) -- (a3) -- (a4) -- (a5) -- (a6) -- (a7);
	\draw[edge] (z2) -- (b-2) -- (b-1) -- (b0) -- (b1) -- (b2) -- (b3) -- (b4) -- (b5) -- (b6) -- (b7);
	\draw[edge] (z2) -- (c-2) -- (c-1) -- (c0) -- (c1) -- (c2) -- (c3) -- (c4) -- (c5) -- (c6) -- (c7);
	\draw[edge] (z1) -- (z2);
	\end{tikzpicture}
	\qquad\qquad
	\begin{tikzpicture}
	\node[vertex, label=left:$1$] (z1) at (.75,2) {};
	\node[vertex, label=right:$1$] (z2) at (2.25,2) {};
	\node[vertex] (a-1) at (0,1) {};
	\node[vertex] (a0) at (0,0) {};
	\node[vertex] (a1) at (0,-1) {};
	\node[vertex, label=left:$5$, label=right:$a_1$] (a2) at (0,-2) {};
	\node[vertex, label=left:$4$] (a3) at (0,-3) {};
	\node[vertex] (a4) at (0,-4) {};
	\node[vertex] (a5) at (0,-5) {};
	\node[vertex] (a6) at (0,-6) {};
	\node[vertex] (a7) at (0,-7) {};
	\node[vertex] (b-1) at (1.5,1) {};
	\node[vertex] (b0) at (01.5,0) {};
	\node[vertex] (b1) at (1.5,-1) {};
	\node[vertex, label=left:$4$, label=right:$b_1$] (b2) at (1.5,-2) {};
	\node[vertex, label=left:$5$] (b3) at (1.5,-3) {};
	\node[vertex] (b4) at (1.5,-4) {};
	\node[vertex] (b5) at (1.5,-5) {};
	\node[vertex] (b6) at (1.5,-6) {};
	\node[vertex] (b7) at (1.5,-7) {};
	\node[vertex] (c-1) at (3,1) {};
	\node[vertex] (c0) at (3,0) {};
	\node[vertex] (c1) at (3,-1) {};
	\node[vertex, label=left:$4$, label=right:$b_2$] (c2) at (3,-2) {};
	\node[vertex, label=left:$5$] (c3) at (3,-3) {};
	\node[vertex] (c4) at (3,-4) {};
	\node[vertex] (c5) at (3,-5) {};
	\node[vertex] (c6) at (3,-6) {};
	\node[vertex] (c7) at (3,-7) {};
	\draw[edge] (z1) -- (a-1) -- (a0) -- (a1) -- (a2) -- (a3) -- (a4) -- (a5) -- (a6) -- (a7);
	\draw[edge] (z2) -- (b-1) -- (b0) -- (b1) -- (b2) -- (b3) -- (b4) -- (b5) -- (b6) -- (b7);
	\draw[edge] (z2) -- (c-1) -- (c0) -- (c1) -- (c2) -- (c3) -- (c4) -- (c5) -- (c6) -- (c7);
	\draw[edge] (z1) -- (z2);
	\end{tikzpicture}
	\caption{Trees $T$, $k=5$. The numbers indicate the colouring of $T^k$. See Lemma \ref{lem:higher-center-two-vertices}. } \label{fig:higher-center-two-vertices}
\end{figure}
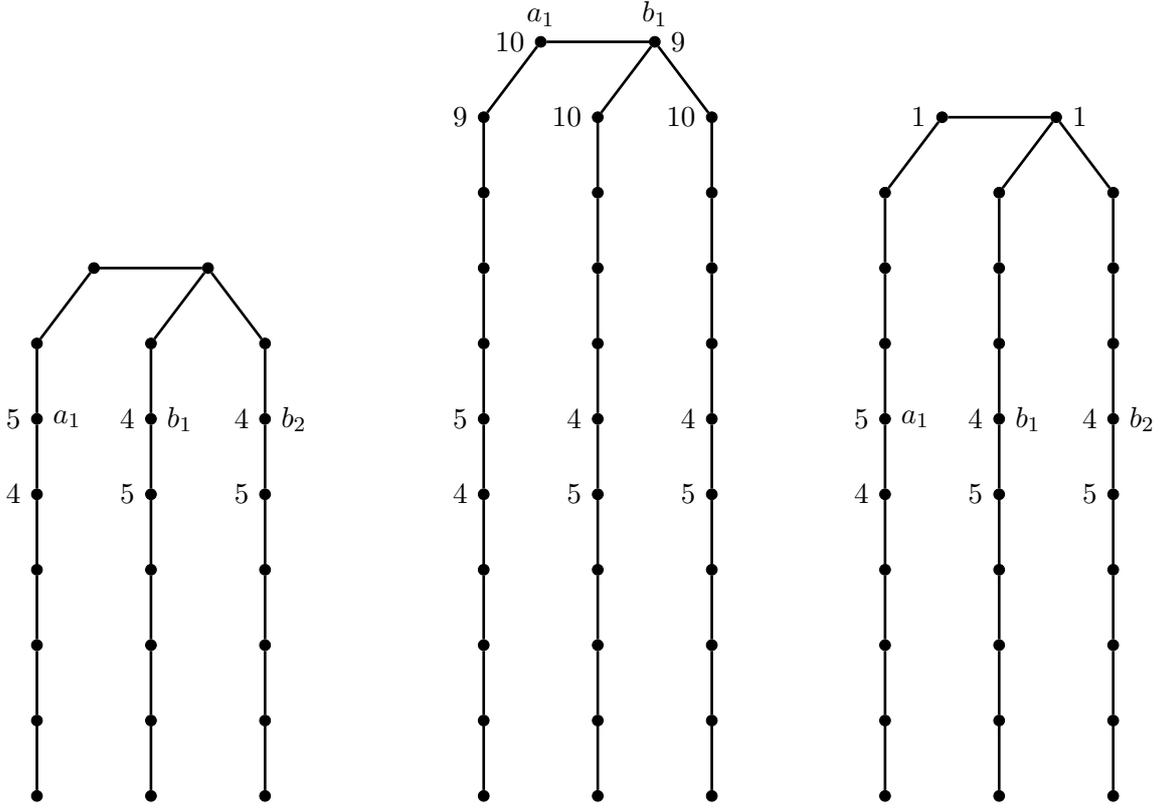

\begin{theorem}[=Theorem \ref{thm:powers-of-trees-Intro}] \label{thm:powers-of-trees}
	If $G$ is a power of a tree, then $\rvc(G)\in\{\diam(G)-1,\diam(G)\}$, and the corresponding optimal rainbow vertex coloring can be found in time that is linear in the size of $G$. 
\end{theorem}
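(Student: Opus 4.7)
The plan is to case-split on the power $k$ and on the structure of the underlying tree $T$, then invoke the lemmas already established in this section to determine $\rvc(T^k)$ in each case. Given $G = T^k$, I would first compute the diameter of $T$ and locate the center (one or two vertices); if $\diam(T) \leq 2k$, then $\diam(T^k) \leq 2$ and Observation \ref{obs:basic-diam-color} gives $\rvc(G)=1$ by coloring all vertices with a single color. Otherwise, I branch as follows: if the center of $T$ consists of two vertices, then Lemma \ref{lem:square-center-two-vertices} (for $k=2$) and Lemma \ref{lem:higher-center-two-vertices} (for $k\geq 3$) give $\rvc(T^k) = \diam(T^k)-1$. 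If the center is a single vertex $z$, count the number of branches from $z$ of maximum length: if there are at most two such maximum branches, Lemmas \ref{lem:square-two-branches} and \ref{lem:higher-two-branches} yield $\rvc(T^k) = \diam(T^k)-1$; if there are at least three and additionally $\diam(T) \equiv 0 \pmod k$, then Lemmas \ref{lem:square-lowerbound}, \ref{lem:square-upperbound}, \ref{lem:higher-lowerbound}, and \ref{lem:higher-upperbound} together give $\rvc(T^k) = \diam(T^k)$; if there are at least three but $\diam(T) \not\equiv 0 \pmod k$ (only possible for $k \geq 3$), Lemma \ref{lem:higher-three-branches} applies and gives $\rvc(T^k) = \diam(T^k)-1$. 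These cases exhaust all possibilities, so $\rvc(G) \in \{\diam(G)-1, \diam(G)\}$ as claimed.

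For the algorithmic part, I would argue that each of the above cases admits a linear-time procedure. Representing $G$ as $T^k$ avoids constructing $G$ explicitly when $G$ is dense; the input representation must be taken into account when stating ``linear in the size of $G$'', but note that $T$ has at most $|V(G)|$ vertices, so $T$ itself can be processed in time at most $O(|V(G)|)$, and the computation of the center, diameter of $T$, branch lengths, and whether $\diam(T) \equiv 0 \pmod k$ are all feasible in time linear in $|V(T)|$ (hence in the size of $G$) via two rounds of BFS from an arbitrary vertex. Determining the layer $l(v)$ of every vertex requires only a single BFS from the center, and assigning the colors prescribed by Lemmas \ref{lem:square-upperbound}--\ref{lem:higher-center-two-vertices} is a direct sweep over the vertices once the layer and branch information is available. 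All colorings defined in these lemmas are local functions of $l(v)$ and of which branch of the center contains $v$, so they can be output in time linear in $|V(G)|$.

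The main obstacle is simply bookkeeping: one must verify carefully that the partition into cases is exhaustive and that each lemma's hypotheses are genuinely triggered by the algorithm's decisions; in particular, the boundary case $\diam(T) = 2k+1$ vs.\ $\diam(T) = 3k$ and the parity of $\diam(T^k)$ interact with whether $\diam(T) \equiv 0 \pmod k$ in nontrivial ways, and I would need to check the small-diameter regime (where Observation \ref{obs:basic-diam-color} takes over) smoothly matches the asymptotic cases. Beyond this, there is no further mathematical content: every required inequality and every constructive coloring is already supplied by the earlier lemmas, and the theorem is obtained by gluing them together.
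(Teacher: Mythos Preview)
Your approach is essentially the paper's own: case-split on the structure of $T$ and invoke the lemmas of this section, then observe that each step is linear-time. Two small points to tighten. First, your routing of the case ``single center, exactly two maximum-length branches'' to Lemma~\ref{lem:higher-two-branches} only works when $\diam(T)\equiv 0\pmod k$; when $\diam(T)\not\equiv 0\pmod k$ that lemma does not apply, but Lemma~\ref{lem:higher-three-branches} does, since its hypotheses impose no constraint on the number of longest branches (despite its placement in the text). The correct split for a single center is therefore: if $\diam(T)\not\equiv 0\pmod k$ use Lemma~\ref{lem:higher-three-branches} regardless of branch count; if $\diam(T)\equiv 0\pmod k$, branch on two versus at least three longest branches. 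Second, the paper handles the situation where only $G$ is given (not $T$ and $k$) by citing a linear-time algorithm that recovers $T$ from $G$; you should include this step rather than assume $T$ is part of the input.
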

\begin{proof}
	In Lemmas \ref{lem:square-upperbound}, \ref{lem:square-two-branches}, \ref{lem:square-center-two-vertices}, \ref{lem:higher-upperbound}, \ref{lem:higher-three-branches}, \ref{lem:higher-two-branches}, \ref{lem:higher-center-two-vertices}, it is shown that $\rvc(G)\in\{\diam(G)-1,\diam(G)\}$. 
	
	Suppose that $G = T^k$. If the tree $T$ is unknown, it can be computed in linear time \cite{Chang2015}. 
	First, we compute the center of $T$, and then we distinguish cases as in Lemmas \ref{lem:square-upperbound}, \ref{lem:square-two-branches}, \ref{lem:square-center-two-vertices}, \ref{lem:higher-upperbound}, \ref{lem:higher-three-branches}, \ref{lem:higher-two-branches}, \ref{lem:higher-center-two-vertices}. This costs linear time. In each of those lemmas, an optimal coloring is given that can be computed in linear time. 
\end{proof}

	\section{Conclusion and open problems}

In this work, we provided polynomial-time algorithms to rainbow vertex color permutation graphs, powers of trees, and split strongly chordal graphs. The algorithm provided for split strongly chordal graphs also works for the strong variant of the problem, where the rainbow paths connecting pairs of vertices are required to be also shortest paths. 

An interesting question to be answered towards solving Conjecture~\ref{conj:diametral} is whether \krvc~can be solved in polynomial time on AT-free graphs, i.e.\ graphs that do not contain an asteroidal triple. Conjecture~\ref{conj:diametral} has been proved true for interval graphs~\cite{MFCS2018} and, in this work, for permutation graphs, both of which are important subclasses of AT-free graphs.

Another direction of research within graph classes lies in determining the complexity of \krvc~and \ksrvc~on strongly chordal graphs. Note that both powers of trees and split strongly chordal graphs form subclasses of strongly chordal graphs for which RVC is polynomial-time solvable, as we show in this work. Finally, note that every strongly chordal graph is also a chordal graph, and the problems are known to be \NP-hard when restricted to chordal graphs.

	\bibliographystyle{siam}
	\bibliography{bib}

\appendix	
	\section{Shortest paths in permutation graphs} \label{app:shortest-paths}
	
	As mentioned in Section \ref{sec:permutation}, Lemma \ref{lem:X-Y-shortest} can also be found in \cite[Lemma 5]{MondalPP2003}, but for completeness we write the proof in this appendix. 
	We first prove some basic lemmas about paths in the intersection model. 
	
	\begin{lemma} \label{lem:intersect-intermediate-vertex}
		If $u \prec z \prec v$, then for every $u,v$-path $z_1(=u), z_2 \ldots, z_{a-1}, z_a (=v)$, there is a vertex $z_i$ that intersects $z$ (or equals $z$). 
	\end{lemma}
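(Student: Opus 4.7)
The plan is to argue by contradiction. Suppose for the sake of contradiction that no $z_i$ intersects $z$ and no $z_i$ equals $z$. Since the intersection model has all endpoints at distinct positions on $\eL_1$ and $\eL_2$, for each $z_i$ with $z_i \neq z$ that does not intersect $z$ we must have either $z_i \prec z$ or $z_i \succ z$ (the only two ways two non-intersecting segments can be arranged between two parallel lines).

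Next I would track along the path from $u$ to $v$. By hypothesis $z_1 = u \prec z$ and $z_a = v \succ z$, so the sequence starts on the ``left of $z$'' side and ends on the ``right of $z$'' side. Consequently, there exists some index $i$ with $1 \leq i < a$ such that $z_i \prec z$ and $z_{i+1} \succ z$.

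Now I would derive the contradiction directly from the definitions: $z_i \prec z$ gives $t(z_i) < t(z)$ and $b(z_i) < b(z)$, while $z_{i+1} \succ z$ gives $t(z_{i+1}) > t(z)$ and $b(z_{i+1}) > b(z)$. Chaining these inequalities yields $t(z_i) < t(z_{i+1})$ and $b(z_i) < b(z_{i+1})$, i.e., $z_i \prec z_{i+1}$. But then $z_i$ and $z_{i+1}$ do not intersect, contradicting the fact that they are consecutive on a path and hence adjacent in $G$.

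The argument is short and the only real subtlety is the initial classification step: one must justify that every vertex not intersecting $z$ is comparable to $z$ under $\prec$, which relies on the standard general-position assumption on the intersection model. Once that is in place, the transition index on the path together with the transitivity of the coordinate inequalities closes the proof cleanly.
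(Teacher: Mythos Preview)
Your proof is correct and follows essentially the same approach as the paper: both argue by contradiction, using that each $z_i$ not intersecting $z$ must satisfy $z_i \prec z$ or $z_i \succ z$, and that the path cannot jump from the left side of $z$ to the right side along a single edge. The only cosmetic difference is that the paper phrases this as an induction showing $z_i \prec z$ for all $i$ (reaching a contradiction at $z_a = v$), whereas you locate a transition index $i$ and derive $z_i \prec z_{i+1}$ directly; the underlying inequality chase is identical.
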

	\begin{proof}
		Suppose that none of the vertices $z_i$ intersects $z$. With induction we show that $z_i \prec z$ for all $1 \leq i \leq a$. By assumption $u = z_1 \prec z$. Suppose that $z_{i-1} \prec z$. Since $z_i \sim z_{i-1}$, we have that exactly one of $t(z_i) < t(z_{i-1}) < t(z)$ or $b(z_i) < b(z_{i-1}) < b(z)$. Because $z_i$ does not intersect $z$, it follows that $z_i \prec z$. This yields a contradiction with $z \prec v$. 
	\end{proof}
	
	\begin{lemma} \label{lem:shortest-path-intermediate-vertices}
		If $u \prec v$ and $z$ is a vertex in an induced $u,v$-path, then $z$ is not left of $u$ and not right of $v$. 
	\end{lemma}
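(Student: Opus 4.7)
The plan is to prove both parts by contradiction, invoking Lemma~\ref{lem:intersect-intermediate-vertex} on a suitable sub-path and then exploiting the fact that the overall $u,v$-path is induced. By symmetry it suffices to rule out $z \prec u$ (the case $z \succ v$ is handled identically, with the roles of $u$ and $v$ swapped and the sub-path reversed).

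Write the induced $u,v$-path as $u = w_0, w_1, \ldots, w_k = v$, and suppose, for contradiction, that some $z = w_i$ with $1 \leq i \leq k-1$ satisfies $z \prec u$. Since $\prec$ is transitive and $u \prec v$, I get $z \prec u \prec v$. Now apply Lemma~\ref{lem:intersect-intermediate-vertex} to the sub-path $w_i, w_{i+1}, \ldots, w_k$ from $z$ to $v$, with $u$ playing the role of the ``intermediate'' vertex: some $w_j$ with $i \leq j \leq k$ either equals $u$ or intersects $u$. Equality is impossible, because $u = w_0$ appears only at index $0$ while $j \geq i \geq 1$. So $w_j$ is adjacent to $w_0$ in $G$.

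Because the path $w_0, \ldots, w_k$ is induced, the only $w_j$ with $j \geq 1$ that can be adjacent to $w_0$ is $w_1$. Hence $j = 1$, which forces $i = 1$ as well, i.e.\ $z = w_1$. But then $z$ is adjacent to $u$ in $G$, contradicting $z \prec u$. This rules out $z \prec u$.

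The symmetric argument rules out $z \succ v$: supposing $z = w_i$ with $z \succ v$, I have $u \prec v \prec z$, and Lemma~\ref{lem:intersect-intermediate-vertex} applied to the sub-path $w_0, \ldots, w_i$ (with $v$ as the intermediate vertex) produces some $w_j$, $j \leq i$, intersecting $w_k = v$; induced-ness forces $j = k-1$, hence $i = k-1$ and $z = w_{k-1}$ is adjacent to $v$, contradicting $z \succ v$. The only nontrivial step is setting up Lemma~\ref{lem:intersect-intermediate-vertex} on the correct sub-path so that the extra vertex it produces conflicts with chordlessness of the path; once that configuration is chosen, the rest is immediate.
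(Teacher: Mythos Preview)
Your argument is correct and follows the same approach as the paper: assume $z \prec u$, apply Lemma~\ref{lem:intersect-intermediate-vertex} to the sub-path from $z$ to $v$ with $u$ as the intermediate vertex, and derive a contradiction from the path being induced. You are actually more careful than the paper's proof, which simply asserts ``this yields a contradiction'' without treating the edge case $i=1$ (where the intersecting vertex could legitimately be $w_1$); you correctly close that gap by observing that then $z=w_1$ would be adjacent to $u$, contradicting $z \prec u$.
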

	\begin{proof}
		Suppose that $z \prec u$. Then by Lemma \ref{lem:intersect-intermediate-vertex}, it follows that there is a vertex in the $z,v$-path that intersects $u$. This yields a contradiction with the fact that $u, \ldots, z, \ldots, v$ is an induced $u,v$-path.
		We conclude that $z$ is not left of $u$. Analogously, we see that $z$ is not right of $v$. 
	\end{proof}
	
	\begin{lemma} \label{lem:shortest-path-not-left}
		If $u \prec v$ and $z_1 (=u), z_2, z_3, \ldots, z_a (=v)$ is a shortest $u,v$-path, then for all $2< i < a -1$, it holds that $u \prec z_i \prec v$. 
	\end{lemma}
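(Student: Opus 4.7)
The plan is to chain together the previous lemma with two short shortcut arguments. Since $z_1, z_2, \ldots, z_a$ is a shortest $u,v$-path, it is an induced path (any chord $z_iz_j$ with $j \geq i+2$ would give a strictly shorter $u,v$-walk, contradicting the assumption), so Lemma \ref{lem:shortest-path-intermediate-vertices} applies and yields, for each internal vertex $z_i$, that $z_i$ is not left of $u$ and not right of $v$.

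Next, for indices $3 \leq i \leq a-2$, I would rule out adjacency with the endpoints. If $z_i \sim u$ for some $i \geq 3$, then $u, z_i, z_{i+1}, \ldots, z_a$ is a $u,v$-walk of length $a-i < a-1$, contradicting the fact that $z_1, \ldots, z_a$ is a shortest $u,v$-path. Symmetrically, if $z_i \sim v$ for some $i \leq a-2$, then $u = z_1, z_2, \ldots, z_i, v$ is too short a walk.

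Finally, I would combine these facts using the trichotomy available for any pair of distinct vertices $x, y$ in the intersection model: either $x \sim y$, or $x \prec y$, or $x \succ y$. Since shortest paths have no repeated vertices, $z_i \neq u$ and $z_i \neq v$. Having already shown that $z_i$ is not left of $u$ and not adjacent to $u$, the trichotomy gives $u \prec z_i$; similarly, $z_i$ is not right of $v$ and not adjacent to $v$, so $z_i \prec v$.

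There is no real obstacle in this proof; the only thing to be slightly careful about is that the hypotheses of Lemma \ref{lem:shortest-path-intermediate-vertices} require an \emph{induced} path, which is why I first invoke the standard shortcut-argument to conclude that our shortest path is induced. Everything else is a direct application of the trichotomy together with two one-line shortcut observations.
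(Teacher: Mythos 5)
Your proof is correct and follows essentially the same route as the paper: apply Lemma~\ref{lem:shortest-path-intermediate-vertices} to rule out $z_i \prec u$ and $z_i \succ v$, use the shortcut argument to rule out adjacency of internal vertices to the endpoints, and conclude via the trichotomy; you merely make explicit two steps the paper leaves implicit (that a shortest path is induced, and the shortcut computation). The only blemish is a harmless miscount --- the walk $u, z_i, \ldots, z_a$ has length $a-i+1$, not $a-i$ --- which still satisfies $a-i+1 < a-1$ for $i \geq 3$, so the argument stands.
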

	\begin{proof}
		By Lemma \ref{lem:shortest-path-intermediate-vertices}, we know that $z_i$ is not left of $u$ for all $1 \leq i \leq a$. Since $z_1, z_2, z_3, \ldots, z_a$ is a shortest path, $z_i$ does not intersect $u$, for $2 < i$. Hence $u \prec z_i$ for all $i > 2$. Analogously, it holds that $z_i \prec v$ for $i< a-1$.  
	\end{proof}
		
	\begin{lemma} \label{lem:shortest-path-up-down}
		If $z_1, z_2, z_3, \ldots, z_a$ is a shortest $z_1,z_a$-path, then it either satisfies Equations \eqref{eq:x-even} and \eqref{eq:x-odd}, or Equations \eqref{eq:y-even} and \eqref{eq:y-odd}.
	\end{lemma}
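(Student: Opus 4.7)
The plan is to show that each consecutive pair $(z_{i-1}, z_i)$ in the path falls into exactly one of two ``types'' determined by which endpoint lies further right, and then show that along a shortest path these types must strictly alternate. Once alternation is established, the entire path satisfies either the X-pattern (Equations~\eqref{eq:x-even},\eqref{eq:x-odd}) or the Y-pattern (Equations~\eqref{eq:y-even},\eqref{eq:y-odd}), with the choice pinned down by the type of the very first edge.

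First I would record the basic dichotomy: since $z_{i-1} \sim z_i$ in $G$, and $z_{i-1} \neq z_i$, exactly one of the following holds: either (A) $t(z_i) > t(z_{i-1})$ and $b(z_i) < b(z_{i-1})$, or (B) $t(z_i) < t(z_{i-1})$ and $b(z_i) > b(z_{i-1})$. Call the step from $z_{i-1}$ to $z_i$ ``type A'' or ``type B'' accordingly. Note that the two cases where one of $t(z_i), b(z_i)$ equals the corresponding coordinate of $z_{i-1}$ are ruled out because segments have distinct endpoints in the intersection model, and the two ``$\prec$/$\succ$'' cases are excluded because $z_{i-1}$ and $z_i$ are adjacent.

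The crux is the following alternation claim: along a shortest path, two consecutive edges cannot share the same type. Suppose, for contradiction, that both $(z_{i-1}, z_i)$ and $(z_i, z_{i+1})$ are type~A. Then
\[
t(z_{i+1}) > t(z_i) > t(z_{i-1}) \quad\text{and}\quad b(z_{i+1}) < b(z_i) < b(z_{i-1}),
\]
which forces $z_{i+1}$ to intersect $z_{i-1}$, i.e., $z_{i-1} \sim z_{i+1}$ in~$G$. But then replacing the subpath $z_{i-1}, z_i, z_{i+1}$ by the edge $z_{i-1}z_{i+1}$ yields a strictly shorter $z_1, z_a$-walk, contradicting that $z_1, \ldots, z_a$ is a shortest path. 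The case of two consecutive type-B edges is symmetric, with the roles of $t$ and $b$ swapped.

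Given the alternation, the rest is bookkeeping: if the first step (from $z_1$ to $z_2$) is type~A, then every step at even index~$i$ is type~A and every step at odd index~$i \geq 3$ is type~B, which is exactly Equations~\eqref{eq:x-even} and~\eqref{eq:x-odd}; if the first step is type~B, we land in Equations~\eqref{eq:y-even} and~\eqref{eq:y-odd}. The main obstacle is really only the alternation claim, and even there the ``shortcut'' argument is short; everything else is a clean case split on the type of $(z_1, z_2)$.
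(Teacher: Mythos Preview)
Your proof is correct and follows essentially the same approach as the paper: the paper also splits on the type of the first step and then argues (by induction on $k$) that two consecutive steps of the same type would make $z_k \sim z_{k-2}$, contradicting shortestness. Your phrasing in terms of an explicit ``alternation of types'' is arguably cleaner, but the key shortcut argument is identical.
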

	\begin{proof}
		We know that $z_2$ intersects $z_1$, so we have either $t(z_2) > t(z_{1})$ and $b(z_2) < b(z_{1})$, or $t(z_2) < t(z_{1})$ and $b(z_2) > b(z_{1})$. 
		
		Suppose that $t(z_2) > t(z_{1})$ and $b(z_2) < b(z_{1})$. Then we prove by induction that we are in the first case. Suppose that Equations~\ref{eq:x-even} and~\ref{eq:x-odd} hold for all $i<k$. Then consider $z_k$. Since $z_k$ is adjacent to $z_{k-1}$, we either have $t(z_k) > t(z_{k-1})$ and $b(z_k) < b(z_{k-1})$, or $t(z_k) < t(z_{k-1})$ and $b(z_k) > b(z_{k-1})$. 
		Suppose that $k$ is even (See Figure \ref{fig:shortest-path-up-down}). If $t(z_k) < t(z_{k-1})$ and $b(z_k) > b(z_{k-1})$, then, by the induction hypothesis, we know that $t(z_k) < t(z_{k-1})< t(z_{k-2})$ and $b(z_k) > b(z_{k-1}) > b(z_{k-2})$. Thus $z_k \sim z_{k-2}$. This yields a contradiction with the assumption that $z_1, z_2, z_3, \ldots, z_a$ is a shortest $z_1,z_a$-path. Hence, $t(z_k) > t(z_{k-1})$ and $b(z_k) < b(z_{k-1})$. 
		The case that $k$ is odd is analogous. 	
		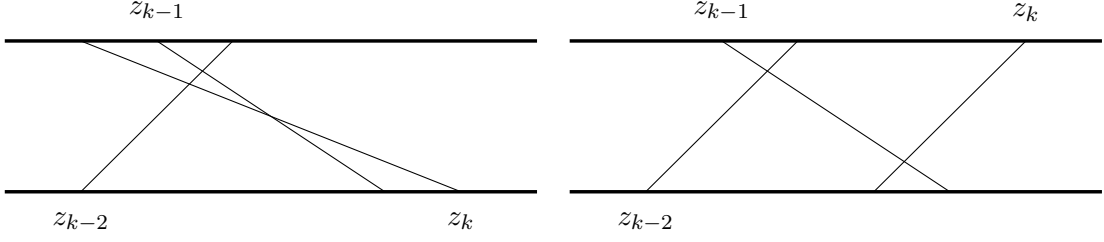
\begin{figure}
	\centering
	\begin{tikzpicture}
	\draw[line] (0,0) -- (7,0);
	\draw[line] (0,2) -- (7,2);
	
	\draw[segment] (1,0) -- (3,2);
	\draw[segment] (2,2) -- (5,0);
	\draw[segment] (1,2) -- (6,0);
	
	\node[label = below:$z_{k-2}$] (k2) at (1,0) {};
	\node[label = $z_{k-1}$] (k) at (2,2) {};
	\node[label = below:$z_{k}$] (k1) at (6,0) {};
	\end{tikzpicture}\quad
	\begin{tikzpicture}
		\draw[line] (0,0) -- (7,0);
		\draw[line] (0,2) -- (7,2);
		
		\draw[segment] (1,0) -- (3,2);
		\draw[segment] (2,2) -- (5,0);
		\draw[segment] (4,0) -- (6,2);
		
		\node[label = below:$z_{k-2}$] (k2) at (1,0) {};
		\node[label = $z_{k-1}$] (k) at (2,2) {};
		\node[label = $z_{k}$] (k1) at (6,2) {};
	\end{tikzpicture}
	\caption{See Lemma \ref{lem:shortest-path-up-down}. Since $z_k \sim z_{k-1}$, there are two cases: $t(z_k) < t(z_{k-1})$ or $t(z_k) > t(z_{k-1})$.} \label{fig:shortest-path-up-down}
\end{figure}
		
		Analogously, if $t(z_2) < t(z_{1})$ and $b(z_2) > b(z_{1})$, then we are in the second case. 
	\end{proof}
	
	\begin{lemma} \label{lem:X-short}
		Let $Z = z_1 = u, z_2, \ldots, z_a = v$ be a $u,v$-path and $t(z_2) > t(u)$. Then $X_{u,v}$ exists and the length of $Z$ is at least the length of $X_{u,v}$.
	\end{lemma}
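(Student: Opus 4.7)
The plan is to handle existence quickly and then prove the length bound by an induction that tracks how far $X_{u,v}$ has ``advanced'' relative to $Z$ in the intersection model.

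Existence of $X_{u,v}$ is essentially free from the hypothesis. Since $z_2$ is adjacent to $u = z_1$ along $Z$ and satisfies $t(z_2) > t(u)$, the vertex $z_2$ lies in the candidate set used to define $x_2$; hence $x_2$ exists and the greedy choice forces $t(x_2) \geq t(z_2)$. To conclude that $X_{u,v}$ is actually a full $u,v$-path (rather than stalling somewhere), one uses the dominance invariant below: as long as the $X$-construction has not yet terminated at $v$, the corresponding candidate set is shown to be nonempty (indeed, witnessed by a vertex extracted from $Z$).

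For the length bound, the plan is to induct on $i$ while maintaining an invariant comparing $x_i$ with the prefix $z_1, \ldots, z_i$ of $Z$. Roughly, for even $i$ the invariant says $t(x_i) \geq t(z_i)$ together with a control bound on $b(x_i)$, and symmetrically for odd $i$ it says $b(x_i) \geq b(z_i)$ together with a control bound on $t(x_i)$. The base case $i=2$ follows from the greedy definition of $x_2$. For the inductive step one wants to use $z_{i+1}$ as a candidate in the greedy choice defining $x_{i+1}$; when $z_{i+1}$ happens to intersect $x_i$, this is immediate and gives $t(x_{i+1}) \geq t(z_{i+1})$ (or $b(x_{i+1}) \geq b(z_{i+1})$, depending on parity). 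When $z_{i+1}$ does \emph{not} intersect $x_i$, the accompanying bound in the invariant, combined with the fact that $z_{i+1}$ intersects $z_i$ and that $x_i$ already dominates $z_i$ on one axis, forces the existence of some other neighbor of $x_i$ that witnesses the inductive bound for $x_{i+1}$. Once the invariant is established, specializing to $i = a$ (with $z_a = v$) shows that $X_{u,v}$ must have reached $v$ within at most $a - 1$ edges, giving $|X_{u,v}| \leq |Z|$.

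The main obstacle is formulating the invariant correctly. A one-dimensional statement (only about $t$ for even $i$, only about $b$ for odd $i$) is not enough, because $Z$ may oscillate and contain vertices that $x_i$ does not intersect, as is visible already in small examples where $x_2$ lies strictly to the right of $z_2$ in the $\prec$ ordering. The control bound on the ``other'' endpoint of $x_i$ is precisely what lets the inductive step go through in those cases, by guaranteeing that the greedy candidate set for $x_{i+1}$ is nonempty and contains a vertex beating $z_{i+1}$ on the axis of interest. The parity bookkeeping, distinguishing even and odd $i$ via Equations~\eqref{eq:x-even} and~\eqref{eq:x-odd}, is routine but must be done carefully; the cleanest way is to treat even and odd $i$ simultaneously inside a single induction, with two symmetric cases at each step.
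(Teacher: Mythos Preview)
Your approach is essentially the paper's: an inductive dominance invariant comparing $x_i$ against $z_i$, with a case split on whether $z_{i+1}$ intersects $x_i$. Two points need tightening, and the first is a genuine gap.

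\textbf{Missing reduction to shortest $Z$.} Your induction implicitly relies on $Z$ having the alternating structure of Equations~\eqref{eq:x-even}--\eqref{eq:x-odd}: you want, say for even $i+1$, that $b(z_{i+1}) < b(z_i)$ so as to chain it with the inductive bound and conclude $b(z_{i+1}) \leq b(x_i)$. But an arbitrary path $Z$ need not alternate; $z_{i+1}$ can cross $z_i$ in either direction. The paper handles this by first replacing $Z$ with a \emph{shortest} $u,v$-path having $t(z_2)>t(u)$ (harmless, since if any such $Z$ beats $X_{u,v}$ in length, the shortest one does too), and then invoking Lemma~\ref{lem:shortest-path-up-down} to obtain the alternation for $Z$. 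Without this step your inductive argument does not go through as written: in the case $z_{i+1}\nsim x_i$, you need to know on which side $z_{i+1}$ lies relative to $x_i$, and that inference uses the alternation of $Z$.

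\textbf{The precise invariant.} You describe the auxiliary bound as being on $b(x_i)$ (for even $i$), but the invariant that actually works bounds $b(z_i)$ against $b(x_{i-1})$, not against $b(x_i)$. Concretely:
\[
\begin{cases}
t(z_i)\leq t(x_i)\ \text{and}\ b(z_i)\leq b(x_{i-1}) & i\text{ even},\\
b(z_i)\leq b(x_i)\ \text{and}\ t(z_i)\leq t(x_{i-1}) & i\text{ odd}.
\end{cases}
\]
With this in hand, the case $z_k\nsim x_{k-1}$ (even $k$) is handled directly: you already have $b(z_k)\leq b(x_{k-1})$, so non-intersection forces $t(z_k)\leq t(x_{k-1}) < t(x_k)$ by Equation~\eqref{eq:x-even}. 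There is no need to exhibit ``some other neighbor of $x_i$'' as a witness; the greedy $x_{k}$ itself already dominates. The conclusion is then by contradiction: if $a<c$, then $x_{a-2},x_{a-1}\prec v$ by Lemma~\ref{lem:shortest-path-intermediate-vertices}, whereas the invariant at $i=a-1$ places $z_{a-1}$ to their left, contradicting $z_{a-1}\sim v$.
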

	\begin{proof}
		Since $t(z_2) > t(u)$ and $z_2 \sim u$, there exists a vertex $x\sim u$ with $t(x) > t(u)$, hence the path $X_{u,v}$ exists. 

		Let $c-1$ be the length of $X_{u,v}$. 
		Suppose that $Z$ is a shorter path than $X_{u,v}$, that is, $a < c$. 
		In fact, let $Z$ be a shortest $u,v$-path with $t(z_2) > t(u)$.
		
		We will prove by induction that for $1 < i \leq a$, it holds that 
		\begin{align}
		&t(z_i) \leq t(x_{i}) \text{ and } b(z_i) \leq b(x_{i-1}) \text{ if $i$ is even,} \label{eq:z-even}\\ 
		&t(z_i) \leq t(x_{i-1}) \text{ and } b(z_i) \leq b(x_{i}) \text{ if $i$ is odd.}\label{eq:z-odd}
		\end{align}
		Intuitively, this means that $z_i$ is not right of $x_i$. See Figure \ref{fig:X-short} for the possible location of $z_{i}$ compared to $x_i$. 
		\begin{figure}
	\centering
	\begin{tikzpicture}
	\draw[line] (0,0) -- (7,0);
	\draw[line] (0,2) -- (7,2);
	
	\draw[segment] (.3,0) -- (2.5,2);
	\draw[segment] (1,2) -- (4,0);
	\draw[segment] (2,0) -- (6,2);
	\draw[segment] (4.5,2) -- (6.5,0);
	
	\node[label = $x_{i-2}$] (k) at (2.5,2) {};
	\node[label = below:$x_{i-1}$] (k1) at (4,0) {};
	\node[label = $x_{i}$] (k) at (6,2) {};
	\node[label = below:$x_{i+1}$] (v) at (6.5,0) {};
	
	\draw[thickline] (0,2) -- (2.5,2);
	\draw[thickline] (0,0) -- (4,0);
	\end{tikzpicture}\quad
	\begin{tikzpicture}
	\draw[line] (0,0) -- (7,0);
	\draw[line] (0,2) -- (7,2);
	
	\draw[segment] (.3,0) -- (2.5,2);
	\draw[segment] (1,2) -- (4,0);
	\draw[segment] (2,0) -- (6,2);
	\draw[segment] (4.5,2) -- (6.5,0);
	
	\node[label = below:$x_{i-1}$] (k1) at (4,0) {};
	\node[label = $x_{i}$] (k) at (6,2) {};
	\node[label = below:$x_{i+1}$] (v) at (6.5,0) {};

	\draw[thickline] (0,2) -- (6,2);
	\draw[thickline] (0,0) -- (4,0);
	\draw[possiblesegment] (.5,2) -- (3.5,0);
	\node[label = $z_{i-1}$] (zi) at (.5,2) {};
	\draw[possiblesegment] (1.25,0) -- (3.5,2);
	\node[label = $z_{i}$] (zi) at (3.5,2) {};
	\end{tikzpicture}
	\caption{The thick lines indicate where the segment of $z_{i-1}$ and $z_i$ can possibly end according to Equations \eqref{eq:z-even} and \eqref{eq:z-odd}. The dashed segments are examples of $z_{i-1}$ and $z_i$. See Lemma \ref{lem:X-short}.} \label{fig:X-short}
\end{figure}
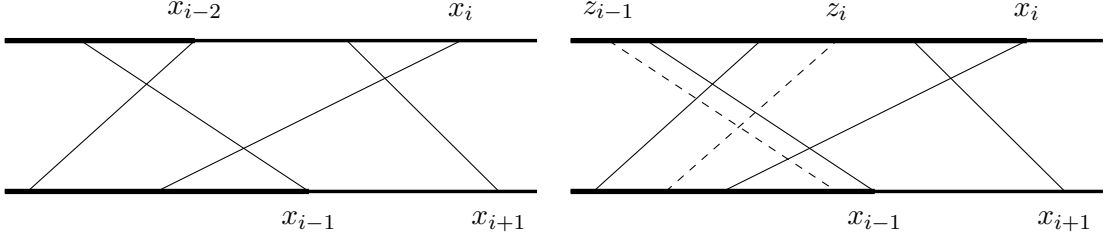
		
		We start with the base case $i=2$. We know that $t(z_2) \leq t(x_2)$ by the definition of $x_2$. And $b(z_2) < b(u) = b(x_1)$, since $z_2$ intersects $u$ and $t(z_2) > t(u)$. 
				
		Suppose that Equations \eqref{eq:z-even} and \eqref{eq:z-odd} hold for $i = k-1$, where $3 \leq k \leq a$. 
		Suppose that $k$ is even. 
		By Lemma \ref{lem:shortest-path-up-down}, we see that $b(z_k) < b(z_{k-1})$. By the induction hypothesis, we know that $b(z_{k-1}) \leq b(x_{k-1})$, thus $b(z_k) \leq b(x_{k-1})$. 
				
		Now we distinguish two cases: $z_k \sim x_{k-1}$ or $z_k \nsim x_{k-1}$. In the first case, by the definition of $x_k$, we have that $t(x_k) \geq t(z_k)$. In the second case, it holds that $t(z_k) \leq t(x_{k-1})$, since we already proved that $b(z_k) \leq b(x_{k-1})$. By Equation \eqref{eq:x-even}, we have that $t(x_k) > t(x_{k-1})$. We conclude that $t(z_k) \leq t(x_{k})$. 
				
		The case that $k$ is odd is analogous. 
				
		So, we conclude that 
		\begin{align*}
		&t(z_{a-1}) \leq t(x_{a-1}) \text{ and } b(z_{a-1}) \leq b(x_{a-2}) \text{ if $a-1$ is even,} \\ 
		&t(z_{a-1}) \leq t(x_{a-2}) \text{ and } b(z_{a-1}) \leq b(x_{a-1}) \text{ if $a - 1$ is odd.}
		\end{align*}
		Since we assumed that $a < c$ and $X_{u,v}$ is induced by definition, Lemma \ref{lem:shortest-path-intermediate-vertices} implies that $x_{a-1}$ and $x_{a-2}$ are both left of $v$. This yields a contradiction with the fact that $z_{a-1}$ intersects $v$. We conclude that $Z$ is at least as long as $X_{u,v}$. 
		\end{proof}
			
	\begin{lemma} \label{lem:Y-short}
		Let $Z = z_1 = u, z_2, \ldots, z_a = v$ be a $u,v$-path and $b(z_2) > b(u)$. Then the length of $Z$ is at least the length of $Y_{u,v}$.
	\end{lemma}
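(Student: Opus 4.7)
The plan is to mirror the proof of Lemma~\ref{lem:X-short} step by step, swapping the roles of top and bottom endpoints throughout. First I would observe that $b(z_2) > b(u)$ together with $z_2 \sim u$ guarantees the existence of the vertex $y_2$ in the definition of $Y_{u,v}$ (the given $z_2$ is a candidate), so $Y_{u,v}$ is well-defined. As in the companion lemma, I may assume $u \prec v$.

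Let $c-1$ denote the length of $Y_{u,v}$, suppose for contradiction that $a < c$, and choose $Z$ to be a shortest $u,v$-path satisfying $b(z_2) > b(u)$. The heart of the plan is to prove by induction on $i$ that for $1 < i \leq a$,
\begin{align*}
&b(z_i) \leq b(y_i) \text{ and } t(z_i) \leq t(y_{i-1}) &&\text{if $i$ is even,}\\
&b(z_i) \leq b(y_{i-1}) \text{ and } t(z_i) \leq t(y_i) &&\text{if $i$ is odd,}
\end{align*}
the top-bottom mirror of the invariant in Lemma~\ref{lem:X-short}, intuitively saying that $z_i$ is never ``further to the lower-right'' than $y_i$. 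The base case $i = 2$ is immediate: $b(z_2) \leq b(y_2)$ by the defining property of $y_2$, and since $z_2 \sim u$ with $b(z_2) > b(u)$, we get $t(z_2) < t(u) = t(y_1)$.

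For the inductive step at an even $k$, I would apply Lemma~\ref{lem:shortest-path-up-down} to $Z$: because $b(z_2) > b(u)$, the path $Z$ satisfies Equations~\eqref{eq:y-even} and~\eqref{eq:y-odd}, giving $t(z_k) < t(z_{k-1})$; combined with $t(z_{k-1}) \leq t(y_{k-1})$ from the induction hypothesis, this yields $t(z_k) \leq t(y_{k-1})$. To bound $b(z_k)$ I would split two cases: if $z_k \sim y_{k-1}$, then by the defining property of $y_k$ we have $b(y_k) \geq b(z_k)$; otherwise, the bound on $t(z_k)$ together with $z_k \nsim y_{k-1}$ forces $b(z_k) \leq b(y_{k-1})$, and then Equation~\eqref{eq:y-even} gives $b(z_k) \leq b(y_{k-1}) < b(y_k)$. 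The odd case is entirely symmetric.

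To close, I would apply the invariant at $i = a-1$. Since $a < c$, both $y_{a-1}$ and $y_{a-2}$ are internal vertices of $Y_{u,v}$ that are not neighbors of $v$, and since $Y_{u,v}$ is induced by definition, Lemma~\ref{lem:shortest-path-intermediate-vertices} places both strictly left of $v$, i.e.\ $t(y_j) < t(v)$ and $b(y_j) < b(v)$ for $j \in \{a-2,a-1\}$. The invariant then forces $t(z_{a-1}) < t(v)$ and $b(z_{a-1}) < b(v)$, so $z_{a-1} \prec v$, contradicting $z_{a-1} \sim v$. The main obstacle is purely bookkeeping: keeping the top/bottom swap consistent across the four sub-cases of the inductive step. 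Since the overall structure parallels Lemma~\ref{lem:X-short} exactly, no new techniques are required.
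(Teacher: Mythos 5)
Your proposal is correct and is exactly the argument the paper intends: the paper's proof of Lemma~\ref{lem:Y-short} consists of the single sentence that it is analogous to Lemma~\ref{lem:X-short}, and your write-up carries out that analogy faithfully, with the invariant, base case, inductive step, and final contradiction all being the correct top--bottom mirrors of the corresponding steps in the proof of Lemma~\ref{lem:X-short}.
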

	\begin{proof}
		This is analogous to the proof of Lemma \ref{lem:X-short}. 
	\end{proof}
		
	\begin{proof}[Proof of Lemma \ref{lem:X-Y-shortest}]
		For every $u,v$-path $Z = z_1, z_2, \ldots, z_a$, it holds that either $t(z_2) > t(u)$ or $b(z_2) > b(u)$. 
		By \ref{lem:X-short} and \ref{lem:Y-short} it follows that the length of $Z$ is at least the minimum of  the  length of $X_{u,v}$ and the length of $Y_{u,v}$. Hence, at least one of $X_{u,v}$ and $Y_{u,v}$ is a shortest $u,v$-path. 
	\end{proof}

\end{document}